\newtheorem{theorem}{Theorem}
\newtheorem{definition}{Definition}
\newtheorem{lemma}{Lemma}
\newtheorem{corollary}{Corollary}
\newtheorem{proposition}{Proposition}
\newtheorem{observation}{Observation}
\newtheorem{fact}{Fact}
\newcommand{\sre}{{X_S(n,c)}}
\newcommand{\bbm}{{X_B(n,\ell)}}
\newcommand{\lbm}{{X_L(n,m)}}
\newcommand{\erm}{\mathcal{ER}^m}
\newcommand{\er}{\mathcal{ER}}
\newcommand{\rev}{\textsc{Rev}}
\newcommand{\srev}{\textsc{SRev}}
\newcommand{\vcg}{\textsc{VCG}}
\begin{document}
\title{Optimal (and Benchmark-Optimal) Competition Complexity for Additive Buyers over Independent Items}
\author{Hedyeh Beyhaghi
\thanks{Cornell University, \tt{hedyeh@cs.cornell.edu}} \and S. Matthew Weinberg%
\thanks{Princeton University, \tt{smweinberg@princeton.edu}. Supported by NSF CCF-1717899.}} 

\maketitle

\begin{abstract}
The \emph{Competition Complexity} of an auction setting refers to the number of additional bidders necessary in order for the (deterministic, prior-independent, dominant strategy truthful) Vickrey-Clarke-Groves mechanism to achieve greater revenue than the (randomized, prior-dependent, Bayesian-truthful) optimal mechanism without the additional bidders. 

We prove that the competition complexity of $n$ bidders with additive valuations over $m$ independent items is at most $n(\ln(1+m/n)+2)$, and also at most $9\sqrt{nm}$. When $n \leq m$, the first bound is optimal up to constant factors, even when the items are i.i.d. and regular. When $n \geq m$, the second bound is optimal for the benchmark introduced in~\cite{EdenFFTW17a} up to constant factors, even when the items are i.i.d. and regular. We further show that, while the Eden et al. benchmark is not necessarily tight in the $n \geq m$ regime, the competition complexity of $n$ bidders with additive valuations over even $2$ i.i.d. regular items is indeed $\omega(1)$. 

Our main technical contribution is a reduction from analyzing the Eden et al. benchmark to proving stochastic dominance of certain random variables. 

\end{abstract}
\addtocounter{page}{-1}
\newpage


\section{Introduction}
In the past decade, the TCS community has made radical progress developing the theory of multi-dimensional mechanism design. In particular, it was previously well-understood the optimal multi-item auctions are prohibitively complex, even with just $m=2$ items, and even subject to fairly restricted instances~\cite{BriestCKW15, HartN13, HartR15, Thanassoulis04, Pavlov11, DaskalakisDT17}. Yet, starting from seminal work of Chawla, Hartline, and Kleinberg~\cite{ChawlaHK07}, a large body of work now proves that simple auctions are in fact approximately optimal in quite general settings~\cite{ChawlaHMS10, ChawlaMS15, HartN17,LiY13, BabaioffILW14, Yao15, RubinsteinW15, ChawlaM16, CaiZ17, EdenFFTW17b}, helping to explain the prevalence of simple auctions in practice. Still, it would be a reach to claim that this agenda is convincingly resolved.

In particular, the thought of settling for $50\%$ (or even $90\%$) of the optimal achievable revenue may be a non-starter for high-stakes auctions. Indeed, there are no hard constraints forcing the auctioneer to use a simple auction. Still, \emph{Prior-independent} auctions are desirable since they don't require the auctioneer to understand the population from which consumers are drawn. \emph{Deterministic} and \emph{Dominant Strategy Truthful} auctions are desirable because consumers' strategic behavior is easier to predict. \emph{Computationally tractable} auctions are desirable because they can be efficiently found. On the other hand, it is hard to imagine that auctioneers stand a hard line on simplicity if additional market research or outsourcing computation would increase revenues, even modestly. 

The resource augmentation paradigm takes a different view: spend effort recruiting additional bidders rather than carefully designing a superior auction. We are therefore interested in answering the following question: for a given auction setting, \emph{how many additional bidders} are necessary for a simple auction (with additional bidders) to guarantee greater expected revenue  than the optimal (without)? Eden et al. term the answer to this question the \emph{competition complexity}~\cite{EdenFFTW17a}. 

This question was first studied in seminal work by Bulow and Klemperer in the context of single-item auctions~\cite{BulowK96}. Remarkably, they show that just a single additional bidder suffices for the second-price auction to guarantee greater expected revenue than Myerson's optimal auction~\cite{Myerson81} (without the additional bidder), subject to a technical condition on the population called regularity. For multi-item auctions, similar results have even more bite, as the optimal multi-item auction is considerably more complex than Myerson's (which is deterministic, dominant strategy truthful, and computationally tractable, but not prior-independent). Our main result is optimal bounds on the competition complexity for the core setting of additive bidders with independent items. Specifically,\\

\noindent\textbf{Main Result:} The competition complexity of $n$ bidders with additive values over $m$ independent items is at most $n(2+ \ln(1+m/n))$, and also $9\sqrt{nm}$. When $n \leq m$ the first bound is tight (up to constant factors). When $n \geq m$, the second bound is tight (up to constant factors) for any argument that starts from the benchmark introduced in~\cite{EdenFFTW17a}.

\subsection{Brief Technical Overview}
Formally, we consider $n$ bidders drawn independently from a distribution $D$. We study the now-standard setting of additive bidders over $m$ independent items: each bidder's value $v_j$ for item $j$ is drawn independently from some single-variate distribution $D_j$, and her value for a set $S$ of items is $\sum_{j \in S} v_j$. The simple mechanism we study is to sell the items separately, either via the second-price auction in the case of regular distributions, or Myerson's optimal single-item auction in the general case.\footnote{For irregular distributions, it is known that no guarantees are possible with prior-independence, even for a single item. The example to have in mind is a distribution with a point mass at $p$ with probability $1/p$ and $0$ otherwise: as $p \rightarrow \infty$, any auction that achieves revenue close to optimum must sell the item to the bidder with value $p$ for price close to $p$ whenever there is exactly one. It is impossible to have a single auction that does this for all $p$.} Observe that, since the bidders are additive and values are independent, selling the items separately is really just $m$ separate single-item problems. We are interested in understanding the minimum $c(n,m)$ such that selling separately to $n+c(n,m)$ bidders drawn from $D$ yields greater expected revenue than the optimal mechanism with $n$ bidders drawn from $D$ for \emph{any} $D = \times_{j=1}^m D_j$. 

Our approach starts from the benchmark proposed in~\cite{EdenFFTW17a}. That is, Eden et al. propose an upper bound on the optimal achievable revenue with $n$ bidders drawn from $D$ via the duality framework of~\cite{CaiDW16}.\footnote{We note that this upper bound can also be derived without duality using techniques of~\cite{ChawlaMS15}.} We defer a definition of this benchmark to Section~\ref{sec:duality}: it defines a \emph{Virtual Value} $\Phi_{j}(\vec{v}_i)$ of a bidder with values $\vec{v}_i$ for item $j$, and upper bounds the optimal expected revenue with $\mathbb{E}[\sum_j \max_{i \in [n]} \{\Phi_j(\vec{v}_i)\}]$. We defer most details to the technical sections, but briefly note that at this point, our analysis diverges from prior work. Eden et al. use an elegant coupling argument to connect this benchmark to the expected revenue of selling separately with additional bidders~\cite{EdenFFTW17a}. The high-level distinction in our approach is a significantly more in-depth analysis of this benchmark. In particular, our analysis makes more extensive use of Myersonian virtual value theory (Sections~\ref{sec:single} and~\ref{sec:multi}), which reduces the problem to questions purely regarding whether various methods of drawing correlated values from $[0,1]$ stochastically dominate one another (Section~\ref{sec:dominance}). 

\subsection{Connection to Related Works}
The two works most directly related to ours are~\cite{EdenFFTW17a} and~\cite{FeldmanFR18}. The one-sentence distinction between our results and these is that we strictly improve their main results regarding selling separately to additive bidders with independent items, but do not address alternative settings. For example, this paper contains no results beyond additive bidders (considered in~\cite{EdenFFTW17a}), or results for mechanisms aside from selling separately (considered in~\cite{FeldmanFR18}).   \\

\noindent\textbf{``Little $n$ Regime'':} For $n$ additive bidders with $m = \Omega(n)$ independent items, Eden et al.~\cite{EdenFFTW17a} prove a competition complexity bound of $n+2(m-1)$. Feldman et al.~\cite{FeldmanFR18} prove that selling separately to $O(n\ln(m/n)/\varepsilon)$ additional buyers exceeds a $(1-\varepsilon)$ fraction of the optimal revenue (without the additional buyers). Our main result essentially achieves the greatly improved bound of~\cite{FeldmanFR18} (and improves it further) without losing \emph{any} revenue: we prove a competition complexity bound of $n(2+\ln(1+m/n))$. This guarantee is tight up to constant factors (and remains tight even if one is willing to lose an $\varepsilon$ fraction), due to a lower bound of~\cite{FeldmanFR18}. \\

\noindent\textbf{``Big $n$ Regime'':} For $n$ additive bidders with $m = o(n)$ independent items, Eden et al.~\cite{EdenFFTW17a} prove a competition complexity bound of $n+2(m-1)$. Feldman et al.~\cite{FeldmanFR18} prove that {for any $\varepsilon$, there exists a constant $\delta(\varepsilon)$ such that if $n\geq m/\delta(\varepsilon)$, selling separately (without any additional bidders) achieves a $(1-\varepsilon)$ fraction of the optimal revenue.}
Our main result improves the guarantee of~\cite{EdenFFTW17a} to $9\sqrt{nm}$ and also implies the result of~\cite{FeldmanFR18} (with $\delta(\varepsilon) = \varepsilon^2/81)$. Note in particular that any sublinear competition complexity bound implies the~\cite{FeldmanFR18} result for a different $\delta(\cdot)$, but that linear bounds do not. So our improvement from linear to sublinear is significant in this regard. Moreover, we show in Section~\ref{sec:lbs} that this is tight (up to constant factors) for any approach starting from the benchmark proposed in~\cite{EdenFFTW17a}. We further show (also in Section~\ref{sec:lbs}) that there does not exist any function only of $m$ upper bounding the competition complexity: as $n\rightarrow \infty$ the competition complexity approaches $\infty$ as well (at a rate of at least $\Omega(\ln n)$). 

Other works that study the competition complexity of auctions include seminal work of Bulow and Klemperer, who study the $m=1$ case, work of Liu and Psomas (who study the competition complexity of dynamic auctions) and Roughgarden et al. (who study the unit-demand setting)~\cite{BulowK96, LiuP18, RoughgardenTY12}. These works are thematically related, but both the results and techniques have little overlap. 

Some of the aforementioned works which prove approximation guarantees for simple mechanisms use similar techniques to derive a tractable benchmark that upper bound on the achievable revenue~\cite{ChawlaHK07, ChawlaHMS10, ChawlaMS15, HartN17, LiY13, BabaioffILW14, Yao15, RubinsteinW15, ChawlaM16, CaiZ17, EdenFFTW17b}. However, it is worth noting that all of these works proceed by immediately splitting the benchmark into multiple simpler terms {and finding approximately optimal mechanisms to cover each term separately. The best of those mechanism guarantees approximate optimality to revenue.} 
This greatly simplifies analysis, at the cost of an additional constant factor. Because competition complexity results target the full original revenue, losing this initial constant factor can make future analysis impossible. As a result, while benchmarks may be shared by these lines of work, analysis of the benchmarks is often quite different.

Finally, it is worth noting that recent work follows two approaches to derive revenue upper bounds in these works. Some (including this paper) use virtual value theory~\cite{ChawlaHK07, ChawlaHMS10, RoughgardenTY12, ChawlaMS15, CaiDW16, CaiZ17, EdenFFTW17a, EdenFFTW17b, LiuP18, FuLLT18}. Others use a more direct probabilitistic approach~\cite{HartN17, LiY13, BabaioffILW14, Yao15, RubinsteinW15, ChawlaM16, BabaioffGN17, FeldmanFR18}. For the most part, similar approximation guarantees are achievable through both approaches. With respect to these lines of work, our results (which yield exact competition complexity bounds) in comparison to those of~\cite{FeldmanFR18} (which lose an arbitrarily small $\varepsilon$) suggest the virtual value approach may be desirable if one cares about small losses.

\subsection{Roadmap}
Our main result tightly characterizes the competition complexity in the litte $n$ regime, and tightly characterizes the competition complexity in the big $n$ regime among proofs which use the same benchmark as~\cite{EdenFFTW17a}. 

In Section~\ref{sec:prelim}, we provide the necessary preliminaries surrounding the benchmark of~\cite{EdenFFTW17a} and virtual value theory. In Section~\ref{sec:single} we provide a near-complete proof of our results when $n=1$ as a warm-up. In Section~\ref{sec:multi}, we analyze the benchmark and reduce the analysis to proving stochastic dominance of certain correlated random variables drawn from $[0,1]$. In Section~\ref{sec:dominance} we prove the required claims regarding stochastic dominance (which at this point are purely mathematical claims and no longer reference auctions). In Appendix~\ref{sec:lbs} we: (a) recap the lower bound of~\cite{FeldmanFR18} witnessing that our results are tight in the little $n$ regime, (b) provide a lower bound witnessing that our results are tight in the big $n$ regime (among proofs which use the same benchmark as~\cite{EdenFFTW17a}), and (c) prove that the competition complexity of $n$ bidders with additive valuations over $m$ independent items approaches $\infty$ as $n \rightarrow \infty$.

\section{Notation and Preliminaries}\label{sec:prelim}
We consider a setting with $n$ i.i.d. bidders with additive valuations over $m$ independent items. That is, there are single-variate distributions $D_j$ for all $j \in [m]$, and bidder $i$'s value $v_j$ for item $j$ is drawn independently from $D_j$. Bidder $i$'s value for the bundle $S$ is just $\sum_{j \in S} v_j$. We will use the following notation:
\begin{itemize}
\item $\rev_n(D)$ to denote the revenue of the optimal (possibly randomized) Bayesian Incentive Compatible\footnote{A mechanism is Bayesian Incentive Compatible if it is in every bidder's interest to bid truthfully, conditioned on all other bidders bidding truthfully as well. That is, assuming that all other bidders submit bids drawn from $D_{-i}$, bidder $i$ best responds by bidding their true values.} mechanism when played by $n$ bidders whose values for $m$ items are drawn from $D$. In our setting, we will always have $D = \times_j D_j$ for some single-variate distributions $D_j$.
\item $\vcg_n(D)$ to denote the revenue achieved by the VCG mechanism when played by $n$ bidders whose values for $m$ items are drawn from $D$. In our setting, the VCG mechanism simply runs a second-price auction on each item separately with no reserve.
\item $\srev_n(D)$ to denote the revenue achieved by Myerson's mechanism run separately on each item, when played by $n$ bidders whose values for $m$ items are drawn from $D$. Note that for all $n$ and distributions $D$ over additive valuations, $\srev_n(D) \geq \vcg_n(D)$. 
\end{itemize}

\subsection{Myerson's Lemma, Bulow-Klemperer, and Virtual Values}
Here, we briefly recap basic facts about the theory of virtual values due to Myerson~\cite{Myerson81}. We include some proofs and sketches in Appendix~\ref{app:prelim}, and refer the reader to~\cite{HartlineBook} (Definition 3.11) for a deeper treatment of these concepts (or~\cite{CaiDW16}, Definition 8 for discrete distributions). Note that much of the theory extends to independent (but non-i.i.d.) bidders with slightly more complex statements. As we only consider i.i.d. bidders, we omit the extra notation. Below, when we write $X^+$ for a random variable $X$, we mean $\max\{X,0\}$.

\begin{definition}[Virtual Values and Ironing~\cite{Myerson81}] For a continuous single-variate distribution with CDF $F(\cdot)$ and PDF $f(\cdot)$, the virtual valuation function $\varphi_F(\cdot)$ satisfies $\varphi_F(v) = v - \frac{1-F(v)}{f(v)}$. If $\varphi_F(\cdot)$ is monotone non-decreasing, $F$ is said to be \emph{regular}. If not, $\overline{\varphi}_F(\cdot)$ is the \emph{ironed virtual value} function, and is monotone non-deceasing (see~\cite{HartlineBook} for a formal definition). When $F$ is regular, $\overline{\varphi}_F(\cdot) = \varphi_F(\cdot)$. 
\end{definition}

\begin{theorem}[\cite{Myerson81}]\label{thm:Myerson} Let $D$ be any single-variate distribution. Then for all $n$:
$$\srev_n(D) = \rev_n(D) = \mathbb{E}_{\vec{v} \leftarrow D^n} \left[\left(\overline{\varphi}_D\left(\max_{i \in [n]}\{v_i\}\right)\right)^+\right],\quad \vcg_n(D) = \mathbb{E}_{\vec{v} \leftarrow D^n} \left[\left({\varphi}_D\left(\max_{i \in [n]}\{v_i\}\right)\right)\right].$$
\end{theorem}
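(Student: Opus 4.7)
The plan is to follow Myerson's classical argument in two steps: (i) rewrite the expected revenue of any BIC mechanism as expected virtual welfare, and (ii) identify the virtual-welfare-maximizing allocation rules corresponding to the formulas in the statement.

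For step (i), I would start from the standard single-dimensional characterization: the reduced allocation rule $x_i(v_i) := \mathbb{E}_{v_{-i}}[x_i(v_i, v_{-i})]$ is BIC-implementable if and only if it is non-decreasing in $v_i$, and, assuming individual rationality with zero utility at value $0$, payments are pinned down by the payment identity $p_i(v_i) = v_i x_i(v_i) - \int_0^{v_i} x_i(t)\,dt$. Integration by parts against the density of $D$ yields $\mathbb{E}[p_i(v_i)] = \mathbb{E}[\varphi_D(v_i)\, x_i(v_i)]$, and summing over $i$ gives the key identity
$$\text{Rev}(M) \;=\; \mathbb{E}\Big[\sum_{i} \varphi_D(v_i)\, x_i(\vec{v})\Big]$$
for \emph{every} BIC mechanism $M$, regardless of whether $D$ is regular.

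For step (ii), the VCG formula is essentially immediate: VCG allocates to the highest bidder with probability one, so $\sum_i \varphi_D(v_i)\, x_i(\vec{v}) = \varphi_D(\max_i v_i)$ pointwise, and taking expectation gives the claimed expression (no $(\cdot)^+$ appears here because VCG always sells, regardless of the sign of the virtual value). For $\rev_n(D)$, I would maximize the expected virtual welfare subject to the feasibility and monotonicity constraints. When $D$ is regular, $\varphi_D$ is itself non-decreasing, so the pointwise maximizer allocates to the highest-value bidder whenever $\varphi_D$ of that value is non-negative; this rule is implementable (it is the second-price auction with monopoly reserve) and its expected revenue is exactly $\mathbb{E}[\varphi_D(\max_i v_i)^+]$. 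Finally, $\srev_n(D) = \rev_n(D)$ follows trivially because $D$ is single-variate, so there is only one item and ``sell separately'' coincides with ``sell optimally.''

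The main obstacle is handling irregular $D$, where the pointwise maximizer of $\sum_i \varphi_D(v_i)\, x_i$ need not be monotone. The fix is ironing: define $\overline{\varphi}_D$ via the concave envelope of the revenue curve $R(q) := q \cdot F_D^{-1}(1-q)$ in quantile space, then show that $\mathbb{E}[\varphi_D(v)\, x(v)] \leq \mathbb{E}[\overline{\varphi}_D(v)\, x(v)]$ for every monotone $x$, with equality whenever $x$ is constant on the ``ironed'' intervals where $\overline{\varphi}_D$ is flat. Pointwise maximization of the ironed virtual welfare does yield a monotone allocation (with uniform tie-breaking on ironed intervals) whose expected revenue matches $\mathbb{E}[\overline{\varphi}_D(\max_i v_i)^+]$. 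Rather than redo this technical machinery, the write-up can defer to the formal treatments in \cite{HartlineBook} or \cite{CaiDW16}.
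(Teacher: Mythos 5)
The paper does not give a proof of Theorem~\ref{thm:Myerson}: it is stated as a cited classical result of Myerson~\cite{Myerson81}, with the reader referred to~\cite{HartlineBook} (Definition 3.11) and to~\cite{CaiDW16} (Definition~8, for the discrete case) for a full treatment. So there is no in-paper proof to compare against; the question is only whether your sketch is a correct rendering of the standard argument, and it is. You correctly pass through the payment identity to get revenue equals expected virtual surplus for every BIC, IR mechanism; note that VCG always allocates to the highest bidder, giving $\vcg_n(D) = \mathbb{E}[\varphi_D(\max_i v_i)]$ with no positive part; identify the optimum (after ironing) as a second-price auction with monopoly reserve and tie-breaking uniform on ironed intervals, giving $\rev_n(D) = \mathbb{E}[\overline{\varphi}_D(\max_i v_i)^+]$; and observe $\srev_n(D) = \rev_n(D)$ trivially since there is one item. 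One point worth making explicit: the fact that ``allocate to the highest bidder when $\overline{\varphi}_D(\max_i v_i) \geq 0$'' recovers the full ironed-virtual-welfare upper bound relies on $\overline{\varphi}_D$ being constant on ironed intervals, so that uniformly randomized tie-breaking inside an interval neither changes the objective nor violates the constant-on-interval constraint -- you gesture at this but it is the crux of the irregular case. Also, as the paper itself flags, the integration-by-parts derivation of the payment identity assumes $D$ has a density; handling point masses requires the discrete-quantile definition of (ironed) virtual values, but the structure of the argument is unchanged.
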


\begin{fact}\label{fact:one} For any single-variate distribution $D$, and any value $v$, let $D_{\geq v}$ denote the distribution $D$ conditioned on exceeding $v$. Then $v=\mathbb{E}_{w\leftarrow D_{\geq v}}[\varphi(w)]\leq \mathbb{E}_{w \leftarrow D_{\geq v}}[\overline{\varphi}(w)]$. 
\end{fact}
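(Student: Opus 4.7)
The plan is to treat the equality $v = \mathbb{E}_{w \leftarrow D_{\geq v}}[\varphi(w)]$ and the subsequent inequality separately; neither requires more than integration by parts and the defining property of ironing. For the equality, I would substitute $\varphi(w) = w - (1-F(w))/f(w)$ into the expectation to write
\[
(1-F(v)) \cdot \mathbb{E}_{w \leftarrow D_{\geq v}}[\varphi(w)] = \int_v^\infty \bigl(w\, f(w) - (1-F(w))\bigr)\,dw,
\]
and then integrate the first term by parts (with $u=w$ and $dU = f(w)\,dw$, so $V = -(1-F(w))$) to obtain $[-w(1-F(w))]_v^\infty + \int_v^\infty (1-F(w))\,dw$. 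Assuming a finite mean, the boundary term at infinity vanishes, the remaining integral cancels the second term, and the right-hand side collapses to $v\,(1-F(v))$. Dividing by $1-F(v)$ yields the identity. Equivalently, one can read this off Theorem~\ref{thm:Myerson} applied to the single-bidder take-it-or-leave-it mechanism at price $v$, whose expected revenue $v(1-F(v))$ must equal its expected (unironed) virtual surplus.

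For the inequality, I would invoke the defining property of ironing in quantile space. Let $q = 1 - F(w)$ and define $H(q) = \int_0^q \varphi(F^{-1}(1-s))\,ds$, the cumulative virtual-revenue curve; by construction the ironed virtual value $\overline{\varphi}$ is (the derivative of) the smallest concave majorant $\overline{H}$ of $H$, so $\overline{H} \geq H$ pointwise. The same change of variables turns $\mathbb{E}_{w \leftarrow D}[\varphi(w)\,\one\{w \geq v\}]$ into $H(q^*)$ and $\mathbb{E}_{w \leftarrow D}[\overline{\varphi}(w)\,\one\{w \geq v\}]$ into $\overline{H}(q^*)$, where $q^* = 1 - F(v)$. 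Dividing both quantities by $1 - F(v) = q^*$ then gives $\mathbb{E}_{w \leftarrow D_{\geq v}}[\overline{\varphi}(w)] \geq \mathbb{E}_{w \leftarrow D_{\geq v}}[\varphi(w)]$, which is exactly the inequality claimed.

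I do not expect a genuine obstacle: both halves are textbook consequences of Myerson's framework. The one point requiring mild care is the vanishing of the boundary term $w(1-F(w))$ at infinity, which holds whenever $D$ has finite mean --- an assumption already implicit in every $\rev_n$ or $\srev_n$ expression in the paper. For discrete distributions the same argument goes through by summation by parts, matching the discrete formulation of Theorem~\ref{thm:Myerson} in~\cite{CaiDW16}. A minor sharpening worth noting is that the inequality becomes an equality precisely when $v$ lies outside every ironed interval (since then $\overline{H}(q^*) = H(q^*)$); this is not needed for Fact~\ref{fact:one} itself but may be convenient downstream.
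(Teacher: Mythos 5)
Your proposal is correct, and the underlying content matches the paper's proof, though the presentation is more self-contained. For the equality, both you and the paper reduce to Myerson's lemma applied to the single-bidder posted-price mechanism at price $v$ (you also give the direct integration-by-parts computation, which is just the proof of that instance of Myerson's lemma; your observation that one needs finite mean for the boundary term $w(1-F(w))\to 0$ is correct and is left implicit in the paper). For the inequality, the paper simply invokes the ironed half of Myerson's lemma (expected payment $\leq$ expected ironed virtual surplus for monotone allocation rules) as a black box, whereas you re-derive it directly from the concave-closure characterization of ironing in quantile space: the ironed revenue curve $\overline{H}$ is the least concave majorant of $H$, the allocation $\one\{w\geq v\}$ corresponds to the quantile interval $[0,q^*]$, and $\overline{H}(q^*)\geq H(q^*)$ (with $\overline{H}(0)=H(0)=0$) gives the claim. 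This is a slightly more elementary route in that it avoids reasoning about BIC mechanisms entirely, but it is exactly the argument hiding inside the lemma the paper cites, so the two proofs are equivalent at the level of ideas. Your closing remark — equality holds iff $q^*$ lies outside every ironed interval — is accurate and a nice refinement, though unused.
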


Finally, we recall the seminal result of Bulow and Klemperer~\cite{BulowK96}:

\begin{theorem}[\cite{BulowK96}] For any regular single-variate distribution $D$, $\vcg_{n+1}(D) \geq \rev_n(D)$. 
\end{theorem}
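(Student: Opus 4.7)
The plan is to leverage the virtual-welfare identity in Theorem~\ref{thm:Myerson}: for any truthful mechanism over a regular distribution, expected revenue equals expected virtual welfare, and since regularity makes $\varphi$ monotone, allocating the item to the highest-value bidder also maximizes virtual welfare pointwise.

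First, I would construct an auxiliary truthful mechanism $M$ on $n+1$ bidders: run Myerson's optimal single-item auction on bidders $1,\ldots,n$; if no bidder in $[n]$ wins (equivalently, $\max_{i \in [n]} \varphi(v_i) < 0$), award the item to bidder $n+1$ at price $0$. Because bidder $n+1$'s allocation is independent of their own report, their threshold price is $0$ and $M$ is Bayesian incentive compatible. Hence $\mathrm{Rev}(M) = \rev_n(D)$, and by Theorem~\ref{thm:Myerson} applied to $M$, this revenue equals the expected virtual welfare of $M$.

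Next, I would compare the virtual welfare of $M$ against that of $\vcg_{n+1}$ pointwise on valuation profiles $\vec{v}$. Split into cases: if $\max_{i \in [n]} \varphi(v_i) \geq 0$, then $M$ allocates to the bidder achieving this max and generates virtual welfare $\max_{i \in [n]} \varphi(v_i) \leq \max_{i \in [n+1]} \varphi(v_i)$; otherwise $M$ allocates to bidder $n+1$ and generates virtual welfare $\varphi(v_{n+1}) \leq \max_{i \in [n+1]} \varphi(v_i)$. Since by regularity $\vcg_{n+1}$ allocates to the bidder with largest value (equivalently, largest virtual value), its virtual welfare on $\vec{v}$ is exactly $\max_{i \in [n+1]} \varphi(v_i)$. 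Taking expectations and invoking Theorem~\ref{thm:Myerson} again gives
$$\vcg_{n+1}(D) \;=\; \mathbb{E}\!\left[\max_{i \in [n+1]} \varphi(v_i)\right] \;\geq\; \mathbb{E}[\text{virtual welfare of }M] \;=\; \mathrm{Rev}(M) \;=\; \rev_n(D).$$

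The only delicate step is verifying that $M$ is truthful with bidder $n+1$ paying $0$, and that the virtual-welfare identity applies to $M$ (which requires the allocation rule to be monotone in each bidder's value—this holds because Myerson's rule is monotone and adding a tie-breaking gift to bidder $n+1$ preserves monotonicity for all bidders). Regularity is essential throughout: without it, allocating to the highest-value bidder need not maximize virtual welfare, and one would need to work with ironed virtual values, breaking the clean equality between $\vcg_{n+1}$ and $\mathbb{E}[\max_i \varphi(v_i)]$.
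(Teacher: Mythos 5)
The paper states the Bulow--Klemperer theorem with a citation only and gives no proof of its own, so there is nothing internal to compare against. Your argument, however, is the standard modern virtual-welfare proof (the ``auxiliary mechanism'' approach presented in Hartline's textbook, often credited to Kirkegaard), and it is correct. The key steps all go through: the mechanism $M$ that runs Myerson on bidders $1,\ldots,n$ and gives the item to bidder $n+1$ for free when Myerson withholds is dominant-strategy truthful, its revenue equals $\rev_n(D)$, and its expected virtual welfare is (by Lemma~\ref{lem:myerson}) its revenue. Regularity gives $\varphi_D(\max_i v_i) = \max_i \varphi_D(v_i)$, so $\vcg_{n+1}(D) = \mathbb{E}[\max_{i\in[n+1]}\varphi(v_i)]$ exactly as in Theorem~\ref{thm:Myerson}, and the pointwise case analysis (someone in $[n]$ has nonnegative virtual value, or not) shows $M$'s virtual welfare never exceeds VCG's. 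You correctly isolate the role of regularity: without it the second-price allocation would not maximize virtual welfare pointwise, which is why the paper must route through $\srev$ plus ironing in the irregular case rather than $\vcg$. One cosmetic point: for continuous regular distributions the boundary event $\max_{i\in[n]}\varphi(v_i)=0$ has measure zero, so the tie-breaking convention you implicitly adopt is immaterial, but it is worth a sentence if this were written up formally.
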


\subsection{Duality Benchmarks}\label{sec:duality}
Here we state an upper bound on $\rev_n(D)$ when $D$ is additive over independent items. The bound is derived using the duality framework of Cai et al.~\cite{CaiDW16}, and first used by Eden et al.~\cite{EdenFFTW17a} (it is also possible to derive this particular bound without duality~\cite{ChawlaMS15}). When referring to this benchmark in text, we call it the EFFTW benchmark. Parsing the benchmark requires additional notation:

\begin{itemize}
\item $v_{ij}$ denotes the value of bidder $i$ for item $j$.
\item $D_j$ denotes the marginal distribution of item $j$. We use $\overline{\varphi}_j(\cdot)$ to denote $\overline{\varphi}_{D_j}(\cdot)$. 
\item For a variable $X$, if $X$ has no point-masses, then we simply define $F(x) = \Pr[X < x] = \Pr[X \leq x]$. If $X=x$ with strictly positive probability, then we define $F(x)$ to be a random variable drawn uniformly from the interval $[\Pr[X < x], \Pr[X \leq x]]$. Importantly, note that the random variable $F(X)$ is drawn uniformly from $[0,1]$ for any random variable $X$. 
\item For a distribution $D:= \times_j D_j$, we partition the space $\mathbb{R}_+^m$ into $m$ disjoint \emph{regions}. For each $j \in [m]$, we define $R_j:= \{\vec{v}_i \in \mathbb{R}_+^m\ |\ F_j(v_{ij}) > F_{k}(v_{ik})\ \forall k\neq j\}$. That is, $\vec{v}_i$ is in region $R_j$ if item $j$ has the highest \emph{quantile}. Observe that his partition may be randomized if $D$ has point masses (and is deterministic with probability $1$ if $D$ has no point masses). 
\end{itemize}

\begin{theorem}[\cite{CaiDW16,EdenFFTW17a}]\label{thm:CDW} Let $D$ be additive over $m$ independent items. Then:

$$\rev_n(D) \leq \sum_{j=1}^m \mathbb{E}_{\vec{v} \leftarrow D^n} \left[\max_{i \in [n]} \left\{\overline{\varphi}_j(v_{ij})^+\cdot \mathbb{I}(\vec{v}_i \in R_j) + v_{ij} \cdot \mathbb{I}(\vec{v}_i \notin R_j)\right\}\right].$$
\end{theorem}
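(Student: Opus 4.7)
The plan is to invoke the Lagrangian duality framework of Cai--Devanur--Weinberg. Its weak duality statement says that for any Bayesian Incentive Compatible mechanism $M$ with allocation rule $x$, and any valid \emph{flow} $\lambda$ (a system of nonnegative multipliers on the BIC constraints satisfying the CDW conservation equations), the expected revenue of $M$ is upper bounded by the expected virtual welfare
\[
\rev(M) \leq \sum_i \E_{\vec v}\Bigl[\sum_S \Phi^\lambda_{i,S}(\vec v_i)\,x_{i,S}(\vec v)\Bigr],
\]
where each $\Phi^\lambda_{i,S}$ is obtained from $\lambda$ in the standard way. For additive bidders the virtual value decomposes as $\Phi^\lambda_{i,S}(\vec v_i) = \sum_{j\in S}\Phi^\lambda_{ij}(\vec v_i)$, so designing $\lambda$ well is effectively an item-by-item task.

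Next I would exhibit the specific flow realizing the EFFTW benchmark. For each bidder $i$ and each type $\vec v_i \in R_j$, take the flow to be one-dimensional in coordinate $j$: it is exactly Myerson's flow for the marginal $D_j$ acting on the $j$-th coordinate, while the other coordinates $k\neq j$ are held fixed. Ironing is applied along coordinate $j$ whenever $D_j$ is irregular, and one needs to verify CDW conservation at the interfaces between the regions $R_1,\ldots,R_m$ (the boundaries are precisely where the quantile ordering flips, and the flow vanishes there by construction). A direct calculation then yields
\[
\Phi^\lambda_{ij}(\vec v_i) \;=\; \overline{\varphi}_j(v_{ij})\,\mathbb{I}(\vec v_i\in R_j) \;+\; v_{ij}\,\mathbb{I}(\vec v_i\notin R_j).
\]
Setting $x_{ij} := \sum_{S \ni j} x_{i,S}$ and substituting back gives $\rev_n(D) \leq \sum_j \E\bigl[\sum_i \Phi^\lambda_{ij}(\vec v_i)\,x_{ij}(\vec v)\bigr]$.

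The last step passes from a sum over bidders to a max, using feasibility. For each realization $\vec v$ and each item $j$ we have $\sum_i x_{ij}(\vec v) \leq 1$ with $x_{ij}(\vec v) \geq 0$, and one is free to set $x_{ij}=0$ for any bidder with negative virtual value, so
\[
\sum_i \Phi^\lambda_{ij}(\vec v_i)\,x_{ij}(\vec v) \;\leq\; \max_{i\in[n]} \bigl[\Phi^\lambda_{ij}(\vec v_i)\bigr]^+.
\]
Distributing the positive part using $v_{ij}\geq 0$ (so $v_{ij}^+=v_{ij}$) reproduces exactly the summand in the theorem, and summing over $j$ finishes the proof. The main obstacle is the middle paragraph: verifying that the described flow really does satisfy the CDW conservation equations at the boundaries between the regions $R_j$, and that it genuinely produces the stated virtual values, with the ironing component being the most delicate part when some $D_j$ is non-regular. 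Once the flow is pinned down, the duality inequality and feasibility reduction are essentially mechanical.
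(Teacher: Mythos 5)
Your outline is the same route the paper takes: the paper proves this theorem simply by citing Corollary~28 of~\cite{CaiDW16} and noting that the regions $R_j$ are upwards-closed in coordinate $j$, which is precisely the one-dimensional Myersonian flow (with ironing) that you describe. The gap you flag yourself — verifying that the region-by-region flow satisfies the CDW conservation constraints at the $R_j$ boundaries and produces the stated ironed virtual values — is real, but it is exactly the content of the cited CDW lemma, which is proved there for any upwards-closed partition; your observation that $\vec v_i\in R_j \Rightarrow \vec v_i + c\vec\epsilon_j\in R_j$ (since $F_j(v_{ij})$ is non-decreasing in $v_{ij}$) is the property that licenses the citation, so the proposal is correct given that external reference. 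The final feasibility step, using $\sum_i x_{ij}\le 1$ and zeroing out negative-virtual-value allocations to pass to $\max_i[\Phi^\lambda_{ij}]^+$, and the identity $[\overline\varphi_j(v_{ij})\mathbb{I}(\vec v_i\in R_j)+v_{ij}\mathbb{I}(\vec v_i\notin R_j)]^+ = \overline\varphi_j(v_{ij})^+\mathbb{I}(\vec v_i\in R_j)+v_{ij}\mathbb{I}(\vec v_i\notin R_j)$ (valid since the indicators are complementary and $v_{ij}\ge 0$), are both correct.
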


If we think of the \emph{Virtual Value} of bidder $i$ for item $j$ as equal to Myerson's ironed virtual value, $\overline{\varphi}_j(v_{ij})^+$, if item $j$ has the highest quantile in $\vec{v}_i$, and equal to the value, $v_{ij}$, if not, then Theorem~\ref{thm:CDW} claims that the expected revenue of the optimal mechanism does not exceed the sum over all items of the expected maximum virtual value for that item. Theorem~\ref{thm:CDW} is an application of Corollary~28 in~\cite{CaiDW16}, together with the observation that our defined regions are upwards-closed.

\section{Warm-Up: Single Bidder}\label{sec:single}
In this section, we illustrate one portion of our improved anlaysis via the single bidder setting. This will also help identify one significant point of departure from~\cite{EdenFFTW17a}. Observe that the EFFTW benchmark simplifies significantly for a single bidder, as there is only one element of $[n]$, and the benchmark simply sums the virtual value of the item with the highest quantile plus the values of all other items.

\subsection{Brief Recap of~\cite{EdenFFTW17a}}
The main idea in the single-bidder approach of~\cite{EdenFFTW17a} is to couple draws of $m$ bidders for item $j$ with draws of a single bidder for $m$ items via their quantiles. Specifically, they observe the following: consider fixed quantiles $q_1,\ldots, q_m$ drawn independently. and uniformly from $[0,1]$. 
\begin{itemize}
\item \textbf{Benchmark Analysis:} Use the quantiles drawn to determine values for each of $m$ items. If $q_j$ is the largest quantile drawn, then item $j$ contributes $\overline{\varphi}_j(F^{-1}_j(q_j))^+$ to the benchmark. If $q_j$ is not the largest quantile drawn, then item $j$ contributes $F^{-1}_j(q_j)$ to the benchmark.
\item \textbf{VCG Analysis:} Use the quantiles drawn to determine values of each of $m$ bidders for item $j$. If $q_j$ is the largest quantile drawn, then bidder $j$ contributes $\overline{\varphi}_j(F^{-1}_j(q_j))$ to the virtual surplus of VCG. If $q_j$ is not the largest quantile drawn, then some other bidder wins the item and pays at least $F^{-1}(q_j)$, so at least $F^{-1}(q_j)$ is contributed by some bidder $\neq j$ to the revenue. 
\end{itemize}

The above reasoning is not far from a formal proof that $\srev_m(D) \geq \rev_1(D)$. Some care is required to make sure Theorem~\ref{thm:Myerson} is applied correctly (since we wish to count bidder $j$'s contribution to the revenue of VCG using her ironed virtual value but the other bidders' contributions directly via payments), but the above reasoning is the key step. The main idea is that if we couple the quantiles drawn for the benchmark with quantiles drawn for selling just item $j$, then the revenue achieved from selling just item $j$ to $m$ bidders drawn from $D_j$ exceeds the contribution of item $j$ to the benchmark \emph{for all quantiles drawn}. 

\subsection{Our Analysis}
The main challenge that the previous analysis overcomes is the following: the contribution of item $j$ to the benchmark is sometimes in the form of a virtual value, and sometimes in the form of a value. There is no ``natural'' random variable that takes exactly this form, and it is tricky to analyze directly. So the previous analysis finds a clever way to ``recreate'' it using this coupling argument. Unfortunately though, direct coupling arguments like this should not hope to prove a competition complexity better than $m-1$, as there are $m$ random variables that need to be coupled.

Our approach instead is to reason about the contribution of item $j$ to the benchmark \emph{exclusively in terms of virtual values}, using Fact~\ref{fact:one}. Specifically, consider the following proposition, which rewrites the contribution of item $j$ to the benchmark. Below, $X_L(1,m)$ denotes the following random variable: first, draw one quantile $X_{1,1}$ uniformly at random from $[0,1]$. Then, draw $m-1$ quantiles uniformly at random from $[0,1]$ and label them $Y_{1,m-1}$ thru $Y_{m-1,m-1}$. If $X_{1,1} > Y_{\ell,m-1}$ for all $\ell$, then set $X_L(1,m) = X_{1,1}$. Otherwise, let $\ell^*$ denote a uniformly random element from $\{\ell\ |\ Y_{\ell,m-1} > X_{1,1}\}$ and set $X_L(1,m) = Y_{\ell^*,m-1}$. 

\begin{proposition}\label{prop:singlemain} For all $D = \times_j D_j$ and all items $j$, $\mathbb{E}_{\vec{v}\leftarrow D} \left[\overline{\varphi}_j(v_{j})^+\cdot \mathbb{I}(\vec{v}\in R_j) +v_j\cdot \mathbb{I}(\vec{v} \notin R_j)\right] \leq \mathbb{E}[\overline{\varphi}_j(F^{-1}_j(X_L(1,m)))^+]$.
\end{proposition}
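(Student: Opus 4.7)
The plan is to convert the mixed LHS integrand (an ironed virtual value on $R_j$, a raw value off $R_j$) into a single expectation of the form $\overline{\varphi}_j(F_j^{-1}(U))^+$ for a carefully chosen random quantile $U$, and then recognize $U$ as distributionally identical to $X_L(1,m)$. The main tool is Fact~\ref{fact:one}: we can replace a raw value $v_j$ by the expected ironed virtual value $\mathbb{E}_{w\leftarrow D_{j,\geq v_j}}[\overline{\varphi}_j(w)]$, which in quantile language is just an average of $\overline{\varphi}_j(F_j^{-1}(\cdot))$ over a uniform draw from $(F_j(v_j),1]$.

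First I would parametrize the bidder's profile in quantiles: let $q_1,\ldots,q_m$ be i.i.d.\ uniform on $[0,1]$ with $v_k=F_k^{-1}(q_k)$, so the event $\vec{v}\in R_j$ is precisely $\{q_j>q_k$ for all $k\neq j\}$. Split the LHS using this indicator. On $\vec{v}\in R_j$ the integrand is already $\overline{\varphi}_j(F_j^{-1}(q_j))^+$. On $\vec{v}\notin R_j$, apply Fact~\ref{fact:one} pointwise to get $v_j\leq \mathbb{E}_{w\leftarrow D_{j,\geq v_j}}[\overline{\varphi}_j(w)]\leq \mathbb{E}_{w\leftarrow D_{j,\geq v_j}}[\overline{\varphi}_j(w)^+]$. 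Taking the outer expectation and recombining, the LHS is bounded by $\mathbb{E}[\overline{\varphi}_j(F_j^{-1}(U))^+]$, where the quantile $U$ is defined as follows: if $q_j=\max_k q_k$, then $U=q_j$; otherwise $U$ is drawn fresh from the uniform distribution on $(q_j,1]$.

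The remaining (and only substantive) step is to check $U \stackrel{d}{=} X_L(1,m)$. Identify $X_{1,1}$ with $q_j$ and $Y_{1,m-1},\ldots,Y_{m-1,m-1}$ with the other $q_k$'s. In the max-case (all other $q_k<q_j$), both $U$ and $X_L(1,m)$ equal $q_j$. In the complementary case I would verify that $X_L(1,m)$ is uniform on $(X_{1,1},1]$ conditional on $X_{1,1}$. This is a one-line computation: conditional on $X_{1,1}=x$ and on the random subset $S=\{\ell:Y_{\ell,m-1}>x\}$ (nonempty by assumption), the $Y_{\ell,m-1}$ for $\ell\in S$ are conditionally i.i.d.\ uniform on $(x,1]$, so a uniformly chosen one of them is itself uniform on $(x,1]$, matching $U$. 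The main conceptual obstacle is not the arithmetic but recognizing that Fact~\ref{fact:one} is the right tool to homogenize the two-piece benchmark integrand; this is what lets the analysis avoid the $(m-1)$-fold coupling argument used in~\cite{EdenFFTW17a}.
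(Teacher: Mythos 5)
Your proposal is correct and follows essentially the same route as the paper's proof: parametrize in quantiles, use Fact~\ref{fact:one} to replace the raw value $v_j$ (off $R_j$) with the expected ironed virtual value of a fresh uniform quantile on $(F_j(v_j),1]$, and then recognize the resulting random quantile as distributed like $X_L(1,m)$ via the same $X_{1,1}\leftrightarrow q_j$, $Y_{\ell,m-1}\leftrightarrow q_k$ identification. The only cosmetic difference is that you introduce an auxiliary variable $U$ and verify $U\stackrel{d}{=}X_L(1,m)$, whereas the paper couples $X_L(1,m)$ with the bidder's quantiles directly; the conditional-uniformity argument in the otherwise case is the same in both.
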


\begin{proof}
The main idea is to get a lot of mileage from Fact~\ref{fact:one}: ideally, any time $\vec{v} \notin R_j$, rather than contribute $v_j$ to the benchmark, we will contribute the virtual value of a random draw from $D_j$ conditioned on exceeding $v_j$. To begin, let's couple quantiles drawn for the benchmark with quantiles drawn for the experiment defining $X_L(1,m)$ so that $X_{1,1} = F_j(v_j)$ and $Y_{\ell,m-1} = F_\ell(v_\ell)$ for $\ell < m, \ell \neq j$, and $Y_{j,m-1} = F_{m}(v_m)$ (if $j \neq m$, otherwise there is no $Y_{m,m-1}$ to define). Observe that indeed the quantiles are all drawn independently and uniformly from $[0,1]$. Moreover, we have:
\begin{itemize}
\item Whenever $\vec{v} \in R_j$, $X_L(1,m) = X_{1,1} = F_j(v_j)$. Therefore, we conclude that: 
\begin{equation}\mathbb{E}_{\vec{v} \leftarrow D}\left[\overline{\varphi}_j(v_j)^+ \cdot \mathbb{I}(\vec{v} \in R_j)\right] = \mathbb{E}\left[\overline{\varphi}_j(F^{-1}_j(X_L(1,m))^+\cdot \mathbb{I}(X_L(1,m) = X_{1,1})\right].\end{equation}
\item Conditioned on $\vec{v} \notin R_j$, $X_L(1,m)$ is a uniformly random sample from $[X_{1,1},1]$. This is because there is some strictly positive number of $\ell$s such that $Y_{\ell,m-1} > X_{1,1}$. Conditioned on being $> X_{1,1}$, each such value is drawn uniformly from $[X_{1,1},1]$. And then $X_L(1,m)$ picks one of them uniformly at random. Using Fact~\ref{fact:one}, we therefore conclude that:
\begin{align}
\mathbb{E}_{\vec{v} \leftarrow D} \left[v_j \cdot \mathbb{I}(\vec{v} \notin R_j)\right] &\leq \mathbb{E}_{\vec{v} \leftarrow D} \left[\mathbb{E}_{x \leftarrow D_{j,\geq v_j}}\left[\overline{\varphi}_j(x)\right] \cdot \mathbb{I}(\vec{v} \notin R_j) \right]\nonumber \\&\leq \mathbb{E}\left[\overline{\varphi}_j(F_j^{-1}(X_L(1,m))) \cdot \mathbb{I}(X_L(1,m) \neq X_{1,1})\right].
\end{align}
\end{itemize}

It is now easy to see that the left-hand sides of the two equations sum together to yield item $j$'s contribution to the benchmark, while the two right-hand sides sum together to yield (at most) $\mathbb{E}[\overline{\varphi}_j(F^{-1}_j(X_L(1,m)))^+]$, proving the proposition.
\end{proof}

Proposition~\ref{prop:singlemain} gives an upper bound on the contribution of item $j$ to the benchmark written as the expectation of a virtual value of \emph{some} distribution ($F^{-1}(X_L(1,m))$). This is convenient because we can write the revenue achieved by using Myerson's optimal auction for selling item $j$ to $1+c$ bidders as the expectation of a virtual value of another distribution (the maximum of $1+c$ draws from $D_j$). Therefore, if we can relate these two distributions (for instance, by proving that one stochastically dominates the other), we can relate these two expectations. Below, let $X_S(1,c)$ denote the maximum of $1+c$ i.i.d. draws from the uniform distribution on $[0,1]$. 

\begin{corollary}\label{cor:reducesingle} If $X_S(1,c)$ stochastically dominates $X_L(1,m)$, then for all $D$ that are additive over $m$ independent items, $\srev_{1+c}(D) \geq \rev_{1}(D)$. 
\end{corollary}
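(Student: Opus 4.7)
The plan is to chain together three ingredients: Theorem~\ref{thm:CDW} (the EFFTW upper bound on $\rev_1(D)$), Proposition~\ref{prop:singlemain} (which rewrites each item's contribution to the benchmark as the expectation of a virtual value applied to $F_j^{-1}(X_L(1,m))$), and Theorem~\ref{thm:Myerson} (which expresses $\srev_{1+c}(D)$ as a sum of expected virtual values). Stochastic dominance will do the final step of comparing the two expectations item by item.

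First, I would fix an item $j$ and observe that, since $\srev_{1+c}(D)$ runs Myerson separately on each item, Theorem~\ref{thm:Myerson} gives
\begin{equation*}
\srev_{1+c}(D) \;=\; \sum_{j=1}^m \mathbb{E}_{\vec{v} \leftarrow D_j^{1+c}}\!\left[\overline{\varphi}_j\!\left(\max_{i \in [1+c]} v_i\right)^{\!+}\right].
\end{equation*}
Since the maximum of $1+c$ i.i.d.\ uniform quantiles from $[0,1]$ is precisely $X_S(1,c)$, and $F_j^{-1}$ transports this back to the maximum of $1+c$ i.i.d.\ draws from $D_j$, I can rewrite the right-hand side as $\sum_j \mathbb{E}[\overline{\varphi}_j(F_j^{-1}(X_S(1,c)))^+]$.

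Next, combining Theorem~\ref{thm:CDW} with Proposition~\ref{prop:singlemain} applied to each $j$ yields
\begin{equation*}
\rev_1(D) \;\leq\; \sum_{j=1}^m \mathbb{E}\!\left[\overline{\varphi}_j\!\left(F_j^{-1}(X_L(1,m))\right)^{\!+}\right].
\end{equation*}
So the corollary reduces to showing, for each $j$, that $\mathbb{E}[g_j(X_S(1,c))] \geq \mathbb{E}[g_j(X_L(1,m))]$, where $g_j(x) := \overline{\varphi}_j(F_j^{-1}(x))^+$.

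The key observation is that $g_j$ is a monotone non-decreasing function of $x \in [0,1]$: the quantile function $F_j^{-1}$ is non-decreasing, the ironed virtual value function $\overline{\varphi}_j$ is non-decreasing by definition of ironing, and $\max\{\cdot,0\}$ is non-decreasing. Composing monotone non-decreasing functions preserves monotonicity. Therefore, since by hypothesis $X_S(1,c)$ stochastically dominates $X_L(1,m)$, the definition of stochastic dominance immediately gives $\mathbb{E}[g_j(X_S(1,c))] \geq \mathbb{E}[g_j(X_L(1,m))]$. Summing over $j \in [m]$ and chaining with the two rewrites above completes the proof.

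There isn't really a hard step here; the corollary is genuinely a packaging statement, and the only mild subtlety is checking that $g_j$ is well-defined and monotone even when $D_j$ has point masses (so that $F_j^{-1}$ is interpreted appropriately as the generalized inverse), but the randomization convention for $F_j(\cdot)$ introduced in Section~\ref{sec:duality} makes this straightforward.
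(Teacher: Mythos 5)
Your proof is correct and follows essentially the same route as the paper: rewrite $\srev_{1+c}(D)$ via Theorem~\ref{thm:Myerson} as $\sum_j \mathbb{E}[\overline{\varphi}_j(F_j^{-1}(X_S(1,c)))^+]$, upper bound $\rev_1(D)$ by $\sum_j \mathbb{E}[\overline{\varphi}_j(F_j^{-1}(X_L(1,m)))^+]$ via Theorem~\ref{thm:CDW} and Proposition~\ref{prop:singlemain}, and conclude by the fact that $\overline{\varphi}_j \circ F_j^{-1}$ (and its positive part) is monotone so stochastic dominance transfers. The remark about generalized inverses and point masses is a fair extra word of care that the paper leaves implicit.
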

\begin{proof}
Observe first that by Theorem~\ref{thm:Myerson} we have:
$$\srev_{1+c}(D) = \sum_j \mathbb{E}_{\vec{x} \leftarrow D_j^{1+c}}  \left[\overline{\varphi}_j\left(\max_{i \in [1+c]}\{x_i\}\right)^+\right] = \sum_j \mathbb{E}\left[\overline{\varphi}_j(F_j^{-1}(X_S(1,c))^+\right].$$

By Proposition~\ref{prop:singlemain} (and Theorem~\ref{thm:CDW}), we have:

$$\rev_1(D) \leq \sum_{j=1}^m \mathbb{E}_{\vec{v} \leftarrow D} \left[\overline{\varphi}_j(v_{j})^+\cdot \mathbb{I}(\vec{v}\in R_j) + v_{j} \cdot \mathbb{I}(\vec{v} \notin R_j)\right] \leq \sum_{j=1}^m \mathbb{E}\left[\overline{\varphi}_j(F^{-1}_j(X_L(1,m)))^+\right].$$

Observe that $\overline{\varphi}_j(\cdot)$ is a monotone non-decreasing function, and $F_j^{-1}$ is also monotone non-decreasing. As such, if $X_S(1,c)$ stochastically dominates $X_L(1,m)$, $\overline{\varphi}_j(F_j^{-1}(X_S(1,c)))$ stochastically dominates $\overline{\varphi}_j(F_j^{-1}(X_L(1,m)))$, which allows us to conclude that $\sum_j \mathbb{E}\left[\overline{\varphi}_j(F_j^{-1}(X_S(1,c))^+\right] \geq \sum_{j=1}^m \mathbb{E}\left[\overline{\varphi}_j(F^{-1}_j(X_L(1,m)))^+\right]$. Therefore, we may conclude that if $X_S(1,c)$ stochastically dominates $X_L(1,m)$, $\srev_{1+c}(D) \geq \rev_1(D)$. 
\end{proof}

At this point, we've reduced the problem of deriving competition complexity upper bounds to a purely mathematical problem relating stochastic dominance of $X_S(1,c)$ and $X_L(1,m)$. The proof of this claim for $n=1$ is not an especially instructive special case, so we defer the final step to Section~\ref{sec:dominance}. So we wrap up our warm-up by citing Theorem~\ref{thm:littlen}:

\begin{corollary}[of Theorem~\ref{thm:littlen}]\label{cor:lastsingle} When $c \geq 2+\ln(m+1)$, $X_S(1,c)$ stochastically dominates $X_L(1,m)$. 
\end{corollary}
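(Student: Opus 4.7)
The plan is a direct computation. Since $X_S(1,c)$ has CDF $t^{1+c}$, the stochastic dominance claim is equivalent to showing that $F(t):=\Pr[X_L(1,m)\leq t]$ satisfies $F(t)\geq t^{1+c}$ for every $t\in[0,1]$.

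First I would obtain an explicit expression for $F(t)$ by conditioning on $X_{1,1}=x$. With probability $x^{m-1}$ all of $Y_{1,m-1},\ldots,Y_{m-1,m-1}$ are at most $x$, in which case $X_L(1,m)=x$. Otherwise $X_L(1,m)$ is uniform on $[x,1]$: conditioning each $Y_\ell$ on exceeding $x$ yields i.i.d.\ uniforms on $[x,1]$, and a uniformly chosen element of an i.i.d.\ sample is itself uniform on $[x,1]$. Integrating $x^{m-1}\mathbb{I}(x\leq t)+(1-x^{m-1})\cdot\tfrac{(t-x)^+}{1-x}$ against the uniform density for $X_{1,1}$ and using $\tfrac{1-x^{m-1}}{1-x}=\sum_{k=0}^{m-2}x^k$ gives
\[F(t)=\frac{t^m}{m}+\sum_{j=2}^{m}\frac{t^j}{(j-1)j},\qquad tF'(t)=t^m+\sum_{j=1}^{m-1}\frac{t^{j+1}}{j}.\]

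Next, I would reduce stochastic dominance to a differential inequality. Define $\phi(t):=\ln F(t)-(1+c)\ln t$. Since $F(1)=1$ we have $\phi(1)=0$, so it suffices to prove $\phi'(t)\leq 0$ on $(0,1)$, i.e.\ $tF'(t)/F(t)\leq 1+c$. Comparing coefficients of $t^j$ in the two polynomials above (using $\tfrac{1}{m}+\tfrac{1}{(m-1)m}=\tfrac{1}{m-1}$ to combine the two $t^m$-contributions to $F$), one checks that the $t^j$-coefficient of $tF'(t)$ is exactly $j$ times the $t^j$-coefficient of $F(t)$ for every $j\in\{2,\ldots,m\}$. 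Writing $F(t)=\sum_{j=2}^m c_j t^j$ with $c_j>0$, this gives
\[\frac{tF'(t)}{F(t)}=\frac{\sum_{j=2}^m j\,c_j t^j}{\sum_{j=2}^m c_j t^j}=\mathbb{E}_{J\sim\mu_t}[J],\]
where $\mu_t(j)\propto c_j t^j$ is a probability measure on $\{2,\ldots,m\}$. The family $\{\mu_t\}$ has monotone likelihood ratio in $t$, so $\mathbb{E}_{\mu_t}[J]$ is non-decreasing in $t$, and its supremum on $(0,1]$ is attained as $t\to 1$, where it equals $F'(1)=1+H_{m-1}$ with $H_k$ the $k$-th harmonic number.

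Combining these pieces, $H_{m-1}\leq 1+\ln(m-1)\leq 1+\ln(m+1)$ together with $c\geq 2+\ln(m+1)$ yields $1+c\geq 1+H_{m-1}\geq tF'(t)/F(t)$ for all $t\in(0,1]$, whence $\phi'(t)\leq 0$ and $F(t)\geq t^{1+c}$ as required. The only mildly delicate step is the coefficient bookkeeping that identifies $tF'(t)/F(t)$ as a weighted average of $\{2,\ldots,m\}$; once that representation is exposed, the monotonicity in $t$ and the harmonic bound are routine.
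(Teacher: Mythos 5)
Your computation of the CDF of $X_L(1,m)$ is correct, and the argument is sound: $F(t)=\sum_{j=2}^m c_j t^j$ with $c_j=\tfrac{1}{(j-1)j}$ for $j<m$ and $c_m=\tfrac{1}{m-1}$, $\sum_j c_j=1$, the log-derivative $tF'(t)/F(t)$ is a $t$-tilted average of $\{2,\ldots,m\}$ under the weights $c_j t^j$, the MLR property makes that average monotone in $t$ with supremum $F'(1)=1+H_{m-1}$, and $H_{m-1}\le 1+\ln(m-1)\le 1+\ln(m+1)$ finishes. In fact you prove the cleaner threshold $c\ge H_{m-1}$, which is a bit sharper than the stated $c\ge 2+\ln(m+1)$.

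This is a genuinely different route from the paper's. The paper obtains the corollary as the $n=1$ specialization of Theorem~\ref{thm:littlen}, whose proof runs through Propositions~\ref{prop:key} and~\ref{prop:littlen}: it conditions on $X_{(1),n}<p$, separately controls the $W_{2,n}$ piece and the $Y^*_{n,m-1}$ piece (computing the latter's conditional survival function as a polynomial and applying Jensen's inequality to the convex map $x\mapsto p^x$), and then combines them via an independence/correlation argument. Your proof is self-contained for $n=1$: it computes the unconditional CDF directly (which coincides, up to the factor $p$ absorbed by conditioning, with the polynomial in Proposition~\ref{prop:littlen}), passes to $\phi=\ln F-(1+c)\ln t$, and replaces the Jensen step by a monotone-likelihood-ratio argument on the log-derivative. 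Both methods ultimately extract the same ``average power'' quantity (the harmonic sum); what yours buys is a shorter, more transparent argument for the single-bidder case and a slightly tighter constant, at the cost of not generalizing to $n>1$ the way the paper's conditioning-based machinery does. One pedantic note: a remark such as ``one checks that the $t^j$-coefficient of $tF'(t)$ is $j$ times that of $F(t)$'' is tautological once $F$ is written as $\sum c_j t^j$ — the content is in identifying the $c_j$ and observing they are positive and sum to one — so that sentence could be dropped.
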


\begin{theorem}\label{thm:single} Let $D$ be a distribution that is additive over $m$ independent items. Then $\srev_{2+\ln(m+1)}(D) \linebreak[0] \geq \rev_{1}(D)$. If each $D_j$ is regular, then also $\vcg_{3+\ln(m+1)}(D) \geq \rev_1(D)$.
\end{theorem}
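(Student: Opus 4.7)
The plan is to combine the two corollaries already developed in this section, and then apply Bulow--Klemperer on each item to upgrade from $\srev$ to $\vcg$ in the regular case.

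For the first claim, I would invoke Corollary~\ref{cor:reducesingle} with $c$ equal to the threshold provided by Corollary~\ref{cor:lastsingle}: that threshold guarantees $X_S(1,c)$ stochastically dominates $X_L(1,m)$, and Corollary~\ref{cor:reducesingle} then yields $\srev_{1+c}(D) \geq \rev_1(D)$ immediately. Matching constants gives the claimed bound of $2+\ln(m+1)$ bidders for $\srev$. No work beyond chaining the two corollaries is needed here.

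For the second claim, since each $D_j$ is regular, I would apply Bulow--Klemperer~\cite{BulowK96} item by item. On any single item $j$, regularity gives $\rev_N(D_j) = \srev_N(D_j)$, and Bulow--Klemperer yields $\vcg_{N+1}(D_j) \geq \rev_N(D_j) = \srev_N(D_j)$. Because the items are independent and the bidders additive, both $\srev$ and $\vcg$ decompose as sums over items: $\srev_N(D) = \sum_j \srev_N(D_j)$ by definition, and $\vcg_N(D) = \sum_j \vcg_N(D_j)$ since VCG on additive bidders with independent items is just a second-price auction per item. Summing the per-item Bulow--Klemperer inequality over $j$ gives $\vcg_{N+1}(D) \geq \srev_N(D)$, and composing with the first claim at $N = 2+\ln(m+1)$ yields $\vcg_{3+\ln(m+1)}(D) \geq \rev_1(D)$.

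The main obstacle does not lie in this theorem at all, but in Corollary~\ref{cor:lastsingle}: the analytic content of the result is the stochastic dominance of $X_S(1,c)$ over $X_L(1,m)$ for $c$ only logarithmic in $m$, and this is what Section~\ref{sec:dominance} has to establish. Granting that, the present theorem is a short bookkeeping step combining Corollary~\ref{cor:reducesingle}, Corollary~\ref{cor:lastsingle}, and Bulow--Klemperer, with a one-bidder overhead incurred to pass from Myerson's per-item optimal mechanism to the second-price auction in the regular case.
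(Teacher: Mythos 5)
Your approach matches the paper's proof essentially verbatim: chain Theorem~\ref{thm:CDW}, Proposition~\ref{prop:singlemain}, Corollary~\ref{cor:reducesingle}, and Corollary~\ref{cor:lastsingle} to get the $\srev$ bound, then pass to $\vcg$ via item-wise Bulow--Klemperer in the regular case.

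One bookkeeping point worth flagging, however. Corollary~\ref{cor:reducesingle} gives $\srev_{1+c}(D) \geq \rev_1(D)$ under the dominance hypothesis, and Corollary~\ref{cor:lastsingle} only establishes that dominance for $c \geq 2+\ln(m+1)$. Chaining these therefore yields $\srev_{3+\ln(m+1)}(D) \geq \rev_1(D)$, not $\srev_{2+\ln(m+1)}(D) \geq \rev_1(D)$; this is also exactly what the general Theorem~\ref{thm:littlenmain} delivers at $n=1$, namely $\srev_{1 + (2+\ln(1+m))}$. So your remark that ``matching constants gives the claimed bound'' glosses over an off-by-one, though the same slip appears in the statement of Theorem~\ref{thm:single} itself, so you have faithfully reproduced the paper's reasoning including the discrepancy. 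Also a very minor nit in the $\vcg$ step: the per-item identity $\rev_N(D_j) = \srev_N(D_j)$ is Theorem~\ref{thm:Myerson} and holds for any single-variate $D_j$, not just regular ones; regularity is only what licenses Bulow--Klemperer.
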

\begin{proof}
Theorem~\ref{thm:CDW} upper bounds $\rev_1(D)$ with the EFFTW benchmark. Proposition~\ref{prop:singlemain} further upper bounds the EFFTW benchmark with $\sum_j \mathbb{E}[\overline{\varphi}_j(F^{-1}_j(X_L(1,m)))^+]$, which is the sum over all items of the expected virtual value of a quantile drawn from $X_L(1,m)$. Corollary~\ref{cor:reducesingle} argues that if $X_L(1,m)$ is stochastically dominated by $X_S(1,c)$ (the maximum of $c+1$ i.i.d. draws uniformly from $[0,1]$), then we may replace $X_L(1,m)$ with $X_S(1,c)$ in the upper bound, which is exactly $\srev_{1+c}(D)$. Finally, Corollary~\ref{cor:lastsingle} claims that indeed $X_S(1,c)$ stochastically dominates $X_L(1,m)$ when $c \geq 2+\ln(m+1)$ (and the final $+1$ when each $D_j$ is regular comes from going from $\srev$ to $\vcg$ using Bulow-Klemperer). 
\end{proof}

This concludes our exposition for a single bidder. Above we introduced one new idea which departs from prior work: instead of directly treating the benchmark which involves both values and virtual values, rewrite the benchmark to involve only virtual values and reduce the problem to purely mathematical questions about stochastic dominance of $X_L(1,m)$ and $X_S(1,c)$. 
\section{Multiple Bidders}\label{sec:multi}
In this section, we overview our approach for the general case. The key simplifying feature of the single-bidder case that allowed us to isolate one key idea is that for each item $j$, that item has the highest quantile or it doesn't. In the multi-bidder case, there are multiple bidders, some of whom will have their highest quantile for item $j$, some of whom won't. So we must actually engage with the ``$\max_{i \in [n]}$'' in the benchmark. Our approach will be different depending on whether $n$ is big or little relative to $m$. We begin with the little $n$ case as it is more similar to the single-bidder case.

\subsection{Part One: When $n \leq m$}
Our key step is conceptually similar to Proposition~\ref{prop:singlemain}, but the random variables involved are necessarily more complex. We first make the following observation (also made in~\cite{EdenFFTW17a}). Below, $v_{(\ell)j}$ denotes the $\ell^{th}$ highest value for item $j$ (among all bidders). All omitted proofs can be found in Appendix~\ref{app:multi}.

\begin{observation}\label{obs:EFFTW}
$$\mathbb{E}_{\vec{v} \leftarrow D^n} \left[\max_{i \in [n]} \left\{\overline{\varphi}_j({v_{ij}})^+\cdot \mathbb{I}(\vec{v}_i \in R_j)+{v_{ij}} \cdot \mathbb{I}(\vec{v}_i \notin R_j)\right\}\right]$$
$$\leq \mathbb{E}_{\vec{v} \leftarrow D^n} \left[\max \left\{v_{(1)j}\cdot \mathbb{I}(\vec{v}_{(1)} \notin R_j), \overline{\varphi}_j(v_{(1)j}), v_{(2)j} \right\} \right].$$
\end{observation}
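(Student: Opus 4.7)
The plan is to establish the inequality pointwise in $\vec{v}$ and then take expectations. Concretely, for every realization of $\vec{v} \in (\mathbb{R}_+^m)^n$ (including any randomness in the tie-breaking used to define the regions $R_j$), I will show that
\[
\max_{i \in [n]} \left\{\overline{\varphi}_j(v_{ij})^+\cdot \mathbb{I}(\vec{v}_i \in R_j)+v_{ij} \cdot \mathbb{I}(\vec{v}_i \notin R_j)\right\}
\leq
\max \left\{v_{(1)j}\cdot \mathbb{I}(\vec{v}_{(1)} \notin R_j),\, \overline{\varphi}_j(v_{(1)j}),\, v_{(2)j} \right\}.
\]
The core argument is a short case analysis on which bidder achieves the maximum on the left-hand side, split according to whether that bidder is the one holding the top value for item $j$ and whether that bidder lies in $R_j$.

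Let $i^*$ denote a bidder with $v_{i^*,j} = v_{(1)j}$, so that $\vec{v}_{i^*} = \vec{v}_{(1)}$, and for each bidder $i$ let $f_i$ denote that bidder's contribution to the left-hand side. First I would handle bidder $i^*$: if $\vec{v}_{(1)} \in R_j$ then $f_{i^*} = \overline{\varphi}_j(v_{(1)j})^+$, which I dominate by $\max\{\overline{\varphi}_j(v_{(1)j}),\, v_{(2)j}\}$ using $v_{(2)j} \geq 0$ to cover the subcase where $\overline{\varphi}_j(v_{(1)j})$ is negative; if instead $\vec{v}_{(1)} \notin R_j$ then $f_{i^*} = v_{(1)j}$, which is exactly the first term on the right-hand side. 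Next, for any $i \neq i^*$, the defining property $v_{ij} \leq v_{(2)j}$ handles the branch $\vec{v}_i \notin R_j$ immediately. For the remaining branch $\vec{v}_i \in R_j$, I would invoke the elementary inequality $\overline{\varphi}_j(v) \leq v$ for $v \geq 0$ to conclude $f_i = \overline{\varphi}_j(v_{ij})^+ \leq v_{ij} \leq v_{(2)j}$. In every subcase $f_i$ is bounded by one of the three terms inside the max on the right-hand side, so taking the max over $i$ and then expectation yields the claim.

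The only subtlety to watch is that the right-hand side carries $\overline{\varphi}_j(v_{(1)j})$ \emph{without} a positive-part operator, which is why the non-negative term $v_{(2)j}$ sitting inside the max matters: it is what rescues the bound whenever $\overline{\varphi}_j(v_{(1)j})$ turns out to be negative. The only analytic ingredient is the standard bound $\overline{\varphi}_j(v) \leq v$ for $v \geq 0$, which follows from $\varphi_j(v) = v - (1-F_j(v))/f_j(v) \leq v$ and the fact that ironing replaces $\varphi_j$ on a subinterval by a local average that remains at most $v$. Beyond bookkeeping this case split, I do not foresee any real obstacle; the argument is a deterministic, pointwise comparison with no coupling or probabilistic machinery required.
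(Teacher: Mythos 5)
Your proof is correct and takes essentially the same approach as the paper: a pointwise deterministic comparison of the two random variables inside the expectations, relying on $\overline{\varphi}_j(v) \leq v$ and $v_{(2)j} \geq 0$. The paper organizes the case split by whether $\vec{v}_{(1)} \in R_j$ rather than by which bidder achieves the max, but the ingredients and conclusion are identical.
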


Next, we want to rewrite the right-hand side above using random variables simliar to $X_L(1,m)$.
This time, let $X'_L(n,m)$ denote the following random variable: first, draw $n$ quantiles $X_{1,n},\ldots, X_{n,n}$ independently and uniformly at random from $[0,1]$. Relabel them so that $X_{(1),n} \geq \ldots \geq X_{(n),n}$. Then, draw $m-1$ quantiles uniformly at random from $[0,1]$ and label them $Y_{1,m-1}$ thru $Y_{m-1,m-1}$. If $X_{(1),n} > Y_{\ell, m-1}$ for all $\ell$, then set $X'_L(n,m) = X_{(1),n}$. Otherwise, let $\ell^*$ be a uniformly random element from $\{\ell | Y_{\ell, m-1} > X_{(1),n}\}$ and set $X'_L(n,m) = Y_{\ell^*, m-1}$. 
\begin{proposition}\label{prop:littlenmain} For all $D = \times_j D_j$, and all items $j$:
$$\mathbb{E}_{\vec{v} \leftarrow D^n} \left[\max \left\{v_{(1)j}\cdot \mathbb{I}(\vec{v}_{(1)} \notin R_j), \overline{\varphi}_j(v_{(1)j}), v_{(2)j} \right\} \right] \leq \mathbb{E}\left[\max\left\{\overline{\varphi}_j(F_j^{-1}(X'_L(n,m))),F_j^{-1}(X_{(2),n})\right\}\right].$$
\end{proposition}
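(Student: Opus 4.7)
The plan is to mirror the single-bidder argument of Proposition~\ref{prop:singlemain}, but case-split on $\vec{v}_{(1)}\in R_j$ and handle the new complication that $\overline{\varphi}_j$ (appearing on the left without a ``$(\cdot)^+$'') can strictly exceed the identity on an ironed interval. Specifically, I would couple the variables defining $X'_L(n,m)$ and $X_{(2),n}$ to the draw $\vec{v}\leftarrow D^n$ by setting $X_{i,n} := F_j(v_{ij})$ for each $i \in [n]$ (so that $X_{(1),n} = F_j(v_{(1)j})$ and $X_{(2),n} = F_j(v_{(2)j})$), and setting $\{Y_{\ell,m-1}\}_{\ell \in [m-1]}$ to be the quantiles on items $k\neq j$ of the bidder attaining $v_{(1)j}$. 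Using the randomized tie-breaking convention of Section~\ref{sec:duality}, these $n+(m-1)$ quantiles are i.i.d.\ uniform on $[0,1]$, and the event $\vec{v}_{(1)} \in R_j$ is exactly the event $X_{(1),n} > Y_{\ell,m-1}$ for all $\ell$.

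Under this coupling, the case $\vec{v}_{(1)} \in R_j$ is immediate: $X'_L(n,m) = X_{(1),n}$ by construction, so $F_j^{-1}(X'_L(n,m)) = v_{(1)j}$, the integrand on the right reduces to $\max\{\overline{\varphi}_j(v_{(1)j}), v_{(2)j}\}$, the integrand on the left reduces to $\max\{0, \overline{\varphi}_j(v_{(1)j}), v_{(2)j}\}$, and these agree because $v_{(2)j} \geq 0$. For the harder case $\vec{v}_{(1)} \notin R_j$, exactly as in Proposition~\ref{prop:singlemain}, conditional on $v_{(1)j}$ and on this event the variable $X'_L(n,m)$ is uniform on $[F_j(v_{(1)j}),1]$, so $F_j^{-1}(X'_L(n,m))$ has the law of $D_j$ conditioned on exceeding $v_{(1)j}$. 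I would then establish two separate lower bounds on $\mathbb{E}[\overline{\varphi}_j(F_j^{-1}(X'_L(n,m))) \mid v_{(1)j}, \vec{v}_{(1)} \notin R_j]$: Fact~\ref{fact:one} gives $\geq v_{(1)j}$, and monotonicity of $\overline{\varphi}_j$ applied to a random variable that is almost surely $\geq v_{(1)j}$ gives $\geq \overline{\varphi}_j(v_{(1)j})$. Since on this event the left integrand collapses to $\max\{v_{(1)j}, \overline{\varphi}_j(v_{(1)j})\}$ (using $v_{(1)j}\geq v_{(2)j}$) and the right integrand dominates its first argument pointwise, the two lower bounds together dominate the left integrand, and integrating finishes this case. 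Summing the two cases yields the proposition.

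The only step I anticipate as a real obstacle is this harder case, precisely because one cannot bound $\overline{\varphi}_j(v_{(1)j}) \leq v_{(1)j}$ pointwise in general (it fails on ironed intervals in irregular distributions). The resolution is the second of the two lower bounds above: monotonicity of $\overline{\varphi}_j$ guarantees that, regardless of whether Fact~\ref{fact:one} is tight, any draw stochastically above $v_{(1)j}$ has expected ironed virtual value at least $\overline{\varphi}_j(v_{(1)j})$. The only other mildly technical point is atoms in the $D_j$'s, which is handled by the randomized-$F_j$ convention from Section~\ref{sec:duality}.
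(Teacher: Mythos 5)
Your proposal is correct and follows essentially the same route as the paper: the same quantile coupling (the $n$ item-$j$ quantiles together with the top-$j$ bidder's quantiles on items $k \neq j$), the same case split on $\vec{v}_{(1)} \in R_j$, and the same use of Fact~\ref{fact:one} in the ``not in $R_j$'' case. One correction to your commentary: your worry that $\overline{\varphi}_j(v) \leq v$ could fail on ironed intervals is unfounded --- the ironed virtual value is the slope of the concave hull of the revenue curve $R(q) = q\,F_j^{-1}(1-q)$, and since $R(q') \leq q'\,F_j^{-1}(1-q)$ for $q' \geq q$ and $R(q') \geq q'\,F_j^{-1}(1-q)$ for $q' \leq q$, the hull's slope at $q$ is at most $F_j^{-1}(1-q)$, so $\overline{\varphi}_j(v) \leq v$ always; the paper in fact uses exactly this pointwise bound when collapsing the left integrand on the event $\vec{v}_{(1)}\notin R_j$ to $v_{(1)j}$, making your auxiliary monotonicity lower bound harmless but unnecessary.
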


Proposition~\ref{prop:littlenmain} helps us replace any instances of $v_{(1)n}$ in the benchmark with a randomly drawn virtual value, but we still need to do the same for $v_{(2)n}$ (which so far has essentially just been rewritten as $F_j^{-1}(F_j(v_{(2)n}))$). Now, let $W_{2,n}$ be a uniformly random draw from $[X_{(2), n},1]$, and define $X_L(n,m) = \max\{X'_L(n,m), W_{2,n}\}$. By making use of Fact~\ref{fact:one}, we can conclude:

\begin{corollary}\label{cor:littlenmain}
$$\mathbb{E}\left[\max\left\{\overline{\varphi}_j(F_j^{-1}(X'_L(n,m))),F_j^{-1}(X_{(2),n})\right\}\right] \leq \mathbb{E}\left[\overline{\varphi}_j(F_j^{-1}(X_L(n,m)))\right].$$
\end{corollary}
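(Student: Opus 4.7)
The plan is to establish the inequality pointwise after conditioning on $X'_L(n,m)$ and $X_{(2),n}$, then integrate. Since $W_{2,n}$ is a fresh uniform sample on $[X_{(2),n},1]$---hence conditionally independent of $X'_L(n,m)$ given $X_{(2),n}$---the only remaining randomness on the right-hand side after this conditioning is $W_{2,n}$ itself. I would split on which of the two terms in the LHS $\max$ is larger.

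First, consider the case $\overline{\varphi}_j(F_j^{-1}(X'_L(n,m))) \geq F_j^{-1}(X_{(2),n})$. Here the LHS integrand equals $\overline{\varphi}_j(F_j^{-1}(X'_L(n,m)))$. Because $X_L(n,m) = \max\{X'_L(n,m), W_{2,n}\} \geq X'_L(n,m)$ pointwise and $\overline{\varphi}_j \circ F_j^{-1}$ is monotone non-decreasing, the conditional expectation of the RHS integrand over $W_{2,n}$ is at least $\overline{\varphi}_j(F_j^{-1}(X'_L(n,m)))$, and we are done in this case.

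In the complementary case, the LHS integrand equals $F_j^{-1}(X_{(2),n})$. Using monotonicity once more, $\overline{\varphi}_j(F_j^{-1}(\max\{X'_L(n,m), W_{2,n}\})) \geq \overline{\varphi}_j(F_j^{-1}(W_{2,n}))$, so the RHS integrand is at least $\mathbb{E}_{W_{2,n}}[\overline{\varphi}_j(F_j^{-1}(W_{2,n})) \mid X_{(2),n}]$. The key observation is that because $W_{2,n}$ is uniform on $[X_{(2),n},1]$, the variable $F_j^{-1}(W_{2,n})$ has precisely the distribution of $D_j$ conditioned on exceeding $F_j^{-1}(X_{(2),n})$. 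Invoking Fact~\ref{fact:one} with $v = F_j^{-1}(X_{(2),n})$ then gives $\mathbb{E}_{W_{2,n}}[\overline{\varphi}_j(F_j^{-1}(W_{2,n})) \mid X_{(2),n}] \geq F_j^{-1}(X_{(2),n})$, matching the LHS.

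Combining both cases and integrating over $X'_L(n,m)$ and $X_{(2),n}$ completes the proof. I do not foresee a substantive obstacle: the only conceptually non-routine step is the application of Fact~\ref{fact:one}, and the only thing to verify carefully is the distributional identity that $F_j^{-1}(W_{2,n})$ is a draw from $D_j$ conditioned on exceeding $F_j^{-1}(X_{(2),n})$. This identity is exactly why $W_{2,n}$ was defined as uniform on $[X_{(2),n},1]$ in the first place; it is the multi-bidder analogue of the ``conditionally uniform on $[X_{1,1},1]$'' observation used in the single-bidder Proposition~\ref{prop:singlemain}.
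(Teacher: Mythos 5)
Your proof is correct and takes essentially the same approach as the paper: the paper packages your two-case analysis into an abstract helper (Fact~\ref{fact:basic}, applied with $X = \overline{\varphi}_j(F_j^{-1}(X'_L(n,m)))$, $Y = F_j^{-1}(X_{(2),n})$, $Y' = \overline{\varphi}_j(F_j^{-1}(W_{2,n}))$) and then notes $\max\{\overline{\varphi}_j(F_j^{-1}(X'_L)),\overline{\varphi}_j(F_j^{-1}(W_{2,n}))\} = \overline{\varphi}_j(F_j^{-1}(X_L))$ by monotonicity, but the underlying mechanism---condition on everything but $W_{2,n}$, split on which LHS term dominates, and in the nontrivial case invoke Fact~\ref{fact:one} via the identity that $F_j^{-1}(W_{2,n})$ is a draw from $D_j$ conditioned on exceeding $F_j^{-1}(X_{(2),n})$---is identical to yours.
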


Now, we are nearly ready to wrap up the $n \leq m$ case. Similarly to the single-bidder case, define $X_S(n,c)$ to be the maximum of $n+c$ i.i.d. draws uniformly form $[0,1]$. 

\begin{corollary}\label{cor:single}If $X_S(n,c)$ stochastically dominates $X_L(n,m)$, then $\srev_{n+c}(D) \geq \rev_n(D)$. If each $D_j$ is regular, then $\vcg_{n+c}(D) \geq \rev_n(D)$.
\end{corollary}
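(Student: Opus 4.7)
The plan is to mirror the structure of Corollary~\ref{cor:reducesingle}, simply replacing the single-bidder primitives with their $n$-bidder analogs and combining the chain of upper bounds already assembled in this section. First I would apply Theorem~\ref{thm:CDW}, then Observation~\ref{obs:EFFTW}, then Proposition~\ref{prop:littlenmain}, then Corollary~\ref{cor:littlenmain} term-by-term in $j$, to obtain the summary upper bound
\[
\rev_n(D) \;\leq\; \sum_{j=1}^m \mathbb{E}\!\left[\overline{\varphi}_j\!\left(F_j^{-1}(X_L(n,m))\right)\right].
\]
This collapses the whole EFFTW benchmark into an expectation of an ironed virtual value of the random quantile $X_L(n,m)$, one summand per item. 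On the other side, Theorem~\ref{thm:Myerson} gives
\[
\srev_{n+c}(D) \;=\; \sum_{j=1}^m \mathbb{E}\!\left[\overline{\varphi}_j\!\left(F_j^{-1}(X_S(n,c))\right)^{+}\right],
\]
because the maximum of $n+c$ i.i.d.\ samples from $D_j$ is distributed as $F_j^{-1}(X_S(n,c))$.

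The second step is to move from the quantile $X_L(n,m)$ to $X_S(n,c)$ by stochastic dominance. Since $F_j^{-1}$ is monotone non-decreasing and $\overline{\varphi}_j$ is monotone non-decreasing by construction, the composition $\overline{\varphi}_j \circ F_j^{-1}$ is monotone non-decreasing, so stochastic dominance of $X_S(n,c)$ over $X_L(n,m)$ implies the expectation inequality
\[
\mathbb{E}\!\left[\overline{\varphi}_j\!\left(F_j^{-1}(X_S(n,c))\right)\right] \;\geq\; \mathbb{E}\!\left[\overline{\varphi}_j\!\left(F_j^{-1}(X_L(n,m))\right)\right]
\]
for each $j$. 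Combining with the trivial bound $\mathbb{E}[Y^+] \geq \mathbb{E}[Y]$ for any random variable $Y$ and summing over $j$ yields $\srev_{n+c}(D) \geq \rev_n(D)$.

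For the regular case I would observe that $\overline{\varphi}_j = \varphi_j$ for every $j$, so the same upper bound reads $\rev_n(D) \leq \sum_j \mathbb{E}[\varphi_j(F_j^{-1}(X_L(n,m)))]$, while Theorem~\ref{thm:Myerson} gives $\vcg_{n+c}(D) = \sum_j \mathbb{E}[\varphi_j(F_j^{-1}(X_S(n,c)))]$ with no positive-part truncation. Regularity guarantees that $\varphi_j$ itself is monotone non-decreasing, so $\varphi_j \circ F_j^{-1}$ is monotone non-decreasing and the same stochastic-dominance argument goes through directly, giving $\vcg_{n+c}(D) \geq \rev_n(D)$ without needing a Bulow–Klemperer-style extra bidder. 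There is no substantive obstacle here: all the work has been done in assembling the benchmark bound and deferring regularity to the absence of the $(\cdot)^+$ truncation. The only thing to be careful about is confirming that the composition is monotone in each regime (with or without ironing), which is immediate from the definitions.
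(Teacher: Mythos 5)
Your proposal is correct and follows essentially the same path as the paper, which simply remarks that the proof is nearly identical to Corollary~\ref{cor:reducesingle} with the one change that the multi-bidder benchmark (via Corollary~\ref{cor:littlenmain}) carries no $(\cdot)^+$ truncation, so the regular case yields $\vcg_{n+c}(D)\geq \rev_n(D)$ directly without a Bulow--Klemperer bidder. You correctly assemble the chain Theorem~\ref{thm:CDW} $\rightarrow$ Observation~\ref{obs:EFFTW} $\rightarrow$ Proposition~\ref{prop:littlenmain} $\rightarrow$ Corollary~\ref{cor:littlenmain}, invoke monotonicity of $\overline{\varphi}_j\circ F_j^{-1}$ to translate stochastic dominance into an expectation inequality, and handle the $\srev$ case via $\mathbb{E}[Y^+]\geq\mathbb{E}[Y]$ --- all exactly the distinctions the paper's one-line proof gestures at.
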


Finally, Theorem~\ref{thm:littlen} claims that when $c \geq n \cdot (2+\ln(1+m/n))$, $X_S(n,c)$ indeed stochastically dominates $X_L(n,m)$. Combining Corollary~\ref{cor:single} with Theorem~\ref{thm:littlen} therefore concludes:

\begin{theorem}\label{thm:littlenmain} For all $D$ that are additive over $m$ independent items, $\srev_{n+n\cdot (2+\ln(1+m/n))}(D) \geq \rev_n(D)$. If each marginal of $D_j$ is regular, then $\vcg_{n+n\cdot (2+\ln(1+m/n))}(D) \geq \rev_n(D)$.
\end{theorem}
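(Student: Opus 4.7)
The plan is to chain together the machinery already developed in the excerpt, so the proof is essentially a bookkeeping exercise rather than a new construction. First I would apply Theorem~\ref{thm:CDW} to upper bound $\rev_n(D)$ by the EFFTW benchmark, which is a sum over items $j$ of $\mathbb{E}_{\vec{v}\leftarrow D^n}[\max_{i\in[n]} \{\overline{\varphi}_j(v_{ij})^+\cdot \mathbb{I}(\vec{v}_i\in R_j) + v_{ij}\cdot \mathbb{I}(\vec{v}_i\notin R_j)\}]$. Then, item by item, Observation~\ref{obs:EFFTW} replaces each such term with the simpler $\mathbb{E}_{\vec{v}\leftarrow D^n}[\max\{v_{(1)j}\cdot \mathbb{I}(\vec{v}_{(1)}\notin R_j), \overline{\varphi}_j(v_{(1)j}), v_{(2)j}\}]$, which isolates the contributions of the top two values for item $j$.

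Next I would invoke Proposition~\ref{prop:littlenmain} to rewrite this expression in terms of the random quantile variable $X'_L(n,m)$, followed by Corollary~\ref{cor:littlenmain} to absorb the residual $F_j^{-1}(X_{(2),n})$ term into a single virtual-value expression $\mathbb{E}[\overline{\varphi}_j(F_j^{-1}(X_L(n,m)))]$. At this point the entire benchmark has been rewritten purely as a sum over items of expected virtual values of the randomly drawn quantile $X_L(n,m)$, mirroring the single-bidder reduction of Section~\ref{sec:single} but lifted to the $n$-bidder setting.

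With the benchmark in this form, I would apply Corollary~\ref{cor:single} to translate any stochastic dominance of $X_L(n,m)$ by $X_S(n,c)$ into the revenue comparison $\srev_{n+c}(D)\geq \rev_n(D)$, and the analogous $\vcg$ comparison in the regular case, using monotonicity of $\overline{\varphi}_j$ and $F_j^{-1}$ (so compositions preserve stochastic order) together with Theorem~\ref{thm:Myerson} to identify $\sum_j \mathbb{E}[\overline{\varphi}_j(F_j^{-1}(X_S(n,c)))^+]$ with $\srev_{n+c}(D)$. Finally I would set $c = n(2+\ln(1+m/n))$ and cite Theorem~\ref{thm:littlen} to conclude that $X_S(n,c)$ stochastically dominates $X_L(n,m)$ at this value of $c$, which completes the chain.

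The bulk of the heavy lifting has been done in the preceding sections: the only potentially delicate step in the proof itself is ensuring that Corollary~\ref{cor:single} can be invoked cleanly (i.e., that the item-wise inequalities from Observation~\ref{obs:EFFTW}, Proposition~\ref{prop:littlenmain}, and Corollary~\ref{cor:littlenmain} actually sum into the desired aggregate bound, and that the reduction to stochastic dominance respects the ``$(\cdot)^+$'' in the virtual-value expectation). The genuine mathematical obstacle — the stochastic dominance claim $X_S(n, n(2+\ln(1+m/n))) \succeq X_L(n,m)$ — is deferred to Theorem~\ref{thm:littlen} and proved in Section~\ref{sec:dominance}, so within the proof of Theorem~\ref{thm:littlenmain} itself there is essentially nothing left to do beyond citing the chain of results in the correct order.
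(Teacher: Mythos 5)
Your proposal is correct and follows exactly the chain the paper uses: Theorem~\ref{thm:CDW} $\Rightarrow$ Observation~\ref{obs:EFFTW} $\Rightarrow$ Proposition~\ref{prop:littlenmain} $\Rightarrow$ Corollary~\ref{cor:littlenmain} $\Rightarrow$ Corollary~\ref{cor:single}, combined with Theorem~\ref{thm:littlen} at $c = n(2+\ln(1+m/n))$. You also correctly flag the one subtle point, that Corollary~\ref{cor:single} must handle the ``$(\cdot)^+$'' carefully; the paper's resolution is that after Observation~\ref{obs:EFFTW} the benchmark no longer carries the ``$^+$'', which is exactly why no extra ``$+1$'' (Bulow--Klemperer) bidder is needed in the regular/$\vcg$ case here, unlike in Theorem~\ref{thm:single}.
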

When $n \leq m$, this is tight up to constant factors, due to a lower bound of~\cite{FeldmanFR18} (see Appendix~\ref{sec:lbs} for the construction). But when $n \geq m$, this is still linear in $n$. We therefore provide an alternative argument in the following section which achieves the optimal (up to constant factors) competition complexity \emph{that is achievable starting from the EFFTW benchmark} of $\Theta(\sqrt{nm})$. 

\subsection{Part Two: When $n \geq m$}
At a high level, the main difference between how we should analyze the $n \leq m$ case and the $n \geq m$ is as follows: Observation~\ref{obs:EFFTW} immediately upper bounds the EFFTW by upper bounding $\overline{\varphi}_j(v_{(2)j})^+\cdot \mathbb{I}(\vec{v}_{(2)} \in R_j) + v_{(2)j} \cdot \mathbb{I}(\vec{v}_{(2)} \notin R_j)$ with $v_{(2)j}$. When $n \leq m$, this upper bound is unlikely to be much of a relaxation, because it's likely that $v_{(1)j} \notin R_j$ anyway. But when $n \gg m$, we're extremely unlikely to have $v_{(1)j} \notin R_j$, and this upper bound is wasteful. Indeed, this step is what limits the analysis in~\cite{EdenFFTW17a} to $\Omega(n)$. The first step for the $n \geq m$ case is to address this.

\begin{proposition}\label{prop:stepone}
For all items $j$, all $\ell \in [n]$, and all distributions $D$ that are additive over independent items:
$$\mathbb{E}_{\vec{v} \leftarrow D^n} \left[\max_{i \in [n]}\left\{\overline{\varphi}_j(v_{ij})^+ \cdot \mathbb{I}(\vec{v}_i \in R_j) + v_{ij} \cdot \mathbb{I}(\vec{v}_i \notin R_j)\right\} \right] \leq \mathbb{E}_{\vec{w} \leftarrow D_j^{n+(m-1)(\ell-1)}} \left[\max \left\{\overline{\varphi}_j(w_{(1)}), w_{(\ell)}\right\}\right].$$
\end{proposition}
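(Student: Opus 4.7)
The plan is to prove the inequality via a coupling that re-uses the ``other-item'' quantiles of the top $\ell-1$ bidders (sorted by $v_{ij}$) as additional uniform draws, combined with a sharpening of Observation~\ref{obs:EFFTW}. I would sample the $n$ bidders via i.i.d.\ uniform quantiles $q_{ik}$, set $v_{ik}=F_k^{-1}(q_{ik})$, and sort by $v_{ij}$ into $i_1,\ldots,i_n$ so that $v_{(k)j}=v_{i_k,j}$. The $N:=n+(m-1)(\ell-1)$ coupled uniforms $u_1,\ldots,u_N$ consist of $\{q_{ij}\}_{i\in[n]}$ together with $\{q_{i_k,k'}:k\in[\ell-1],\,k'\neq j\}$. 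Because each $q_{i,k'}$ with $k'\neq j$ is independent of the full collection $(q_{\cdot,j})$ that determines the sort, the $u_s$ are i.i.d.\ uniform, and $w_s:=F_j^{-1}(u_s)$ are $N$ i.i.d.\ draws from $D_j$ as on the right-hand side.

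I would then sharpen Observation~\ref{obs:EFFTW} to the per-realization bound
$\max_{i\in[n]}\{\overline{\varphi}_j(v_{ij})^+\mathbb{I}(\vec v_i\in R_j)+v_{ij}\mathbb{I}(\vec v_i\notin R_j)\}\leq \max\{\overline{\varphi}_j(v_{(1)j})^+,\,v_{(\min\{k^0,\ell\})j}\}$,
where $k^0$ is the smallest rank $k$ with $\vec v_{i_k}\notin R_j$ (and $k^0=\infty$ otherwise, with $v_{(\infty)j}=0$). The reason: bidders $i_1,\ldots,i_{k^0-1}$ are in $R_j$ and contribute at most $\overline{\varphi}_j(v_{(1)j})^+$ by monotonicity of $\overline{\varphi}_j$; every other bidder contributes at most $v_{(k^0)j}$ (when $k^0\leq \ell$) or $v_{(\ell)j}$ (when $k^0>\ell$).

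In the ``good'' case $k^0\geq \ell$, the inequality holds pointwise under the coupling. Since each of $i_1,\ldots,i_{\ell-1}$ is in $R_j$, every other-item quantile of $i_k$ satisfies $q_{i_k,k'}<q_{i_k,j}\leq q_{(1)j}$, so each extra $u_{n+1},\ldots,u_N$ is strictly less than $q_{(\ell-1)j}$. Hence $u_{(1)}=q_{(1)j}$ and $u_{(\ell)}\geq q_{(\ell)j}$, which gives $w_{(1)}\geq v_{(1)j}$ and $w_{(\ell)}\geq v_{(\ell)j}$. Combined with monotonicity of $\overline{\varphi}_j$ and the fact that $w_{(\ell)}\geq 0$ collapses the $(\cdot)^+$, this yields $\max\{\overline{\varphi}_j(w_{(1)}),w_{(\ell)}\}\geq \max\{\overline{\varphi}_j(v_{(1)j})^+,v_{(\ell)j}\}$ pointwise.

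The main obstacle is the ``bad'' case $k^0<\ell$, where the pointwise bound can fail and one must argue in expectation. Here I would condition on $k^0$, on $q_{(k^0)j}$, and on the $R_j$-membership pattern of $i_1,\ldots,i_{\ell-1}$, and apply Fact~\ref{fact:one}: conditional on $\vec v_{i_{k^0}}\notin R_j$ and on $q_{(k^0)j}$, at least one other-item quantile $q_{i_{k^0},k'}$ of $i_{k^0}$ exceeds $q_{(k^0)j}$, and $F_j^{-1}$ applied to such a ``witness'' quantile is distributed as a draw from $D_j$ conditioned on exceeding $v_{(k^0)j}$. Fact~\ref{fact:one} then bounds its expected ironed virtual value from below by $v_{(k^0)j}$, and because this witness is one of the $u_s$'s exceeding $q_{(k^0)j}$ it pushes $u_{(1)}$ and hence $\overline{\varphi}_j(w_{(1)})$ upward accordingly. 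The hardest subtlety is tracking how witness extras from several non-$R_j$ bidders in the top $\ell-1$ interact with $w_{(1)}$ and $w_{(\ell)}$ simultaneously; I expect the cleanest approach uses only the single smallest-rank witness (from $i_{k^0}$) to absorb the $v_{(k^0)j}$ term into $\overline{\varphi}_j(w_{(1)})$ via Fact~\ref{fact:one}, while the remaining extras boost the order statistic $w_{(\ell)}$ for free.
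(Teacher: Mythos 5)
Your proposal follows essentially the same route as the paper's proof: the same coupling (item-$j$ quantiles of all $n$ bidders plus the non-$j$ quantiles of the top $\ell-1$ bidders, reinterpreted as extra draws from $D_j$), the same case split on the smallest rank $k^0$ with $\vec v_{i_{k^0}}\notin R_j$ (the paper's $i^*$), a pointwise comparison in the case $k^0\geq\ell$, and the witness-quantile-plus-Fact~\ref{fact:one} argument (formalized in the paper via Fact~\ref{fact:basic}) to absorb $v_{(k^0)j}$ into $\mathbb{E}[\overline{\varphi}_j(w_{(1)})]$ when $k^0<\ell$. The only slip is cosmetic: in the good case you claim every extra is below $q_{(\ell-1)j}$ (what your chain actually gives is that it is below the corresponding $q_{(k)j}\leq q_{(1)j}$), but $u_{(\ell)}\geq q_{(\ell)j}$ holds unconditionally since $q_{(1)j},\ldots,q_{(\ell)j}$ all appear among the $u_s$, so nothing is affected.
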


We now want to take a simliar step to the previous case and replace $w_{(\ell)}$ with a randomly drawn virtual value using Fact~\ref{fact:one}. Here, define the random variable $X_B(n,\ell)$ as follows. First, draw $X_{1,n},\ldots, X_{n,n}$ independently and uniformly at random from $[0,1]$. Then, randomly draw $W_{\ell,n}$ uniformly from $[X_{(\ell),n},1]$, and set $X_B(n,\ell):= \max\{X_{(1),n},W_{\ell,n}\}$. 

\begin{lemma}\label{lem:w} For any single-dimensional distribution $D$, and any $n'$:
$$\mathbb{E}_{\vec{w} \leftarrow D^{n'}} \left[\max \left\{\overline{\varphi}_j(w_{(1)}), w_{(\ell)}\right\}\right] \leq \mathbb{E}\left[\overline{\varphi}_j(F_j^{-1}(X_B(n',\ell)))\right].$$
\end{lemma}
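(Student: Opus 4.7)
My plan is to couple the i.i.d.\ sample $\vec w$ to the quantiles defining $X_B(n',\ell)$ and then argue pointwise after conditioning. Set $w_i := F_j^{-1}(X_{i,n'})$, so that $w_{(1)} = F_j^{-1}(X_{(1),n'})$ and $w_{(\ell)} = F_j^{-1}(X_{(\ell),n'})$; this matches the LHS in distribution. All subsequent arguments will be made conditional on $(X_{1,n'},\ldots,X_{n',n'})$, under which the $w_i$'s are determined and $W_{\ell,n'}$ is uniform on $[X_{(\ell),n'},1]$.

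I then show that $\mathbb{E}[\overline{\varphi}_j(F_j^{-1}(X_B(n',\ell))) \mid \vec X]$ dominates each argument of the outer max separately. Since $X_B(n',\ell) = \max\{X_{(1),n'}, W_{\ell,n'}\} \geq X_{(1),n'}$ a.s., and $\overline{\varphi}_j \circ F_j^{-1}$ is monotone non-decreasing, we obtain $\overline{\varphi}_j(F_j^{-1}(X_B(n',\ell))) \geq \overline{\varphi}_j(w_{(1)})$ pointwise. Similarly, $X_B(n',\ell) \geq W_{\ell,n'}$ a.s., so monotonicity gives $\overline{\varphi}_j(F_j^{-1}(X_B(n',\ell))) \geq \overline{\varphi}_j(F_j^{-1}(W_{\ell,n'}))$ pointwise. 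Conditional on $\vec X$, $F_j^{-1}(W_{\ell,n'})$ is distributed as $D_j$ conditioned on exceeding $w_{(\ell)}$, i.e.\ as $D_{j,\geq w_{(\ell)}}$; Fact~\ref{fact:one} therefore yields $\mathbb{E}[\overline{\varphi}_j(F_j^{-1}(W_{\ell,n'})) \mid \vec X] \geq w_{(\ell)}$.

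Since both $\overline{\varphi}_j(w_{(1)})$ and $w_{(\ell)}$ are determined by $\vec X$, combining the two bounds gives
\[
\mathbb{E}\bigl[\overline{\varphi}_j(F_j^{-1}(X_B(n',\ell))) \mid \vec X\bigr] \;\geq\; \max\bigl\{\overline{\varphi}_j(w_{(1)}),\, w_{(\ell)}\bigr\}.
\]
Taking the outer expectation over $\vec X$ and applying the tower property on the left finishes the proof.

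The only subtle step is the second bullet, where I need to recognize $F_j^{-1}(W_{\ell,n'})$ (conditional on $\vec X$) as a sample from $D_{j,\geq w_{(\ell)}}$ so that Fact~\ref{fact:one} applies; this is the same ``upgrade a value to an expected ironed virtual value'' trick used in Proposition~\ref{prop:singlemain} and Corollary~\ref{cor:littlenmain}, with the only new wrinkle being that it is invoked at the $\ell$-th order statistic rather than at the top. Everything else is direct monotonicity plus the definition of $X_B(n',\ell)$, so I do not anticipate any serious technical obstacle.
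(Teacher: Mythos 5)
Your proof is correct and follows essentially the same approach as the paper: the same quantile coupling $w_i = F_j^{-1}(X_{i,n'})$, monotonicity of $\overline{\varphi}_j \circ F_j^{-1}$, and Fact~\ref{fact:one} applied to $W_{\ell,n'}$ at the $\ell$-th order statistic. The only cosmetic difference is that you argue the conditional-expectation bound $\mathbb{E}[\overline{\varphi}_j(F_j^{-1}(X_B(n',\ell))) \mid \vec X] \geq \max\{\overline{\varphi}_j(w_{(1)}), w_{(\ell)}\}$ directly by dominating each argument of the max, whereas the paper packages the identical reasoning as an invocation of its Fact~\ref{fact:basic} followed by the identity $\max\{\overline{\varphi}_j(F_j^{-1}(X_{(1),n'})), \overline{\varphi}_j(F_j^{-1}(W_{\ell,n'}))\} = \overline{\varphi}_j(F_j^{-1}(X_B(n',\ell)))$.
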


Corollary~\ref{cor:bignfinal} below follows from Proposition~\ref{prop:stepone} and Lemma~\ref{lem:w} with $n':= n+(m-1)(\ell-1)$. 

\begin{corollary}\label{cor:bignfinal} If $X_S(n,c)$ stochastically dominates $X_B(n+(\ell-1)(m-1),\ell)$, then for any $D$ that is additive over $m$ independent items, $\srev_{n+c}(D) \geq \rev_n(D)$. If each marginal $D_j$ is regular, then $\vcg_{n+c}(D) \geq \rev_n(D)$. 
\end{corollary}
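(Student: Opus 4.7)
The plan is simply to thread together the ingredients already developed in this section. Fix any item $j$. By Theorem~\ref{thm:CDW}, $\rev_n(D)$ is upper bounded by the sum over $j$ of the EFFTW per-item term. Applying Proposition~\ref{prop:stepone} with the chosen $\ell$ relaxes this per-item term to $\mathbb{E}_{\vec{w} \leftarrow D_j^{n+(m-1)(\ell-1)}}[\max\{\overline{\varphi}_j(w_{(1)}), w_{(\ell)}\}]$, and applying Lemma~\ref{lem:w} with $n' := n+(m-1)(\ell-1)$ further relaxes this to $\mathbb{E}[\overline{\varphi}_j(F_j^{-1}(X_B(n+(\ell-1)(m-1),\ell)))]$.

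Next, recall that $\overline{\varphi}_j$ is monotone non-decreasing by construction, and $F_j^{-1}$ is monotone non-decreasing as the (generalized) inverse of a CDF, so their composition from $[0,1]$ to $\mathbb{R}$ is monotone non-decreasing. The hypothesis that $X_S(n,c)$ stochastically dominates $X_B(n+(\ell-1)(m-1),\ell)$ therefore implies
\[
\mathbb{E}\bigl[\overline{\varphi}_j(F_j^{-1}(X_B(n+(\ell-1)(m-1),\ell)))\bigr] \leq \mathbb{E}\bigl[\overline{\varphi}_j(F_j^{-1}(X_S(n,c)))\bigr].
\]

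To finish, identify the right-hand side with the per-item Myerson revenue on $n+c$ bidders. By definition $X_S(n,c)$ is the maximum of $n+c$ i.i.d.\ uniform quantiles, so $F_j^{-1}(X_S(n,c))$ has the same distribution as $\max_{i \in [n+c]}\{v_{ij}\}$ with $v_{ij} \sim D_j$ i.i.d. Theorem~\ref{thm:Myerson} then gives $\mathbb{E}[\overline{\varphi}_j(F_j^{-1}(X_S(n,c)))^+] = \srev_{n+c}(D_j)$; combined with the trivial pointwise inequality $y \leq y^+$ and summed over $j$, this yields $\srev_{n+c}(D) \geq \rev_n(D)$. In the regular case $\overline{\varphi}_j = \varphi_j$, the same chain identifies the bound directly with $\vcg_{n+c}(D_j) = \mathbb{E}[\varphi_j(F_j^{-1}(X_S(n,c)))]$ (no $(\cdot)^+$ needed), giving $\vcg_{n+c}(D) \geq \rev_n(D)$.

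There is essentially no obstacle at this stage: every step is either cited from an earlier result or a one-line verification. The only subtlety worth flagging is that we must use the \emph{ironed} $\overline{\varphi}_j$ throughout (even in the irregular case), because the stochastic-dominance step requires the integrand to be monotone and only the ironed version is guaranteed to be so. All of the substantive combinatorial and probabilistic work has already been discharged by Proposition~\ref{prop:stepone} and Lemma~\ref{lem:w}, while the nontrivial analytic work---verifying the stochastic-dominance premise for the ranges of $c$ and $\ell$ of interest---is deferred to Section~\ref{sec:dominance}.
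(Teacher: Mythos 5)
Your proof is correct and is exactly the chain the paper intends: Theorem~\ref{thm:CDW} $\to$ Proposition~\ref{prop:stepone} $\to$ Lemma~\ref{lem:w} $\to$ stochastic dominance with the monotone map $\overline{\varphi}_j\circ F_j^{-1}$ $\to$ Theorem~\ref{thm:Myerson}, with the $y\leq y^+$ step for $\srev$ and the observation that in the regular case $\overline{\varphi}_j=\varphi_j$ lets you land directly on $\vcg_{n+c}$ with no Bulow--Klemperer conversion. The paper does not write out this corollary (it just cites the two ingredients and notes the analogy to Corollary~\ref{cor:single}), so your write-up is a faithful and complete expansion of the omitted argument.
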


Finally, Theorem~\ref{thm:bign} states that $X_S(n,c) = X_S(n+(\ell-1)(m-1),c-(\ell-1)(m-1))$ stochastically dominates $X_B(n+(\ell-1)(m-1),\ell)$ whenever $c-(\ell-1)(m-1) \geq \frac{4n + 4(\ell-1)(m-1)}{\ell-1}$. Setting $\ell:= \sqrt{n}{m}+1$, we get $c \geq \sqrt{nm} + 4\sqrt{nm} + 4(m-1)$. 

\begin{theorem}\label{thm:bignmain}
For all $D$ that are additive over $m$ independent items, $\srev_{n+5\sqrt{nm}+4(m-1)}(D) \geq \rev_n(D)$. If each marginal $D_j$ is regular, then $\vcg_{n+5\sqrt{nm} + 4(m-1)}(D) \geq \rev_n(D)$. In particular, if $n \geq m$, $5\sqrt{nm}+4(m-1) \leq 9\sqrt{nm}$. 
\end{theorem}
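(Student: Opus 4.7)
The plan is to assemble the theorem directly from the machinery already developed: Corollary~\ref{cor:bignfinal} tells us that whenever $X_S(n,c)$ stochastically dominates $X_B(n+(\ell-1)(m-1),\ell)$ for \emph{some} choice of $\ell$, we immediately get $\srev_{n+c}(D) \geq \rev_n(D)$ for all $D$ additive over $m$ independent items. So the proof reduces to choosing $\ell$ to minimize the value of $c$ guaranteed by the stochastic dominance statement of Theorem~\ref{thm:bign}.

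First I would write $c$ in two pieces corresponding to the reindexing of $X_S$: since $X_S(n,c) = X_S(n+(\ell-1)(m-1),\, c-(\ell-1)(m-1))$, Theorem~\ref{thm:bign} says the required dominance holds whenever $c-(\ell-1)(m-1) \geq \frac{4n+4(\ell-1)(m-1)}{\ell-1}$, i.e.
\[
c \;\geq\; (\ell-1)(m-1) \;+\; \frac{4n}{\ell-1} \;+\; 4(m-1).
\]
Second, I would optimize the bound over $\ell$. The first two terms on the right are of AM--GM form in $\ell-1$, so balancing $(\ell-1)(m-1) = 4n/(\ell-1)$ points to $\ell - 1 = 2\sqrt{n/(m-1)}$; for a clean constant, picking $\ell-1 = \sqrt{n/m}$ (or rather the nearest integer so that $\ell$ is a valid order statistic) makes $(\ell-1)(m-1) \leq \sqrt{nm}$ and $4n/(\ell-1) \leq 4\sqrt{nm}$, yielding the claimed $c = 5\sqrt{nm} + 4(m-1)$. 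Since Corollary~\ref{cor:bignfinal} requires $\ell$ to be a valid integer in $[n+(\ell-1)(m-1)]$, I would briefly note that rounding $\ell$ up only costs constant factors absorbed into the $5$ and $4$ (and that if $n < m$ then the little-$n$ theorem already gives a better bound, so we may assume $n \geq m$ wherever it helps).

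Third, for the VCG statement under regularity, I would appeal to the second half of Corollary~\ref{cor:bignfinal}, which directly gives $\vcg_{n+c}(D) \geq \rev_n(D)$ under the same stochastic dominance condition (this avoids losing an extra additive $+1$ from a separate Bulow--Klemperer invocation). Finally, the closing inequality is purely arithmetic: when $n \geq m$, one has $4(m-1) \leq 4\sqrt{m\cdot m} \leq 4\sqrt{nm}$, so $5\sqrt{nm} + 4(m-1) \leq 9\sqrt{nm}$.

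The only non-routine step is ensuring that Theorem~\ref{thm:bign}'s hypothesis is applied to the correct parameters (the first argument of $X_S$ must match the first argument of $X_B$, which is why we reindex $X_S(n,c)$ as $X_S(n+(\ell-1)(m-1),c-(\ell-1)(m-1))$); everything else is a one-line optimization in $\ell$. I do not anticipate an obstacle here, as the heavy lifting lives in Proposition~\ref{prop:stepone}, Lemma~\ref{lem:w}, and the stochastic dominance Theorem~\ref{thm:bign}, all of which are stated as available black boxes.
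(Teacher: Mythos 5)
Your proposal is correct and follows essentially the same route as the paper: invoke Corollary~\ref{cor:bignfinal}, reindex $X_S(n,c)$ as $X_S(n+(\ell-1)(m-1),\, c-(\ell-1)(m-1))$ so Theorem~\ref{thm:bign} applies, and set $\ell-1 \approx \sqrt{n/m}$ to balance $(\ell-1)(m-1)$ against $4n/(\ell-1)$, yielding $c = 5\sqrt{nm}+4(m-1)$. You are, if anything, more careful than the paper in flagging the integrality of $\ell$ (the paper silently sets $\ell-1=\sqrt{n/m}$), though your remark that rounding "only costs constant factors absorbed into the $5$ and $4$" is a little glib — rounding would strictly change the stated constants, so to defend the theorem as literally written one should either accept that the exhibited constants are for the idealized real-valued $\ell$ (as the paper implicitly does) or carry slightly larger constants.
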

\section{Stochastic Dominance via Additional Samples}\label{sec:dominance}
In this section, we consider purely questions about whether one distribution stochastically dominates another ({Sections~\ref{sec:single} and~\ref{sec:multi} outline} the connection between these problems and our main result). Recall the following ingredients in our experiments:
\begin{itemize}
\item $X_{1,n},\ldots, X_{n,n}$ are $n$ i.i.d. draws from the uniform distribution on $[0,1]$, and then relabeled so that $X_{(1),n} \geq \ldots \geq X_{(n),n}$. 
\item $Y_{1,m-1},\ldots, Y_{m-1,m-1}$ are $m-1$ i.i.d. draws from the uniform distribution on $[0,1]$, and then relabeled so that $Y_{(1),m-1} \geq \ldots \geq Y_{(m-1),m-1}$. 
\item $Z_{1,c},\ldots, Z_{c,c}$ are $c$ i.i.d. draws from the uniform distribution on $[0,1]$, and then relabeled so that $Z_{(1),c} \geq \ldots \geq Z_{(c),c}$.
\item $W_{\ell,n}$ is a random draw from the uniform distribution on $[X_{(\ell),n},1]$. 
\end{itemize}

\noindent \textsc{SRev Experiment($n,c$):} Output $\sre:= \max \{X_{(1),n},Z_{(1),c}\}$.\\

\noindent \textsc{Big $n$ Benchmark Experiment($n,\ell$):} Output $\bbm:= \max\{X_{(1),n},W_{\ell,n}\}$. \\

\noindent \textsc{Little $n$ Benchmark Experiment($n, m$):} Let $j^*$ be the largest index such that $Y_{(j^*),m-1} > X_{(1),n}$ (if such a $j^*$ exists). If no such $j^*$ exists, output $\lbm:= \max\{X_{(1),n},W_{2,n}\}$. Otherwise, pick an index $j$ uniformly at random from $\{1,\ldots, j^*\}$ and output $\max\{Y_{(j),m-1},W_{2,n}\}$. \\

The main results of this section are as follows:

\begin{theorem}\label{thm:bign} When $c \geq 4n/(\ell-1)$, $\sre$ stochastically dominates $\bbm$.
\end{theorem}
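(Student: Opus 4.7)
The plan is to reduce the stochastic dominance claim to an explicit one-variable inequality and then exploit log-convexity of a generating function in $\log t$. First I would compute both CDFs. For $X_S(n,c)$ this is immediate: $\Pr[X_S(n,c)\le t] = t^{n+c}$. For $X_B(n,\ell)$ I condition on $X_{(\ell),n}=x$ and use the standard fact that the $\ell-1$ order statistics strictly above $x$ are i.i.d.\ uniform on $[x,1]$, exactly like $W_{\ell,n}$. After the substitution $y=x/t$ inside the resulting integral this gives
\[
\Pr[X_B(n,\ell)\le t] \;=\; t^{n+1}\,J(t),\qquad J(t) \;:=\; \frac{n!}{(n-\ell)!(\ell-1)!}\int_0^1 \frac{(1-y)^{\ell}\,y^{\,n-\ell}}{1-t y}\,dy,
\]
so the claim is equivalent to the functional inequality $J(t)\ge t^{c-1}$ on $t\in(0,1]$.

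Next I would check the endpoints. Direct Beta integrals give $J(1)=1$ and $J'(1)=(n-\ell+1)/(\ell-1)$, so the inequality is tight at $t=1$ and a first-order expansion there already predicts the critical threshold $c-1\ge J'(1)$, i.e.\ $c\ge n/(\ell-1)$, which is comfortably implied by the hypothesis $c\ge 4n/(\ell-1)$. All other computations in this step are standard Beta-function manipulations.

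To globalize, expand $J(t)=\sum_{i\ge 0} a_i t^i$ with nonnegative coefficients $a_i$ (easily read off from the integrand), so that the tilted family $p_i(t)\propto a_i t^i$ is an exponential family with natural parameter $\eta=\log t$ and log-partition function $\log J(e^\eta)$. Standard exponential-family theory then gives
\[
\frac{t\,J'(t)}{J(t)} \;=\; \mathbb{E}_{p(t)}[I] \;=\; \frac{d}{d\eta}\log J(e^\eta),
\]
and its $\eta$-derivative equals $\mathrm{Var}_{p(t)}(I)\ge 0$, so $tJ'(t)/J(t)$ is non-decreasing in $t$ and is therefore bounded on $(0,1]$ by its value $J'(1)=(n-\ell+1)/(\ell-1)$ at $t=1$. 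Setting $f(t):=\log J(t)-(c-1)\log t$, we have $f(1)=0$ and $t f'(t) = tJ'(t)/J(t) - (c-1) \le 0$ whenever $c\ge n/(\ell-1)$, which the hypothesis implies with slack. Hence $f$ is non-increasing on $(0,1]$, giving $f(t)\ge f(1)=0$ and therefore $J(t)\ge t^{c-1}$, as required.

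The main obstacle is the monotonicity of $tJ'(t)/J(t)$. I would derive it from the exponential-family identity above; alternatively one can verify the equivalent inequality $tJ(t)J''(t)+J(t)J'(t)\ge tJ'(t)^2$ directly via Cauchy--Schwarz applied to the sequences $\{\sqrt{a_i}\,t^{i/2}\}$ and $\{i\sqrt{a_i}\,t^{i/2}\}$. Everything else reduces to routine order-statistic and Beta-integral computations, and the factor of $4$ in the hypothesis is substantial slack compared to the true threshold $c\ge n/(\ell-1)$ that the approach extracts.
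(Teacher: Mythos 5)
Your proof is correct, and it takes a genuinely different (and cleaner) route than the paper. The paper proves Theorem~\ref{thm:bign} by first establishing Proposition~\ref{prop:key} (a statement about conditional tail probabilities given $X_{(1),n}<p$), and that proposition is proved by induction on $n$ and $\ell$: a base case at $\ell=2$, then an inductive step driven by differentiating $\Pr[W_{\ell,n}>p \wedge X_{(1),n}<p]$ via Leibniz' rule, a recursion $\Pr[W_{\ell,n}>p\mid X_{(1),n}=p]=\Pr[W_{\ell-1,n-1}>p\mid X_{(1),n-1}<p]$, Jensen's inequality, and a sign-of-derivative/contradiction argument. You instead integrate out all the conditioning at once: using the fact that, given $X_{(\ell),n}=x$, the $\ell-1$ top order statistics and $W_{\ell,n}$ are $\ell$ conditionally i.i.d.\ uniforms on $[x,1]$, you compute $\Pr[X_B(n,\ell)\le t]=t^{n+1}J(t)$ in closed form, reduce stochastic dominance to the one-variable inequality $J(t)\ge t^{c-1}$ on $(0,1]$, and settle it using the fact that a power series with nonnegative coefficients has $tJ'(t)/J(t)$ non-decreasing (equivalently, is log-convex in $\log t$), so the logarithmic derivative is globally bounded by its value at $t=1$, namely $J'(1)=(n-\ell+1)/(\ell-1)$. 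This eliminates the induction entirely, and in fact proves the stronger statement that $c\ge n/(\ell-1)$ already suffices — a factor-of-$4$ improvement over the theorem's hypothesis — so the claimed bound follows trivially. (Two small points worth spelling out in a full write-up: that $J$ and $J'$ extend continuously to $t=1$ for $\ell\ge 2$ so that $g(t)=tJ'(t)/J(t)\le g(1)$ is legitimate, and that the $\ell=1$ case is vacuous since $J'(1)=\infty$ there, consistent with the hypothesis becoming $c\ge\infty$.) The paper's inductive route is more local and self-contained in its probabilistic reasoning, while your route is shorter, reveals where the threshold actually comes from, and suggests the constant $4$ in the paper is an artifact of the induction rather than the true behavior.
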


\begin{theorem}\label{thm:littlen} When $c \geq n\cdot(2+\ln(1+m/n))$, $\sre$ stochastically dominates $\lbm$.
\end{theorem}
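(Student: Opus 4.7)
The target is to show $\Pr[X_L(n,m) \leq t] \geq t^{n+c}$ for all $t \in [0,1]$ whenever $c \geq n(2+\ln(1+m/n))$. The plan is to exploit the fact that $X_L(n,m) = \max\{V, W_{2,n}\}$---where $V$ equals $X_{(1),n}$ if no $Y_{\ell,m-1}$ exceeds $X_{(1),n}$ and otherwise is a uniformly chosen exceeding $Y$---and that $V$ and $W_{2,n}$ are conditionally independent given the top two $X$-order statistics. This structure lets me invoke a correlation inequality and then bound two cleaner factors separately.

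First I will compute the conditional CDFs directly. Given $X_{(1),n}=x$, $V$ equals $x$ with probability $x^{m-1}$ and is uniform on $[x,1]$ otherwise, which after simplification gives
\[
p(x,t) := \Pr[V \leq t \mid X_{(1),n}{=}x] \;=\; 1 - (1-t)(1+x+\cdots+x^{m-2}) \quad \text{for } x \leq t,
\]
and similarly $q(y,t) := \Pr[W_{2,n} \leq t \mid X_{(2),n}{=}y] = (t-y)/(1-y)$ for $y \leq t$ (both zero otherwise). Hence $\Pr[X_L(n,m) \leq t] = \mathbb{E}[p(X_{(1),n},t)\,q(X_{(2),n},t)]$. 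Both $p(\cdot,t)$ and $\hat{q}(x,t) := \mathbb{E}[q(X_{(2),n},t) \mid X_{(1),n}{=}x]$ are non-increasing in $x$ on $[0,t]$: the first by inspection, and the second because $X_{(2),n} \mid X_{(1),n}{=}x$ is distributed as $x$ times a fixed $[0,1]$-valued variable (density $(n-1)u^{n-2}$) while $q(y,t)$ is decreasing in $y$. Chebyshev's sum inequality, applied to the conditional distribution of $X_{(1),n}$ given $X_{(1),n} \leq t$, then yields
\[
\Pr[X_L(n,m) \leq t] \;\geq\; \frac{\Pr[X_L'(n,m) \leq t]\cdot\Pr[\max\{X_{(1),n},W_{2,n}\} \leq t]}{t^n}.
\]

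It therefore suffices to establish two separate dominance bounds: (a) $\Pr[X_L'(n,m) \leq t] \geq t^{n+c_1}$ for some $c_1 \approx n\ln(1+m/n)$, and (b) $\Pr[\max\{X_{(1),n},W_{2,n}\} \leq t] \geq t^{n+c_2}$ for some $c_2$ of order $n$; combining them with the correlation bound delivers the theorem provided $c_1+c_2 \leq n(2+\ln(1+m/n))$. Direct integration against the density of $X_{(1),n}$ gives the explicit form $\Pr[X_L'(n,m) \leq t] = t^n\bigl(1 - n(1-t)\sum_{k=0}^{m-2} t^k/(n+k)\bigr)$, so (a) reduces to the polynomial inequality $\sum_{j=0}^{c_1-1} t^j \geq n\sum_{k=0}^{m-2} t^k/(n+k)$ on $[0,1]$. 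For (b), integrating against the joint density of $(X_{(1),n},X_{(2),n})$ produces a closed form whose near-$t{=}1$ expansion is $1 - 2n(1-t) + O((1-t)^2)$, matching $c_2$ of order $n$ at the endpoint (and easy to extend to all $t$).

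The hardest step is the polynomial inequality in (a). Both sides equal $1$ at $t=0$ and at $t=1$ when $c_1 = n(H_{n+m-2}-H_{n-1})$, and the coefficients of $t^k$ have opposite signs for $k \in \{c_1,\ldots,m-2\}$, so a naive term-by-term comparison fails. The plan is to rewrite the inequality in surplus--deficit form $\sum_{k=0}^{c_1-1} \tfrac{k\,t^k}{n+k} \geq n\sum_{k=c_1}^{m-2} \tfrac{t^k}{n+k}$ and close the gap either by bounding the deficit tail geometrically as $\leq \tfrac{n}{n+c_1}\cdot \tfrac{t^{c_1}}{1-t}$ and matching it against the low-$k$ surplus, or by induction on $m$ anchored at the trivial case $m = c_1+1$ (where the inequality holds coefficient-wise). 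The main care is in ensuring the additive constant absorbed into the $(2+\ln(1+m/n))$ bound is tight enough to survive both sub-problems.
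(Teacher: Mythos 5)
Your factorization architecture is a genuinely different route from the paper's. The paper decomposes $\Pr[X_L(n,m) > p]$ into a sum of tail contributions and handles the positive-correlation step by an explicit coupling (Lemma~\ref{lem:correlation}), then recombines via a product of tail bounds. You instead condition on $X_{(1),n}\leq t$, observe that both $p(\cdot,t)$ and $\hat q(\cdot,t)$ are non-increasing functions of that common variable, and invoke Chebyshev's sum inequality to factor the CDF directly. That is cleaner conceptually and the tower-property reduction, the conditional-independence claim given $(X_{(1),n},X_{(2),n})$, and the monotonicity verifications are all correct. Sub-problem~(b) is easy: conditioning on $X_{(1),n}\leq t$ can only lower $X_{(2),n}$, hence raise $W_{2,n}$'s chance of being below $t$, and unconditionally $W_{2,n}$ is distributed as the max of $n$ uniforms, so $\Pr[W_{2,n}\leq t\mid X_{(1),n}\leq t]\geq t^n$ and $c_2=n$ suffices.

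The gap is in sub-problem~(a). The polynomial inequality you arrive at,
\[
\sum_{j=0}^{c_1-1} t^{j} \;\geq\; n\sum_{k=0}^{m-2}\frac{t^{k}}{n+k},
\]
is exactly the inequality the paper proves in Proposition~\ref{prop:littlen}, and you do not close it. Of your two proposed repairs, the geometric tail bound
$\sum_{k\geq c_1}\frac{n}{n+k}t^{k}\leq\frac{n}{n+c_1}\cdot\frac{t^{c_1}}{1-t}$
diverges as $t\to 1$ while the surplus $\sum_{j<c_1}\frac{j}{n+j}t^{j}$ stays bounded, so it cannot possibly dominate uniformly; and the induction on $m$ is left unspecified. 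The paper's move here is worth internalizing: telescope $(1-t)\sum_k n t^k/(n+k)$ (Abel summation) so the right-hand side becomes $1$ minus a sum whose coefficients are nonnegative and sum to $1$, then apply Jensen's inequality to $x\mapsto t^{x}$ to pull the right-hand side up to $t^{\E[A]}$, and finally bound $\E[A]\leq n(1+\ln(1+m/n))$. That converts the hard polynomial inequality into a one-line convexity estimate and yields $c_1 = n(1+\ln(1+m/n))$, which combines with $c_2 = n$ to give the stated bound. Without that (or an equivalent argument for~(a)), your proposal is an incomplete proof.
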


Intuitively, we might expect $X_S(n,c)$ to stochastically dominate $X_B(n,\ell)$ right around $c = 2n/\ell$. This is because $\mathbb{E}[Z_{(1),c}] = 1-1/(c+1)$,
and $\mathbb{E}[W_{\ell,n}] = 1-\frac{\ell}{2(n+1)}$. Of course, this observation doesn't come close to proving stochastic dominance, especially because $X_{(1),n}$ and $W_{\ell,n}$ aren't independent. But it does give us an idea of the right ballpark to shoot for. The following proposition will be used in the proof of both theorems.

\begin{proposition}\label{prop:key}Let $c \geq 4n/(\ell-1)$. Then for all $p$, $\Pr[Z_{(1),c} > p | X_{(1),n} < p] \geq \Pr[W_{\ell,n} > p | X_{(1),n} < p]$. When $\ell =2$, this can be improved to $c \geq n$. 
\end{proposition}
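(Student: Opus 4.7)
The plan is to reduce the stated inequality to an application of Jensen's inequality. The left-hand side is immediate: $Z_{(1),c}$ is independent of the $X$ variables, so $\Pr[Z_{(1),c} > p \mid X_{(1),n} < p] = 1 - p^c$. For the right-hand side, I would condition on $X_{(1),n} < p$, under which $X_{1,n},\ldots,X_{n,n}$ are i.i.d.\ uniform on $[0,p]$, and set $V := X_{(\ell),n}/p$, so that $V \sim \mathrm{Beta}(n-\ell+1,\ell)$. Since $W_{\ell,n}$ conditional on $X_{(\ell),n}=x$ is uniform on $[x,1]$, one obtains $\Pr[W_{\ell,n} > p \mid X_{(\ell),n}=x,\,X_{(1),n} < p] = (1-p)/(1-x)$, which gives
$$\Pr[W_{\ell,n} > p \mid X_{(1),n} < p] = (1-p)\,\E\!\left[\frac{1}{1-pV}\right].$$

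Next I would expand $1/(1-pV) = \sum_{k\ge 0}(pV)^k$, take expectations, and telescope to obtain
$$(1-p)\,\E\!\left[\frac{1}{1-pV}\right] = 1 - \sum_{k=1}^{\infty} p^k f_k, \qquad f_k := \E[V^{k-1}] - \E[V^k] \;\geq\; 0.$$
Since $\E[V^k] \to 0$ as $k \to \infty$, the telescope yields $\sum_{k\geq 1} f_k = 1$, so $(f_k)_{k\geq 1}$ is a probability distribution on the positive integers; let $K$ denote a random variable with this law. The target inequality then becomes
$$\E[p^K] \;\geq\; p^c.$$

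The key step is Jensen's inequality. For $p \in (0,1)$ the function $x \mapsto p^x$ is convex (second derivative $(\ln p)^2 p^x \geq 0$), so $\E[p^K] \geq p^{\E[K]}$; and since $p^x$ is decreasing in $x$ on this range, it suffices to prove $\E[K] \leq c$. A short telescoping calculation gives $\E[K] = 1 + \sum_{j\geq 1}\E[V^j] = 1 + \E[V/(1-V)]$, and the standard Beta-prime identity $\E[V/(1-V)] = (n-\ell+1)/(\ell-1)$ for $V \sim \mathrm{Beta}(n-\ell+1,\ell)$ (valid since $\ell \geq 2$) evaluates to $\E[K] = n/(\ell-1)$. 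The inequality therefore holds whenever $c \geq n/(\ell-1)$, which is certainly implied by the hypothesis $c \geq 4n/(\ell-1)$; and for $\ell=2$ one gets $\E[K] = n$, recovering the improved bound stated in the proposition.

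The main obstacle is spotting the reduction: once the right-hand side has been rewritten as $1 - \E[p^K]$ for a probability distribution $K$ with mean available in closed form, Jensen's inequality concludes the argument immediately. The remaining work is routine algebra with Beta moments, plus a standard dominated-convergence justification for interchanging sum and expectation (using $V < 1$ almost surely). I note in passing that this argument actually delivers the tighter constant $c \geq n/(\ell-1)$ for every $\ell \geq 2$, not merely for $\ell=2$, though the looser $4n/(\ell-1)$ is all that the proposition asks for.
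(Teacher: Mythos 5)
Your proof is correct, and it takes a genuinely different route from the paper, which for this proposition uses a somewhat involved induction on $(n,\ell)$: the base case $\ell=2$ is handled by observing that $W_{2,n}$ (unconditioned) is distributed as the max of $n$ uniforms, and the inductive step differentiates $\Pr[W_{\ell,n} > p \wedge X_{(1),n} < p]$ in $p$, invokes Leibniz' rule twice, and shows the derivative has the right sign whenever the target inequality is momentarily violated. By contrast, you compute the conditional law directly: conditioned on $X_{(1),n}<p$ the scaled order statistic $V = X_{(\ell),n}/p$ is $\mathrm{Beta}(n-\ell+1,\ell)$, and $\Pr[W_{\ell,n}>p\mid X_{(1),n}<p]=(1-p)\,\E[1/(1-pV)]$ expands into $1-\E[p^K]$ for an integer-valued $K$ with explicit law, at which point Jensen in the convex function $x\mapsto p^x$ reduces everything to $\E[K]\le c$, and the Beta-prime moment gives $\E[K]=n/(\ell-1)$ exactly. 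Amusingly, this is essentially the same Jensen-in-the-exponent device that the paper \emph{does} deploy in its proof of Proposition~\ref{prop:littlen} (and inside Lemma~\ref{lem:derivs}), but the authors apparently did not notice it applies directly here as well. Your argument is shorter, avoids the recursion entirely, and delivers the sharper threshold $c\ge n/(\ell-1)$ uniformly in $\ell\ge 2$ — recovering both the paper's $\ell=2$ improvement and a factor-$4$ improvement for $\ell>2$, which would in turn shave constants from Theorems~\ref{thm:bign} and~\ref{thm:bignmain}. The only point worth stating explicitly (you flag it) is the dominated/monotone convergence needed to interchange $\sum$ and $\E$, which is unproblematic since $0\le pV\le p<1$ and $\E[V/(1-V)]<\infty$ because $\ell\ge 2$.
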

Before getting into the proof, let's unpack the role of conditioning on $X_{(1),n}$. $Z_{(1),c}$ and $X_{(1),n}$ are independent, so $\Pr[Z_{(1),c} > p | X_{(1),n} < p] = \Pr[Z_{(1),c} > p]$. On the other hand, $W_{\ell,n}$ and $X_{(1),n}$ are \emph{positively} correlated: the lower bound on the range from which $W_{\ell,n}$ is drawn is $X_{(\ell),n}$, which is positively correlated with $X_{(1),n}$. So certainly if we could prove the lemma without the conditioning on $X_{(1),n} < p$, the desired proposition would hold. This approach works for $\ell=2$ (and indeed shows up in our proof as a base case), but without conditioning the conclusion is otherwise false for larger $\ell$. 

\begin{proof}
The proof will proceed by induction on $n, \ell$. We begin with the base case, $\ell = 2$. $Z_{(1),c}$ is easy to reason about: $Z_{(1),c}$ is just the maximum of $c$ i.i.d. draws uniformly from $[0,1]$. So:
\begin{equation}\label{eq:z1} \Pr[Z_{(1),c} > p | X_{(1),n} < p] = \Pr[Z_{(1),c} > p] = 1-p^c.
\end{equation}

Now we turn to $W_{2,n}$. As referenced in the foreword to the proof, for this case the proposition statement holds even without conditioning on $X_{(1),n} < p$. Indeed, observe that without conditioning on $X_{(1),n} < p$, $X_{(2),n}$ is just the second-highest of $n$ i.i.d. draws uniformly from $[0,1]$, and $W_{2,n}$ is drawn uniformly from $[0,1]$, but conditioned on exceeding $X_{(2),n}$. That is, $W_{2,n}$ is actually identically distributed to $X_{(1),n}$, and is distributed according to the maximum of $n$ i.i.d. draws uniformly from $[0,1]$. Therefore, when $c = n$, $W_{2,n}$ is identically distributed to $Z_{(1),c}$, and the conclusion holds. That is:
\begin{equation} \Pr[W_{2,n} > p | X_{(1),n} < p] \leq \Pr[W_{2,n} > p] = 1-p^n.
\end{equation}

As such, we have proved the base case (in fact, a slightly stronger claim): for all $n$, and $\ell = 2$ when $c \geq n=n/(\ell-1)$, $Z_{(1),c}$ stochastically dominates $W_{2,n}$. Now we turn to the inductive step, which is significantly more involved. As referenced in the foreword, we must take a different approach for larger $\ell$, as the desired claim is false if we remove conditioning on $X_{(1),n} < p$. 

To this end, we'll first observe that when $p=1$, $\Pr[Z_{(1),c} > 1] = \Pr[W_{\ell,n} > 1 | X_{(1),n} < 1] = 0$, and when $p \rightarrow 0$, $\Pr[Z_{(1),c} > p] = \Pr[W_{\ell,n} > p | X_{(1),n} < p] = 1$. So the desired inequalities hold at both endpoints of $[0,1]$, and we'd like to reason about $p \in (0,1)$. To accomplish this, it will actually be easier to compare $\Pr[Z_{(1),c} > p]\cdot \Pr[X_{(1),n} < p]$ to $\Pr[W_{\ell,n} > p \wedge X_{(1),n} < p]$ (observe that this simply multiplies both conditional probabilities in our original comparison by $\Pr[X_{(1),n} < p]$), and consider the derivative with respect to $p$.

So let $f_{1,n}(\cdot)$ denote the density of $X_{(1),n}$. Then $\Pr[W_{\ell,n} > p \wedge X_{(1),n} < p] = \int_0^p f_{1,n}(q) \cdot \Pr[W_{\ell,n} > p | X_{(1),n} = q] dq$. By Leibniz' rule, the derivative of this with respect to $p$ is:
\begin{align*}
&\frac{\partial \Pr[W_{\ell,n} > p \wedge X_{(1),n} < p]}{\partial p}\\
 =& \frac{ \partial \int_0^p f_{1,n}(q) \cdot  \Pr[W_{\ell,n} > p | X_{(1),n} = q]dq}{\partial p} \\
=& f_{1,n}(p) \cdot \Pr[W_{\ell,n} > p | X_{(1),n} = p] + \int_0^p f_{1,n}(q) \cdot \frac{\partial \Pr[W_{\ell,n} > p | X_{(1),n} = q]}{\partial p}dq. \end{align*}

Let's first unpack the left-most term with the following lemma.

\begin{lemma}\label{lem:recurse} For all $\ell, n > 1$, $\Pr[W_{\ell,n} > p | X_{(1),n} = p] = \Pr[W_{\ell-1, n-1} > p | X_{(1),n-1} < p]$. 

\end{lemma}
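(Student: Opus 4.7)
The plan is to prove both sides equal the same explicit expectation by reducing each to a computation about the same underlying distribution on the lower endpoint of the uniform interval defining $W$.

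First I would use the standard order-statistics fact: conditioned on $X_{(1),n} = p$, the remaining $n-1$ values $X_{(2),n},\ldots,X_{(n),n}$ are distributed as the order statistics of $n-1$ i.i.d.\ $\mathrm{Unif}[0,p]$ draws. Hence, conditioned on $X_{(1),n}=p$, the random variable $X_{(\ell),n}$ equals in distribution $p\cdot U_{(\ell-1),n-1}$, where $U_{(\ell-1),n-1}$ is the $(\ell-1)$-th largest of $n-1$ i.i.d.\ $\mathrm{Unif}[0,1]$ samples. By definition of $W_{\ell,n}$, conditional on $X_{(1),n}=p$, $W_{\ell,n}$ is therefore a uniform draw on $[p\cdot U_{(\ell-1),n-1},1]$, giving
\[
\Pr[W_{\ell,n} > p \mid X_{(1),n}=p] \;=\; \mathbb{E}\!\left[\frac{1-p}{1-p\cdot U_{(\ell-1),n-1}}\right].
\]

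Next I would do the analogous computation for the right-hand side. Conditioned on $X_{(1),n-1}<p$, all $n-1$ draws are i.i.d.\ $\mathrm{Unif}[0,p]$, so $X_{(\ell-1),n-1}$ (the $(\ell-1)$-th largest of these) is again distributed as $p\cdot U_{(\ell-1),n-1}$ with the same $U_{(\ell-1),n-1}$ as above. So $W_{\ell-1,n-1}$, which by definition is uniform on $[X_{(\ell-1),n-1},1]$, is conditionally uniform on $[p\cdot U_{(\ell-1),n-1},1]$ as well, and
\[
\Pr[W_{\ell-1,n-1} > p \mid X_{(1),n-1}<p] \;=\; \mathbb{E}\!\left[\frac{1-p}{1-p\cdot U_{(\ell-1),n-1}}\right].
\]

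Both probabilities equal the same expectation, so the lemma follows. The one subtlety to be careful about is the indexing: on the left we condition on $X_{(1),n}=p$, which kills one of the $n$ draws and leaves $n-1$ samples bounded above by $p$, whose $(\ell-1)$-th order statistic coincides with $X_{(\ell),n}$; on the right we have $n-1$ samples bounded above by $p$, whose $(\ell-1)$-th order statistic is $X_{(\ell-1),n-1}$. Once this bookkeeping is laid out, there is no real obstacle — the key fact doing the work is just the scaling/conditioning property of uniform order statistics, and no inequality is needed (the lemma is an equality).
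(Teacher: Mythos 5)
Your proof is correct and follows essentially the same route as the paper: both observe that, conditioned on $X_{(1),n}=p$, the variable $X_{(\ell),n}$ is the $(\ell-1)$-th largest of $n-1$ i.i.d.\ $\mathrm{Unif}[0,p]$ draws, which is identically distributed to $X_{(\ell-1),n-1}$ conditioned on $X_{(1),n-1}<p$, and then note that $W$ is just a uniform draw above that order statistic. You additionally write out the resulting expectation $\mathbb{E}\bigl[(1-p)/(1-p\,U_{(\ell-1),n-1})\bigr]$ explicitly, which the paper keeps implicit, but the underlying argument is the same.
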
 
\begin{proof}
Observe that, conditioned on $X_{(1),n} = p$, $X_{(2),n},\ldots, X_{(n),n}$ are $n-1$ (sorted) i.i.d. draws uniformly at random from $[0,p]$, and $X_{(\ell),n}$ is the $(\ell-1)^{th}$ highest of them. Put another way, $X_{(\ell),n}$ conditioned on $X_{(1),n} = p$ is \emph{identically distributed} to $X_{(\ell-1),n-1}$ conditioned on $X_{(1),n-1} < p$. This therefore implies that $W_{\ell,n}$ conditioned on $X_{(1),n} = p$ and $W_{\ell-1,n-1}$ conditioned on $X_{(1),n-1} < p$ are identically distributed as well.
\end{proof}

Now we turn to the right-most term.

\begin{lemma}\label{lem:rightterm}
For all $p, \ell, n, q$, $\frac{\partial \Pr[W_{\ell,n}> p | X_{(1),n} = q]}{\partial p} = -\Pr[W_{\ell,n} > p | X_{(1),n} = q]/(1-p)$.
\end{lemma}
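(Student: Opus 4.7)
The plan is to exhibit $\Pr[W_{\ell,n} > p | X_{(1),n} = q]$ as $(1-p)$ times a quantity that does not depend on $p$, from which the stated derivative identity is immediate. The key step is a tower-style conditioning on $X_{(\ell),n}$: given $X_{(1),n} = q$, the order statistic $X_{(\ell),n}$ is supported on $[0,q]$, and conditional further on $X_{(\ell),n} = x$ with $x \leq p$, the variable $W_{\ell,n}$ is uniform on $[x,1]$, so $\Pr[W_{\ell,n} > p | X_{(\ell),n} = x] = (1-p)/(1-x)$.

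For $q \leq p$ (which is the relevant regime, since the lemma is applied inside an integral over $q \in [0,p]$), we have $X_{(\ell),n} \leq q \leq p$ almost surely, so taking expectation over $X_{(\ell),n}$ yields
\[
\Pr[W_{\ell,n} > p | X_{(1),n} = q] \;=\; (1-p) \cdot h(q),
\]
where $h(q) := \mathbb{E}\left[\tfrac{1}{1-X_{(\ell),n}} \,\Big|\, X_{(1),n} = q\right]$ depends only on $q$. Differentiating in $p$ produces $-h(q)$; rewriting $h(q)$ back in terms of the probability gives exactly $-\Pr[W_{\ell,n} > p | X_{(1),n} = q]/(1-p)$, as claimed.

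The proof has essentially no obstacle beyond bookkeeping: it is just the law of total probability combined with the elementary fact that a uniform variable on $[x,1]$ places mass $(1-p)/(1-x)$ above $p$. The one thing to watch is the range of $q$: if $q > p$, then $X_{(\ell),n}$ can exceed $p$ with positive probability, in which case $\Pr[W_{\ell,n} > p | X_{(\ell),n} = x] = 1$ rather than $(1-p)/(1-x)$, and the clean linear-in-$(1-p)$ factorization breaks. Fortunately the outer integral in the enclosing argument runs only over $q \in [0,p]$, so this regime is never encountered in the downstream use of the lemma.
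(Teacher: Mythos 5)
Your proof is correct and follows essentially the same route as the paper's: condition on $X_{(\ell),n}$ (which given $X_{(1),n}=q$ lives on $[0,q]$), use that a uniform on $[x,1]$ exceeds $p$ with probability $(1-p)/(1-x)$, and observe the resulting factorization $\Pr[W_{\ell,n}>p\mid X_{(1),n}=q]=(1-p)\cdot h(q)$ makes the derivative identity immediate. You are also right to flag the implicit restriction $q\leq p$ — the paper's ``for all $q$'' is slightly overstated (for $q>p$ the event $X_{(\ell),n}>p$ has positive probability and the clean linear-in-$(1-p)$ form fails), but this is harmless since the lemma is only ever invoked under an integral over $q\in[0,p]$.
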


\begin{proof} Let's first expand $\Pr[W_{\ell,n} > p | X_{(1),n} = q]$ by letting $f_{\ell,n}^q(\cdot)$ denote the density of $X_{(\ell),n}$ conditioned on $X_{(1),n} = q$. 

$$\Pr[W_{\ell,n} > p | X_{(1),n} = q] = \int_0^q f_{\ell,n}^q(r)\cdot \frac{1-p}{1-r} dr.$$

This is simply because, conditioned on $X_{(\ell),n} = r$, the probability that $W_{\ell,n}$ (a uniformly random draw from $[r,1]$) exceeds $p$ is exactly $\frac{1-p}{1-r}$. Taking now the derivative with respect to $p$ (again by Leibniz' rule), we see that:
$$\frac{\partial \Pr[W_{\ell,n}> p | X_{(1),n} = q]}{\partial p} = - \int_0^q f_{\ell,n}^q(r) /(1-r) dr = -\Pr[W_{\ell,n}>p|X_{(1),n}=q]/(1-p).$$
\end{proof}

Using Lemma~\ref{lem:rightterm}, we can now rewrite:
\begin{align*}
\int_0^p f_{1,n}(q) \cdot \frac{\partial \Pr[W_{\ell,n}> p | X_{(1),n} = q]}{\partial p}dq &= \frac{-1}{1-p}\int_0^p f_{1,n}(q) \cdot \Pr[W_{\ell,n} > p | X_{(1),n} = q]dq\\
&= \frac{-\Pr[W_{\ell,n} > p \wedge X_{(1),n} < p]}{1-p}
\end{align*}

And using both Lemmas~\ref{lem:recurse} and~\ref{lem:rightterm} we can now simplify (the second equality follows by recalling that $f_{1,n}(\cdot)$ is the density of $X_{(1),n}$):
\begin{align}\label{eq:key}\frac{\partial \Pr[W_{\ell,n} > p \wedge X_{(1),n} < p]}{\partial p} &=  f_{1,n}(p) \cdot \Pr[W_{\ell-1,n-1} > p | X_{(1),n-1} < p] - \frac{\Pr[W_{\ell,n}>p \wedge X_{(1),n} < p]}{1-p}\\
&=np^{n-1}\cdot \Pr[W_{\ell-1,n-1}>p|X_{(1),n-1}<p] -\frac{\Pr[W_{\ell,n} > p \wedge X_{(1),n} < p]}{1-p}\end{align}

From here we'll show that whenever $\Pr[W_{\ell,n} > p \wedge X_{(1),n} < p] \geq \Pr[Z_{(1),c} > p \wedge X_{(1),n} < p]$ (i.e. whenever what we're trying to prove at $p$ is violated), then the derivative trends towards satisfying our desired claim.

\begin{lemma}\label{lem:derivs}If $\Pr[W_{\ell,n} > p \wedge X_{(1),n} < p] \geq \Pr[Z_{(1),c} > p \wedge X_{(1),n} < p]$, then:
$$\frac{\partial \Pr[Z_{(1),c}>p \wedge X_{(1),n} < p] - \Pr[W_{\ell,n} > p \wedge X_{(1),n} < p]}{\partial p} \geq 0.$$
\end{lemma}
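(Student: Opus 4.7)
I would start from the explicit formulas for $A'(p)$ and $B'(p)$, where I abbreviate $A(p) := \Pr[Z_{(1),c} > p \wedge X_{(1),n} < p]$ and $B(p) := \Pr[W_{\ell,n} > p \wedge X_{(1),n} < p]$. Since $Z_{(1),c}$ and $X_{(1),n}$ are independent, $A(p) = p^n(1-p^c)$, and therefore $A'(p) = np^{n-1}(1-p^c) - cp^{n+c-1}$. For $B'(p)$ I would simply borrow the identity derived in equation~(\ref{eq:key}), namely
$$B'(p) \;=\; np^{n-1}\,\Pr[W_{\ell-1,n-1}>p\mid X_{(1),n-1}<p] \;-\; \frac{B(p)}{1-p}.$$

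Subtracting, $A'(p)-B'(p)$ decomposes into two pieces, each of which I will argue is non-negative:
$$A'(p)-B'(p) \;=\; np^{n-1}\Bigl[(1-p^c) - \Pr[W_{\ell-1,n-1}>p\mid X_{(1),n-1}<p]\Bigr] \;+\; \left(\frac{B(p)}{1-p} - cp^{n+c-1}\right).$$
For the second piece, the hypothesis $B(p) \geq A(p) = p^n(1-p^c)$ yields
$$\frac{B(p)}{1-p} \;\geq\; \frac{p^n(1-p^c)}{1-p} \;=\; p^n\sum_{i=0}^{c-1}p^i \;\geq\; p^n\cdot cp^{c-1} \;=\; cp^{n+c-1},$$
where the middle equality is the geometric series identity and the last inequality uses $p^i \geq p^{c-1}$ for $p \in [0,1]$ and $i \leq c-1$. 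Hence this piece is $\geq 0$. For the first piece, I would invoke the inductive hypothesis of Proposition~\ref{prop:key} applied at the smaller parameters $(n-1,\ell-1)$; using independence of $Z_{(1),c}$ from $X_{(1),n-1}$, that statement reads exactly $1-p^c \geq \Pr[W_{\ell-1,n-1}>p\mid X_{(1),n-1}<p]$, and the first piece is therefore non-negative as well.

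The delicate point is making sure the inductive hypothesis is really available at $(n-1,\ell-1)$ with the same constant $c$: the threshold $4n/(\ell-1)$ is not obviously decreasing under $(n,\ell)\mapsto(n-1,\ell-1)$ once $n \geq \ell$, so one has to be careful about the order in which $(n,\ell)$ pairs are processed and about the base case $\ell=2$ (which was already handled separately, without appealing to this lemma). Once the inductive scaffolding is in place, however, the lemma's content is exactly the three-line derivation above: the hypothesis $B \geq A$ is used in a single spot (bounding $B(p)/(1-p)$), the inductive hypothesis is used in a single spot (bounding the bracketed difference), and the elementary geometric-series estimate glues them together.
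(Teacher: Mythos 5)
There is a genuine gap in the decomposition step. You split $A'(p) - B'(p)$ into two pieces and then try to show each piece is separately non-negative. But the first piece
$$np^{n-1}\Bigl[(1-p^c) - \Pr[W_{\ell-1,n-1}>p\mid X_{(1),n-1}<p]\Bigr]$$
can be \emph{negative}, and the issue you flag as ``delicate'' is in fact fatal: the inductive hypothesis at $(n-1,\ell-1)$ only tells you $\Pr[W_{\ell-1,n-1}>p\mid X_{(1),n-1}<p]\leq 1-p^{c'}$ for $c' \geq 4(n-1)/(\ell-2)$, and for $n > \ell - 1$ (which is the common regime in the paper's applications, e.g.\ Theorem~\ref{thm:bignmain} with $\ell = \sqrt{nm}+1$ and $n' = n + (\ell-1)(m-1)$) one has $4(n-1)/(\ell-2) > 4n/(\ell-1) = c$, so the bound $1-p^{c'}$ exceeds $1-p^c$ for $p\in(0,1)$. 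No reordering of the induction fixes this; the thresholds genuinely move in the wrong direction. A concrete failure: $n=5$, $\ell=4$, $c = 20/3$ gives an inductive bound of $1-p^8 > 1-p^{20/3}$ on the probability you need to dominate, so you cannot certify the bracket is non-negative.

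The paper's proof does \emph{not} attempt to make each piece non-negative. It applies the inductive hypothesis with the ``natural'' exponent $4(n-1)/(\ell-2)$ (accepting that the bracket may go negative), and simultaneously retains the \emph{entire} geometric sum $p^n\sum_{j=0}^{c-1}p^j$ from the second piece rather than discarding the slack down to $cp^{n+c-1}$ as you do. After dividing by $p^{n-1}$, the resulting inequality
$$np^{4(n-1)/(\ell-2)} + \sum_{j=1}^c p^j \geq (n+c)p^c$$
is then proved by a single application of Jensen's inequality to the convex function $x\mapsto p^x$: the left side is $(n+c)\mathbb{E}[p^A]$ for an explicit distribution $A$ over exponents, so it suffices to check $\mathbb{E}[A]\leq c$, which the choice $c \geq 4n/(\ell-1)$ guarantees. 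The two pieces must be combined \emph{before} invoking convexity, precisely because the surplus from the geometric sum is what offsets the deficit from the weaker inductive exponent. Your second-piece bound $\sum_{i=0}^{c-1}p^i\geq cp^{c-1}$ is exactly the step that destroys the surplus the argument needs.
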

\begin{proof}
Observe first that $\Pr[Z_{(1),c} > p \wedge X_{(1),n} < p] = p^n \cdot (1-p^c)$. As such, we also have $\frac{\partial \Pr[Z_{(1),c} > p \wedge X_{(1),n} < p]}{\partial p} = np^{n-1}(1-p^c) - cp^{n+c-1}$. So if the hypotheses of the lemma are satisfied, then by Equation~\eqref{eq:key} we can write:
\begin{align*}
&\frac{\partial \Pr[Z_{(1),c}>p \wedge X_{(1),n} < p] - \Pr[W_{\ell,n} > p \wedge X_{(1),n} < p]}{\partial p}\\
&= np^{n-1}(1-p^c) - cp^{n+c-1}-\left(np^{n-1}\cdot \Pr[W_{\ell-1,n-1}>p|X_{(1),n-1}<p] -\frac{\Pr[W_{\ell,n} > p \wedge X_{(1),n} < p]}{1-p}\right)\\
&\geq np^{n-1}(1-p^c) - cp^{n+c-1} - np^{n-1} \cdot (1-p^{4(n-1)/(\ell-2)}) + \frac{p^n(1-p^c)}{1-p}.
\end{align*}

In the inequality, we have used two facts. First, we have used the inductive hypothesis, which claims that $\Pr[W_{\ell-1,n-1} > p | X_{(1),n-1} < p] \leq \Pr[Z_{(1),4(n-1)/(\ell-2)} > p | X_{(1),n-1} < p] = 1-p^{4(n-1)/(\ell-2)}$. 
Second, we have used the hypothesis of the lemma statement. Next, we can substitute $(1-p^c) = (1+p+\ldots +p^{c-1}) \cdot (1-p)$ (and make some other algebraic simplifications) to get:
\begin{align*}
&\frac{\partial \Pr[Z_{(1),c}>p \wedge X_{(1),n} < p] - \Pr[W_{\ell,n} > p \wedge X_{(1),n} < p]}{\partial p}\\
&\geq -(n+c)p^{n+c-1} + np^{n-1+4(n-1)/(\ell-2)}+p^n \cdot \sum_{j=0}^{c-1} p^j.
\end{align*}

Recall again that we are hoping to prove that the above term is $\geq 0$. As $p \geq 0$, the above term is $\geq 0$ if and only if (dividing all terms by $p^{n-1}$ and rearranging):

\begin{equation}\label{eq:convex}
np^{4(n-1)/(\ell-2)} + \sum_{j=1}^c p^j \geq (n+c) p^c.
\end{equation}

To conclude that Equation~\eqref{eq:convex} indeed holds for all $p \in [0,1]$, consider the convex function $f(x) := p^x$, and the random variable $A$ where $A = j$ with probability $1/(n+c)$ for all $j \in \{1,\ldots, c\}$, and $A = 4(n-1)/(\ell-2)$ with probability $n/(n+c)$. Then the left-hand side of the equation above is exactly $(n+c)\cdot \mathbb{E}[f(A)]$. Therefore, we may conclude by Jensen's inequality that the left-hand side exceeds $(n+c)\cdot f(\mathbb{E}[A]) = (n+c)\cdot p^{\frac{4(n-1)n/(\ell-2)+c(c+1)/2}{n+c}}$. As $p \in [0,1]$, $(n+c)\cdot p^{\frac{4(n-1)n/(\ell-2)+c(c+1)/2}{n+c}} \geq (n+c)p^c$ if and only if $\frac{4(n-1)n/(\ell-2)+c(c+1)/2}{n+c} \leq c$. So finally, our only remaining job is to see what values of $c$ satisfy:

\begin{equation}\frac{4(n-1)n}{\ell-2} + \frac{c(c+1)}{2} \leq nc+c^2.
\end{equation}

Indeed, observe that when $c = 4n/(\ell-1)$, we get:
\begin{align*}
\frac{4(n-1)n}{\ell-2} + \frac{c(c+1)}{2}&\stackrel{?}{\leq } nc+c^2\\
\frac{4}{\ell-2} -\frac{4}{n(\ell-2)}+ \frac{4^2}{2(\ell-1)^2} +\frac{4}{2n(\ell-1)} &\stackrel{?}{\leq} \frac{4}{\ell-1} + \frac{4^2}{(\ell-1)^2}\\
2(\ell-1)^2-2(\ell-1)^2/n + 4(\ell-2) +(\ell-2)(\ell-1)/n &\stackrel{?}{\leq}2(\ell-2)(\ell-1) + 2\cdot 4\cdot (\ell-2)\\
2(\ell-1)  -\ell(\ell-1)/n &\stackrel{?}{\leq} 4\cdot (\ell-2)\\
2\frac{\ell-1}{\ell-2} - \frac{\ell(\ell-1)}{(\ell-2)n} &\stackrel{?}{\leq}4\\
\end{align*}
The last inequality indeed holds as $\ell \geq 3$. 

\end{proof}

Now we are ready to wrap up the proof of the proposition. We have just shown (Lemma~\ref{lem:derivs}) that for all $p$, either $\Pr[Z_{(1),c} > p \wedge X_{(1),n} < p] - \Pr[W_{\ell,n} > p \wedge X_{(1),n} < p] \geq 0$, or $\frac{\partial \Pr[Z_{(1),c}>p \wedge X_{(1),n} < p]}{\partial p} - \frac{\partial  \Pr[W_{\ell,n} > p \wedge X_{(1),n} < p]}{\partial p} \geq 0$.
We now are ready to claim that this implies that $\Pr[Z_{(1),c} > p \wedge X_{(1),n} < p] - \Pr[W_{\ell,n} > p \wedge X_{(1),n} < p] \geq 0$ for all $p \in [0,1]$. 

Indeed, define $G(p) := \Pr[Z_{(1),c} > p \wedge X_{(1),n} < p] - \Pr[W_{\ell,n} > p \wedge X_{(1),n} < p]$. Then we have shown that for all $p$, either $G(p) \geq 0$ or $G'(p) \geq 0$. Moreover, we know that $G(0) = 0-0 = 0$. So assume for contradiction that there exists some $p$ with $G(p) < 0$. Then because $G(\cdot)$ is continuous (and $G(0) = 0$), there exists some open interval $(q,p)$ such that $G(x) < 0$ on $(q,p)$, while $G(q) = 0$. But now we have a contradiction: By Lemma~\ref{lem:derivs}, $G'(x) \geq 0$ on $(q,p)$, and $G(q) = 0$. Therefore, we must also have $G(x) \geq 0$ on $(q,p)$, contradicting our initial assumption. 

Therefore, we cannot have $G(x) < 0$ anywhere on $[0,1]$, meaning that $\Pr[Z_{(1),c} > p \wedge X_{(1),n} < p] - \Pr[W_{\ell,n} > p \wedge X_{(1),n} < p] \geq 0$. This is identical to the claim that $\Pr[Z_{(1),c} > p | X_{(1),n} < p] \geq \Pr[W_{\ell,n} > p | X_{(1),n} < p]$. This completes the proof of the proposition.

\end{proof}

Theorem~\ref{thm:bign} now follows nearly directly from Proposition~\ref{prop:key}.

\begin{proof}[Proof of Theorem~\ref{thm:bign}]
We can directly compute $\Pr[X_S(n,c) > p] = \Pr[X_{(1),n} > p] + \Pr[Z_{(1),c} > p | X_{(1),n} < p]\cdot \Pr[X_{(1),n} < p]$. Similarly, $\Pr[X_B(n,\ell) > p] =\Pr[X_{(1),n} > p] + \Pr[W_{\ell,n} > p|X_{(1),n} < p]\cdot \Pr[X_{(1),n} < p]$. By Proposition~\ref{prop:key}, when $c \geq 4n/(\ell-1)$, we get:
$$\Pr[Z_{(1),c} > p | X_{(1),n} < p] \geq \Pr[W_{\ell,n} > p | X_{(1),n} < p].$$
Therefore,
\begin{align*}
[X_{(1),n} > p] &+ \Pr[Z_{(1),c} > p | X_{(1),n} < p]\cdot \Pr[X_{(1),n} < p]\\ \geq &\Pr[X_{(1),n} > p] + \Pr[W_{\ell,n} > p|X_{(1),n} < p]\cdot \Pr[X_{(1),n} < p].
\end{align*}
This implies
$$\Pr[X_S(n,c) > p] \geq \Pr[W_{\ell,n} > p].$$

As the above holds for all $p \in [0,1]$, this proves that $X_S(n,c)$ stochastically dominates $X_B(n,\ell)$.
\end{proof}

Theorem~\ref{thm:littlen} will require one more similar proposition.

\begin{proposition}\label{prop:littlen} Let $c \geq n\cdot(1+\ln(1+m/n))$, then for all $p$, $\Pr[Z_{(1),c} > p | X_{(1),n} < p] \geq \Pr[Y^*_{n,m-1} > p | X_{(1),n} < p]$, where $Y^*_{n,m-1}$ is the random variable equal to $0$ if $Y_{1,m-1} < X_{(1),n}$, and is otherwise equal to $Y_{j,m-1}$ for a uniformly random $j \in \{j | Y_{j,m-1} > X_{(1),n}\}$. 
\end{proposition}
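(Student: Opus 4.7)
My plan is to reduce the stochastic-dominance claim to a polynomial inequality in $p$, then certify it via Abel summation of its coefficients. First, directly from the definition of $Y^*_{n,m-1}$, given $X_{(1),n}=q<p$ the number $K$ of $Y_j$'s exceeding $q$ is $\mathrm{Binomial}(m-1,1-q)$, and conditioned on $K\geq 1$ the chosen $Y$ is uniformly distributed on $[q,1]$; hence
\[
\Pr[Y^*_{n,m-1}>p\mid X_{(1),n}=q]\;=\;(1-q^{m-1})\cdot\frac{1-p}{1-q}.
\]
Integrating against the density $nq^{n-1}/p^n$ of $X_{(1),n}$ conditioned on $X_{(1),n}<p$, and using $(1-q^{m-1})/(1-q)=\sum_{j=0}^{m-2} q^j$, this gives
\[
\Pr[Y^*_{n,m-1}>p\mid X_{(1),n}<p]\;=\;n(1-p)\sum_{j=0}^{m-2}\frac{p^j}{n+j}.
\]
Since $Z_{(1),c}$ is independent of $X_{(1),n}$, $\Pr[Z_{(1),c}>p\mid X_{(1),n}<p]=1-p^c$, and dividing both sides by $(1-p)$ reduces the desired claim to the polynomial inequality $\sum_{i=0}^{c-1} p^i\geq n\sum_{j=0}^{m-2} p^j/(n+j)$ for all $p\in[0,1]$.

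\textbf{Certificate via Abel summation.} If $c\geq m-1$ the inequality holds coefficient by coefficient, so assume $c<m-1$. Then $D(p):=\sum_{i=0}^{c-1}p^i-n\sum_{j=0}^{m-2} p^j/(n+j)=\sum_{k=1}^{m-2}\alpha_k p^k$ has coefficients $\alpha_k=k/(n+k)>0$ for $1\leq k<c$ and $\alpha_k=-n/(n+k)<0$ for $c\leq k\leq m-2$. Abel's summation rewrites
\[
D(p)\;=\;S_{m-2}\,p^{m-2}\;+\;\sum_{k=1}^{m-3}S_k\,p^k(1-p),\qquad S_k:=\sum_{i=1}^k\alpha_i,
\]
and since $p^k(1-p),\,p^{m-2}\geq 0$ on $[0,1]$, it suffices to show $S_k\geq 0$ for every $k$. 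For $k<c$ this is immediate. For $k\geq c$, telescoping $\sum_{i=1}^{c-1} i/(n+i)=(c-1)-n\sum_{j=n+1}^{n+c-1} 1/j$ and then adding the negative tail yields $S_k=(c-1)-n\sum_{j=n+1}^{n+k} 1/j$, which is decreasing in $k$, so the binding case is $k=m-2$. Using $\sum_{j=n+1}^{n+m-2} 1/j\leq \ln(1+(m-2)/n)\leq \ln(1+m/n)$ and the hypothesis $c\geq n(1+\ln(1+m/n))\geq 1+n\ln(1+m/n)$ gives $S_{m-2}\geq n-1\geq 0$, completing the proof.

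\textbf{Main obstacle.} The difficulty is not the probability computation but choosing the right certificate for $D(p)\geq 0$. A Jensen-style derivative argument in the spirit of Proposition~\ref{prop:key} (showing that $D(p)\leq 0$ implies $D'(p)\geq 0$) seems to deliver only the much weaker bound $c\gtrsim\sqrt{nm}$, because the negative coefficients of $D$ sit at the highest-degree terms and dominate any naive one-sided derivative estimate. The Abel transform is exactly the right tool: the partial sums $S_k$ are monotonically decreasing throughout the entire negative regime $c\leq k\leq m-2$, so the whole non-negativity check collapses to a single harmonic-number comparison at $k=m-2$, which the stated hypothesis on $c$ satisfies with minimal slack.
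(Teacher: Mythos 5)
Your proof is correct, and it follows the paper's probability computation exactly—both derive
\[
\Pr\bigl[Y^*_{n,m-1}>p\mid X_{(1),n}<p\bigr]=n(1-p)\sum_{j=0}^{m-2}\frac{p^j}{n+j}
\]
by integrating against the density $nq^{n-1}/p^n$ of $X_{(1),n}$ conditioned on $X_{(1),n}<p$—but then diverges at the final certification step. The paper keeps the form $1-np^{m-1}/(n+m-2)-\sum_{i=1}^{m-2}np^i/((n+i)(n+i-1))$, observes that the nonzero coefficients form a probability distribution, and invokes Jensen's inequality on the convex map $x\mapsto p^x$, reducing to a single inequality $c\geq \mathbb{E}[A]$ that is then bounded by the same harmonic-sum estimate you use. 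You instead factor out $(1-p)$ to reach the cleaner polynomial inequality $\sum_{i=0}^{c-1}p^i\geq n\sum_{j=0}^{m-2}p^j/(n+j)$ and certify it by Abel summation by parts, showing the partial sums $S_k$ are nonnegative; the telescoping of $\sum_{i<c} i/(n+i)$ collapses the binding case to the same harmonic comparison $c-1\geq n\sum_{j=n+1}^{n+m-2}1/j$. The two routes end at essentially the same estimate, so the constants match; your Abel-summation certificate is perhaps more elementary (it avoids invoking convexity) and makes the partial-sum monotonicity structure of the coefficients explicit. One small correction to your ``obstacle'' remark: the paper's actual proof of this proposition is the direct Jensen application just described, which delivers the full $n(1+\ln(1+m/n))$ bound; it is only the derivative-coupling style of Proposition~\ref{prop:key} (used for the big-$n$ regime) that you are right to suspect would give the weaker $\Theta(\sqrt{nm})$ here.
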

\begin{proof}
The proof of Proposition~\ref{prop:littlen} is more direct than that of Proposition~\ref{prop:key}. This time, we can just directly compute $\Pr[Y^*_{n,m-1}>p | X_{(1),n} < p]$. We again begin by observing that $\Pr[Z_{(1),c} > p | X_{(1),n} < p] = \Pr[Z_{(1),c} > p] = 1-p^c$.

We now turn to $Y^*_{n,m-1}$. Observe first that $Y^*_{n,m-1} = 0$, conditioned on $X_{(1),n} = q$, with probability exactly $q^{m-1}$. This is because $Y^*_{n,m-1}$ is $0$ whenever each of $m-1$ i.i.d. draws uniformly from $[0,1]$ are all $< q$. Now, conditioned on $Y^*_{n,m-1} > 0$ (and also $X_{(1),n} = q$), observe that $Y^*_{n,m-1}$ is just a random draw from the uniform distribution on $[X_{(1),n},1]$. This is because, conditioned on $Y^*_{n,m-1} > 0$ and $X_{(1),n} = q$, $Y^*_{n,m-1}$ simply picks uniformly at random among some number of i.i.d. random variables drawn uniformly from $[X_{(1),n},1]$. Therefore, conditioned on $Y^*_{n,m-1} > 0$, and $X_{(1),n} =q$, $Y^*_{n,m-1}$ exceeds $p$ with probability exactly $\frac{1-p}{1-q}$. Therefore, we can compute (below, let $f_1(\cdot)$ denote the density of $X_{(1),n}$, and $F_1(\cdot)$ denote the CDF):

\begin{align*}
&\Pr[Y^*_{n,m-1} > p | X_{(1),n} < p] \\
=& \int_0^p \frac{f_1(q)}{F_1(p)} \cdot \Pr[Y^*_{n,m-1} > 0 | X_{(1),n} = q] \cdot \Pr[Y^*_{n,m-1} > p | Y^*_{n,m-1} > 0 \wedge X_{(1),n} = q] dq\\
=& \int_0^p \frac{nq^{n-1}}{p^n} \cdot (1-q^{m-1}) \cdot \frac{1-p}{1-q} dq\\
=&\frac{n(1-p)}{p^n} \cdot \int_0^p q^{n-1} \cdot \frac{1-q^{m-1}}{1-q} dq\\
=&\frac{n(1-p)}{p^n} \cdot \int_0^p q^{n-1} \cdot (\sum_{i=0}^{m-2}q^i) dq\\
=&\frac{n(1-p)}{p^n} \cdot \left[ \sum_{i=0}^{m-2}q^{i+n}/(n+i)\right]_0^p\\
=&\frac{n(1-p)}{p^n} \cdot \sum_{i=0}^{m-2}p^{i+n}/(n+i)\\
=& n(1-p) \cdot \sum_{i=0}^{m-2}p^{i}/(n+i)\\
=& 1 - np^{m-1}/(n+m-2) + n\cdot \sum_{i=1}^{m-2} p^i(1/(n+i) - 1/(n+i-1))\\
=& 1 - np^{m-1}/(n+m-2) - \sum_{i=1}^{m-2} np^i/(n+i)(n+i-1)
\end{align*}

Before proceeding, we quickly observe that the sums of the coefficients of non-zero powers of $p$ is $-1$ (that is, $n/(n+m-2) + \sum_{i=1}^{m-2} n/(n+i)(n+i-1) = 1$). This is because the third-from-the-bottom equality is clearly equal to $0$ when $p = 1$, and so the bottom equality must be equal to $0$ when $p=1$ as well.
\begin{align*}
\Pr[Z_{(1),c} > p | X_{(1),n} < p] &\stackrel{?}{\geq} \Pr[Y^*_{n,m-1} > p | X_{(1),n} < p]\\
1-p^c &\stackrel{?}{\geq}1 - np^{m-1}/(n+m-2) - \sum_{i=1}^{m-2} np^i/(n+i)(n+i-1)\\
p^c &\stackrel{?}{\leq}np^{m-1}/(n+m-2)+ \sum_{i=1}^{m-2} np^i/(n+i)(n+i-1)
\end{align*}

From here, we again apply Jensen's inequality. Let $f(x):=p^x$ (which is convex), and let $A$ denote the random variable which is equal to $m-1$ with probability $n/(n+m-2)$ and equal to $i$ with probability $n/(n+i)(n+i-1)$ for all $i \in \{1,\ldots, m-2\}$. By reasoning in the previous paragraph (that the coefficients of non-zero powers of $p$ sum to $-1$), this is indeed a distribution. Then Jensen's inequality asserts that $\mathbb{E}[f(A)] \geq f(\mathbb{E}[A])$. Moreover, the right-hand side above is exactly $\mathbb{E}[f(A)]$. As such, we get that:
$$np^{m-1}/(n+m-2)+ \sum_{i=1}^{m-2} np^i/(n+i)(n+i-1) \geq p^{n(m-1)/(n+m-2) + \sum_{i=1}^{m-2} ni/(n+i)(n+i-1)}.$$

As $p \in [0,1]$, this means that our desired inequality is satisfied as long as $c \geq n(m-1)/(n+m-2) + \sum_{i=1}^{m-2} ni/(n+i)(n+i-1)$. But now observe that:
\begin{align*}
n(m-1)/(n+m-2) + \sum_{i=1}^{m-2} ni/(n+i)(n+i-1)  &= n\left(\frac{m-1}{n+m-2} + \sum_{i=1}^{m-2} \frac{i}{(n+i)(n+i-1)}\right)\\
&\leq n \cdot \left(1+\sum_{i=1}^{m-2} 1/(n+i)\right)\\
&\leq n \cdot \left(1+\ln(\frac{n+m-2}{n})\right)\\
&\leq n \cdot \left(1+\ln(1+\frac{m}{n})\right).
\end{align*}
\end{proof}

And now we can prove Theorem~\ref{thm:littlen}, which essentially combines Proposition~\ref{prop:key} and Proposition~\ref{prop:littlen} (with some extra work).

\begin{proof}[Proof of Theorem~\ref{thm:littlen}] 
We again directly compute $\Pr[X_S(n,c) > p] = \Pr[X_{(1),n} > p] + \Pr[Z_{(1),c} > p | X_{(1),n} < p]$. Similarly, $\Pr[X_L(n,m) > p] = \Pr[X_{(1),n} > p] + \Pr[W_{2,n} > p | X_{(1),n} < p]\cdot \Pr[X_{(1),n} < p] + \Pr[Y^*_{n,m-1} > p | X_{(1),n} < p \wedge W_{2,n} < p]\cdot \Pr[X_{(1),n} < p \wedge W_{2,n} < p]$, where again $Y^*_{n,m-1}$ is defined to be $0$ if $Y_{1,m-1} < X_{(1),n}$, and otherwise equal to $Y_{j,m-1}$ for a uniformly random $j \in \{j|Y_{j,m-1}> X_{(1),n}\}$. 

By Proposition~\ref{prop:key} we have that $\Pr[W_{2,n} > p | X_{(1),n} < p] \leq \Pr[Z_{(1),n} > p]$. Now we need to reason about $Y^*_{n,m-1}$ conditioned on $X_{(1),n} < p$ and $W_{2,n} < p$, which is not directly related to any previous propositions. However, we claim that $Y^*_{n,m-1}$ and $W_{2,n}$ are positively correlated, conditioned on $X_{(1),n} < p$.

\begin{lemma}\label{lem:correlation} $\Pr[Y^*_{n,m-1} > p | X_{(1),n} < p \wedge W_{2,n} < p] \leq \Pr[Y^*_{n,m-1} > p | X_{(1),n} < p]$.
\end{lemma}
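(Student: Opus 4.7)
The plan is to deduce the lemma from the claim that, conditional on $X_{(1),n} < p$, the two events $\{Y^*_{n,m-1} > p\}$ and $\{W_{2,n} > p\}$ are positively correlated; taking complements in the second event and rearranging yields the stated inequality. The strategy is to condition one level deeper on the value $X_{(1),n} = q$ (for $q \in [0,p]$) and then apply Chebyshev's correlation inequality to two non-decreasing functions of $q$.

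First I would establish conditional independence. Given $X_{(1),n} = q$, the random variable $Y^*_{n,m-1}$ is a function of $q$, the independent uniforms $(Y_{1,m-1},\ldots,Y_{m-1,m-1})$, and the independent uniform used to pick the random index; whereas $W_{2,n}$ is a function of $X_{(2),n}$ (whose conditional law given $X_{(1),n} = q$ is the maximum of $n-1$ i.i.d.\ uniforms on $[0,q]$) and the independent uniform used to sample it from $[X_{(2),n},1]$. These two groups of randomness are disjoint and independent, so $Y^*_{n,m-1}$ and $W_{2,n}$ are conditionally independent given $X_{(1),n} = q$.

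Next I would verify that $f(q) := \Pr[Y^*_{n,m-1} > p \mid X_{(1),n} = q]$ and $g(q) := \Pr[W_{2,n} > p \mid X_{(1),n} = q]$ are both non-decreasing on $[0,p]$. For $f$, the computation already carried out in the proof of Proposition~\ref{prop:littlen} yields $f(q) = (1-p)(1 + q + \cdots + q^{m-2})$, which is manifestly increasing. For $g$, I would use the coupling $W_{2,n} = X_{(2),n} + V\,(1 - X_{(2),n})$ with an independent $V$ uniform on $[0,1]$: the conditional law of $X_{(2),n}$ given $X_{(1),n} = q$ is stochastically increasing in $q$, and $x \mapsto x + V(1-x)$ is non-decreasing in $x$ for every $V \in [0,1]$, so $W_{2,n}$ is stochastically increasing in $q$ and hence $g$ is non-decreasing.

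Finally, I would apply Chebyshev's correlation inequality to the scalar random variable $X_{(1),n}$ under the conditional law $X_{(1),n} \mid X_{(1),n} < p$ with the non-decreasing functions $f, g$, obtaining $\mathbb{E}[f(X_{(1),n})\, g(X_{(1),n}) \mid X_{(1),n} < p] \geq \mathbb{E}[f(X_{(1),n}) \mid X_{(1),n} < p] \cdot \mathbb{E}[g(X_{(1),n}) \mid X_{(1),n} < p]$. Using the conditional independence and the tower property, the left-hand side equals $\Pr[Y^*_{n,m-1} > p \wedge W_{2,n} > p \mid X_{(1),n} < p]$, so the two events are positively correlated given $X_{(1),n} < p$; equivalently, $\{Y^*_{n,m-1} > p\}$ and $\{W_{2,n} < p\}$ are negatively correlated, which rearranges to the claim. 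The main obstacle is picking the correct conditioning variable (namely $X_{(1),n}$, rather than $X_{(2),n}$ or a quantile of $Y$) so that both $f$ and $g$ become monotone in the same direction while conditional independence still holds; once this is arranged, each individual step is either a short computation or a direct invocation of a standard inequality.
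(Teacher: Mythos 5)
Your proof is correct and takes essentially the same route as the paper: both argue via (i) conditional independence of $Y^*_{n,m-1}$ and $W_{2,n}$ given $X_{(1),n}$, (ii) monotonicity of both conditional survival probabilities as functions of $X_{(1),n}$, and (iii) an FKG/Chebyshev-type combination. The small differences are improvements in exposition: where the paper establishes the monotonicity of $q \mapsto \Pr[Y^*_{n,m-1}>p \mid X_{(1),n}=q]$ via a coupling/scanning argument over the permuted $Y$-sequence, you use the closed form $(1-p)(1+q+\cdots+q^{m-2})$, which is shorter and already implicit in the integrand of Proposition~\ref{prop:littlen}; and where the paper ends with ``put all three claims together,'' you make the final step explicit by naming Chebyshev's correlation inequality applied to $X_{(1),n}$ restricted to $[0,p)$.
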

\begin{proof}
For this proof, for random variables $A, B, C$, when we say $A$ and $B$ are conditionally independent, conditioned on $C$ we mean that for all $c$, the random variables $A$ and $B$ are conditionally independent, conditioned on the event $C= c$. We use this shorthand to avoid cumbersome notation.

The proof consists of three steps: (a) we first show $Y^*_{n,m-1}$ and $W_{2,n}$ are conditionally independent, conditioned on $X_{(1),n}$, (b) we show that, conditioned on $X_{(1),n}<p$, $Y^*_{n,m-1}$ is positively correlated with $X_{(1),n}$, and (c), $W_{2,n}$ is positively correlated with $X_{(1),n}$. Together, this essentially lets us claim that additionally conditioning on $W_{2,n} < p$ only serves to lower $X_{(1),n}$, which lowers the probability that $Y^*_{n,m-1} > p$.

Observe first that conditioned on $X_{(1),n},\ldots, X_{(n),n}$, $Y^*_{n,m-1}$ and $W_{2,n}$ are independent (this is just by definition: they are drawn independently, but the distributions from which they are drawn depend on $X_{(1),n},\ldots, X_{(n),n}$). However, observe that the distribution from which $Y^*_{n,m-1}$ is independently drawn can be defined as a function only of $X_{(1),n}$. This means that, conditioned on $X_{(1),n}$, $X_{(2),n}$ and $Y^*_{n,m-1}$ are conditionally independent. Similarly, the distribution from which $W_{2,n}$ is independently drawn from can be described as a function only of $X_{(2),n}$, which we just claimed is conditionally independent of $Y^*_{n,m-1}$, conditioned on $X_{(1),n}$. Therefore, $Y^*_{n,m-1}$ and $W_{2,n}$ are conditionally independent, conditioned on $X_{(1),n}$.\footnote{Note that $Y^*_{n,m-1}$ and $W_{2,n}$ are \emph{not} conditionally independent, conditioned on $X_{(1),n} < p$. They are only conditionally independent, conditioned on $X_{(1),n} = q$ (for some $q$). }

Next, we want to claim that, for all $r \leq q \leq p$, $\Pr[Y^*_{n,m-1} > p | X_{(1),n} = q] \geq \Pr[Y^*_{n,m-1} > p | X_{(1),n} = r] $. To see this, observe the following equivalent method for drawing $Y^*_{n,m-1}$: First, draw $Y_{1,m-1},\ldots, Y_{m-1,m-1}$ i.i.d. from the uniform distribution on $[0,1]$. Permute them into random order.\footnote{Actually, this step is not necessary, but it helps the analogy to state it.} Then, let $j$ be the smallest index such that $Y_{j,m-1} > X_{(1),n}$. If no such $j$ exists, set $Y^*_{n,m-1} = 0$. Otherwise, set $Y^*_{n,m-1} = Y_{j,m-1}$. Now, let's couple draws for $Y^*_{n,m-1}$ conditioned on $X_{(1),n} = q$ and $X_{(1),n} = r$ by fixing the values $Y_{1,m-1},\ldots, Y_{m-1,m-1}$ and the random permutation. Then think of $Y^*_{n,m-1}$, conditioned on $X_{(1),n} = r$ (respectively, $X_{(1),n} = q$) as scanning the values sequentially until it hits one whose value exceeds $r$ (respectively, $q$). 
\begin{itemize}
\item If the (permuted) sequence $Y_{1,m-1},\ldots, Y_{m-1,m-1}$ has no values $>p$, we have $Y^*_{n,m-1} < p$ in both cases. 
\item If the sequence has values $>p$, but a value $\in (q,p)$ precedes all values $>p$, then again $Y^*_{n,m-1} < p$ in both cases. This is because both scans stop at the value $\in (q,p)$ which is not $>p$. 
\item If the sequence has value $> p$, and the first one is \emph{not} preceded by any value $\in (r,p)$, then $Y^*_{n,m-1} > p$ in both cases. This is because both scans stop at a value $> p$ and output it.
\item If the sequence has a value $> p$, but a value $\in (r,q)$ precedes all values $> p$ but \emph{no} value $\in (q,p)$ precedes the first value $> p$: then $Y^*_{n,m-1} >p$ when conditioned on $X_{(1),n} = q$, but $Y^*_{n,m-1} < p$ when conditioned on $X_{(1),n} = r$. This is because the $X_{(1),n} = q$ scan skips over the value $\in (r,q)$, and stops at the value $> p$, whereas the $X_{(1),n} = r$ scan stops at the value $\in (r,q)$.
\end{itemize}

This covers all cases, and proves that for all $r \leq q \leq p$, $\Pr[Y^*_{n,m-1} > p | X_{(1),n} = q] \geq \Pr[Y^*_{n,m-1} > p | X_{(1),n} = r] $.

Finally, we make the same claim for $W_{2,\ell}$: for all $r \leq q$, $\Pr[W_{2,\ell} > p | X_{(1),n} = q] \geq \Pr[W_{2,\ell} > p | X_{(1),n} = r] $. This claim is more straight-forward: $W_{2,\ell}$ is drawn from a uniform distribution on $[X_{(2),\ell},1]$. So clearly, $\Pr[W_{2,\ell} > p | X_{(2),n} = q] \geq \Pr[W_{2,\ell} > p | X_{(2),n} = r] $ whenever $r \leq q$. Moreover, $X_{(2),n}$ is distributed according to the maximum of $n-1$ i.i.d. uniform draws from $[0,X_{(1),n}]$, so the distribution of $X_{(2),n}$ conditioned on $X_{(1),n} = q$ stochastically dominates that of $X_{(2),n}$ conditioned on $X_{(1),n} = r$ whenever $q \geq r$. Both observations together allow us to conclude that for all $r \leq q$, $\Pr[W_{2,\ell} > p | X_{(1),n} = q] \geq \Pr[W_{2,\ell} > p | X_{(1),n} = r] $.

Now we may put all three claims together to prove the lemma. 
\end{proof}

Now with Lemma~\ref{lem:correlation}, we can wrap up the proof. We now know that:

\begin{align*}
\Pr[X_L(n,m) > p] &= \Pr[X_{(1),n} > p] 
   \!\begin{aligned}[t]
&+ \Pr[W_{2,n} > p | X_{(1),n} < p]\cdot \Pr[X_{(1),n} < p]\\
&+ \Pr[Y^*_{n,m-1} > p | X_{(1),n} < p \wedge W_{2,n} < p]\cdot \Pr[X_{(1),n} < p \wedge W_{2,n} < p]
   \end{aligned}\\
&\leq \Pr[X_{(1),n} > p] 
   \!\begin{aligned}[t]
&+ \Pr[W_{2,n} > p | X_{(1),n} < p]\cdot \Pr[X_{(1),n} < p] \\
&+ \Pr[Y^*_{n,m-1} > p | X_{(1),n}<p]\cdot \Pr[X_{(1),n} < p \wedge W_{2,n} < p]
    \end{aligned}\\  
&\leq \Pr[X_{(1),n} > p] 
   \!\begin{aligned}[t]
&+ \Pr[W_{2,n} > p | X_{(1),n} < p] \cdot \Pr[X_{(1),n} < p] \\
&+ \Pr[Z_{(1),n(1+\ln(1+m/n))} > p|X_{(1),n} < p] \cdot  \Pr[X_{(1),n} < p \wedge W_{2,n} < p]
    \end{aligned}\\  
&\leq \Pr[X_{(1),n} > p] + \Pr[W_{2,n} > p \vee Z_{(1),n(1+\ln(1+m/n))}>p | X_{(1),n} < p] \cdot \Pr[X_{(1),n} < p] \\
\end{align*}

At this point, observe that the random variables $W_{2,n}$ and $Z_{(1),n(1+\ln(1+m/n))}$ are independent (and also conditionally independent, conditioned on $X_{(1),n} < p$). Therefore:
\begin{align*}
&\Pr[W_{2,n} < p \wedge Z_{(1),n(1+\ln(1+m/n))}< p | X_{(1),n} < p]\\ 
=& \Pr[W_{2,n} <p | X_{(1),n} < p] \cdot \Pr[Z_{(1),n(1+\ln(1+m/n))}<p | X_{(1),n} < p]
\end{align*}

By Proposition~\ref{prop:key}, we know that $\Pr[W_{2,n} <p | X_{(1),n} < p] \geq \Pr[Z_{(1),n} < p| X_{(1),n} < p]$. Therefore, we get that:
\begin{align*}
&\Pr[W_{2,n} < p \wedge Z_{(1),n(1+\ln(1+m/n))}< p | X_{(1),n} < p]\\ 
\geq &\Pr[Z_{(1),n} < p| X_{(1),n} < p] \cdot \Pr[Z_{(1),n(1+\ln(1+m/n))} < p| X_{(1),n} < p]\\
 =& \Pr[Z_{(1),n} < p] \cdot \Pr[Z_{(1),n(1+\ln(1+m/n))} < p]\\
 =& \Pr[Z_{(1),n(2+\ln(1+m/n))} < p].
\end{align*}
Therefore,
$$\Pr[Z_{(1),n(2+\ln(1+m/n))} > p] \ge \Pr[W_{2,n} > p \vee Z_{(1),n(1+\ln(1+m/n))}>p | X_{(1),n} < p].$$

So substituting all the way back, we get that:
\begin{align*}
[X_L(n,m) > p] &\leq \Pr[X_{(1),n} > p] + \Pr[Z_{(1),n(2+\ln(1+m/n))} > p] \cdot \Pr[X_{(1),n} < p]\\ 
&= \Pr[X_S(n,n(2+\ln(1+m/n))) > p].
\end{align*}

This completes the proof.
\end{proof}

\newpage
\appendix
\section{Lower Bound Examples}\label{sec:lbs}

In this section, we analyze {a new lowerbound}
in detail, and remind the reader of a lower bound from~\cite{FeldmanFR18}. 

\subsection{Theorem~\ref{thm:littlenmain} is tight~\cite{FeldmanFR18}} 

Here, we sketch the construction from~\cite{FeldmanFR18}. There be $n$ bidders and $m > c\cdot n$ items for some absolute constant $c$. Recall that $\erm$ denotes the distribution over values for $m$ items for which each value is drawn i.i.d. from the equal revenue curve {with revenue equal to 1}. Then consider the posted-price mechanism which visits each buyer sequentially (in arbitrary order) and offers her the option to buy any set of $\frac{m}{4n}$ remaining items for price $\pi = \frac{m}{8} \cdot (\ln(m/n)+1)$. Feldman et al. prove the following:

\begin{proposition}[\cite{FeldmanFR18}] The posted-price mechanism described above achieves expected revenue $\Omega(m \cdot n (1+\ln(m/n)))$ for $n$ buyers whose values for the $m$ items are drawn i.i.d.  from $\erm$.
\end{proposition}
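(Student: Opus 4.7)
The plan is to show that each of the $n$ buyers accepts the offered bundle with probability at least some absolute constant $\alpha > 0$. Since each accepting buyer pays $\pi = (m/8)(\ln(m/n)+1)$, the total expected revenue is then at least $\alpha\, n\, \pi = \Omega(mn(1+\ln(m/n)))$, as claimed.

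First I would reduce to a single-buyer statement. When the $k$-th buyer arrives, at most $(k-1)\cdot m/(4n) < m/4$ items have been removed, so at least $M := 3m/4$ items remain. Since the values $v_{ij}$ are i.i.d.\ across bidders and items, conditioning on the history (i.e., on the set of items removed by buyers $1,\ldots,k-1$, which depends only on earlier buyers' values) leaves the $k$-th buyer's values for the remaining items i.i.d.\ from the equal-revenue distribution. The buyer accepts iff her top-$K$ value sum $S$ among the remaining items is at least $\pi$, where $K := m/(4n)$. Since the acceptance probability only increases with the number of available items, it suffices to prove $\Pr[S \geq \pi] \geq \alpha$ when exactly $M$ i.i.d.\ equal-revenue items are on offer.

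The main technical step is to control $S$ using counts of items above a threshold. For any $t \geq 1$, the count $N(t) := |\{i \in [M] : v_i \geq t\}|$ is $\mathrm{Bin}(M,1/t)$, since $\Pr[v \geq t] = 1/t$ for the equal-revenue distribution, and $v_{(i)} \geq t$ iff $N(t) \geq i$. Setting $t_i := M/(2i)$ so that $\E[N(t_i)] = 2i$, a standard Chernoff lower-tail bound gives $\Pr[v_{(i)} < M/(2i)] = \Pr[N(t_i) < i] \leq e^{-i/4}$. A union bound over $i \in [C, K]$ for a sufficiently large absolute constant $C$ yields, with probability at least $1/2$, that $v_{(i)} \geq M/(2i)$ for every $i$ in this range. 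On this event,
\[
 S \;\geq\; \sum_{i=C}^{K} v_{(i)} \;\geq\; \frac{M}{2}\bigl(H_K - H_C\bigr) \;\geq\; \frac{M}{4}\ln K,
\]
where the last inequality uses that $K$ exceeds a sufficiently large constant multiple of $C$ (which is assured by taking the constant $c$ in $m>cn$ large enough).

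Finally, substituting $M = 3m/4$ and $K = m/(4n)$ gives $S \geq (3m/16)\ln(m/(4n))$, which a routine algebraic manipulation shows exceeds $\pi = (m/8)(\ln(m/n)+1)$ as soon as $m/n$ is larger than a sufficiently large absolute constant; this is again guaranteed by choosing $c$ large enough. The main obstacle is that the equal-revenue distribution is heavy-tailed (even $\E[v]$ is infinite), so moment-based concentration applied directly to $S$ fails; the fix is to argue at the level of the counts $N(t)$, which are binomial and hence well-concentrated, and then chain the resulting quantile-level bounds to extract the full $\Omega(\ln K)$ factor in $S$ rather than only an $\Omega(1)$ factor.
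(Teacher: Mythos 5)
The paper defers this proposition entirely to~\cite{FeldmanFR18} and does not supply its own proof; the construction is only sketched. Your proof is self-contained and, as far as I can tell, correct. The structure is sound: the reduction to a single-buyer acceptance bound is valid because (i) at any point in the sequential process at most $(n-1)\cdot\frac{m}{4n}<\frac{m}{4}$ items have been removed so at least $M=\frac{3m}{4}$ remain, (ii) the remaining-item set is measurable with respect to earlier buyers' values, which are independent of the current buyer's values, so the current buyer sees i.i.d.\ equal-revenue values on at least $M$ items, and (iii) the top-$K$ sum is stochastically monotone in the number of items offered. The key technical move --- passing from the heavy-tailed order statistics $v_{(i)}$ to the well-concentrated binomial counts $N(t)=|\{i:v_i\geq t\}|$ via the equivalence $v_{(i)}\geq t\Leftrightarrow N(t)\geq i$, then Chernoff plus a union bound over $i\in[C,K]$, then summing $\frac{M}{2i}$ to extract the full $\Theta(\ln K)$ factor --- is exactly the right way to handle the equal-revenue distribution, whose infinite mean rules out any direct concentration of $S$. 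The arithmetic checks out: $\mathbb{E}[N(t_i)]=2i$ gives $\Pr[N(t_i)<i]\leq e^{-i/4}$, the geometric tail $\sum_{i\geq C}e^{-i/4}$ is below $\tfrac12$ for a modest absolute $C$, and with $K\geq C^2$ (guaranteed by taking $c$ large) one indeed gets $S\geq\frac{M}{4}\ln K=\frac{3m}{16}\ln\frac{m}{4n}$, which dominates $\pi=\frac{m}{8}(\ln(m/n)+1)$ once $m/n$ exceeds an absolute constant. Two small housekeeping points worth a sentence each in a full write-up: $K=m/(4n)$ should be replaced by $\lfloor m/(4n)\rfloor$ throughout (the rounding is harmless), and you should note that $t_i=M/(2i)\geq1$ for all $i\leq K$ (which holds since $K<M/2$), so that $\Pr[v\geq t_i]=1/t_i$ is actually the equal-revenue tail and not clipped at probability $1$.
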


On the other hand, the revenue achieved by selling a single item to $n$ buyers whose values are drawn i.i.d. from $\er$ is well-understood. For the sake of completeness we repeat a proof below.

\begin{proposition}[Folklore] $\rev_n(\er)=n$.
\end{proposition}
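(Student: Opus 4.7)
The claim has a matching upper bound $\rev_n(\ER) \le n$ and lower bound $\rev_n(\ER) \ge n$ (the latter interpreted as a supremum, since it is not attained by any single mechanism).

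For the upper bound, I would use an interim analysis. Fix any BIC+IR mechanism on $\ER^n$. For each bidder $i$, let $X_i(v) := \E_{v_{-i}}[x_i(v, v_{-i})]$ and $P_i(v) := \E_{v_{-i}}[p_i(v, v_{-i})]$. BIC+IR for the original mechanism implies that $(X_i, P_i)$ is an IC+IR single-buyer mechanism for a buyer with value drawn from $\ER$, so it suffices to show that any such mechanism has expected revenue at most $1$. To do this cleanly, I would represent the monotone allocation rule $X_i$ as a mixture of threshold allocations: $X_i(v) = \int_{[1,\infty)} \mathbf{1}[v \ge t] \, dG_i(t)$ for a positive measure $G_i$ with total mass $G_i([1,\infty)) = X_i(\infty) \le 1$. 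The envelope theorem (with the revenue-maximizing choice that pins down $P_i(1) = X_i(1)$) forces the payment to be the corresponding mixture of posted-price payments, $P_i(v) = \int_{[1,\infty)} t \cdot \mathbf{1}[v \ge t] \, dG_i(t)$. Taking expectations and invoking the defining equal-revenue identity $t \cdot \Pr_{v \sim \ER}[v \ge t] = 1$ for all $t \ge 1$ gives $\E_v[P_i(v)] = \int_{[1,\infty)} 1 \, dG_i(t) = X_i(\infty) \le 1$. Summing over all $n$ bidders yields $\rev_n(\ER) \le n$.

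For the lower bound, I would exhibit a sequence of BIC+IR mechanisms whose revenues approach $n$. The simplest choice is the sequential posting mechanism with a single price $p$: visit the bidders in any fixed order, offer the (still available) item to each at price $p$, and stop as soon as one bidder accepts. Since each bidder independently accepts with probability $1/p$, the expected revenue equals $p\bigl(1-(1-1/p)^n\bigr)$, which tends to $n$ as $p \to \infty$, establishing $\rev_n(\ER) \ge n$.

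The main subtlety is that the naive Myersonian virtual value approach fails here: for $\ER$ one has $\varphi_\ER(v) \equiv 0$, so $\E[\max_i \varphi_\ER(v_i)^+] = 0$ would give the wrong answer. The root cause is that $\E_{v \sim \ER}[v] = \infty$, so the integration-by-parts derivation that justifies the virtual value formula becomes a formally indeterminate $\infty - \infty$. The mixture-of-thresholds representation sidesteps this precisely because every integral in sight is manifestly finite.
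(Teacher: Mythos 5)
Your proof is correct. The lower bound (sequential posting at a price $p \to \infty$) is the same as the paper's. For the upper bound, both you and the paper reduce to showing that each bidder's expected interim payment is at most $1$: the paper does this by asserting $\rev_1(\er)=1$ (any posted price $p$ earns $p\cdot(1/p)=1$) and then invoking a ``pretend there are $n$ copies'' relaxation, whereas you make the per-bidder reduction explicit via interim allocation and payment rules $(X_i,P_i)$ and then prove the single-buyer bound rigorously by writing the monotone allocation rule as a mixture of threshold rules, applying the Myerson/envelope payment identity to get $P_i(v)=\int t\,\mathbf{1}[v\ge t]\,dG_i(t)$, and using the equal-revenue identity $t\cdot\Pr_{v\sim\er}[v\ge t]=1$. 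Your version is strictly more careful: it handles arbitrary randomized single-buyer mechanisms directly, rather than implicitly relying on the (true but unstated) fact that a posted price is optimal against a single buyer. Your closing remark is also a genuine point worth making: the standard Myersonian computation $\E[\varphi_{\er}(v)X(v)]$ evaluates to $0$ because $\varphi_{\er}\equiv 0$, and the integration by parts underlying Myerson's lemma is an indeterminate $\infty-\infty$ when $\E_{v\sim\er}[v]=\infty$; the paper silently sidesteps this, while the mixture-of-thresholds route you chose keeps every integral finite and avoids the pitfall cleanly.
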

\begin{proof}
It is clear that $\rev_1(\er) = 1$: for any price $p$, the revenue achieved by setting price $p$ is $p \cdot 1/p = 1$. This immediately implies that $\rev_n(\er) \leq n$: even if the auctioneer had $n$ copies of the item for sale, they would still not get revenue more than $n$.

Moreover, here is an auction that guarantees revenue approaching $n$ as $p \rightarrow \infty$: post price $p$ on the item, and sell to the lexicographically first bidder whose value exceeds $p$. Then the probability of sale is $1 - (1-1/p)^n \geq n/p - \binom{n}{2}/p^2$. So the revenue is at least $n - \binom{n}{2}/p$, which approaches $n$ as $p \rightarrow \infty$. The second-price auction is optimal for $n$ bidders drawn from $\er$, and achieves revenue $n$ (one could separately verify this by directly computing the expected second-highest value of $n$ i.i.d. draws from $\er$, if desired). 
\end{proof}

Together, these propositions claim that $\rev_n(\erm) = \Omega(m \cdot n(\ln(m/n)+1))$, yet also $\vcg_{n+c}(\erm) \allowbreak = m\cdot (n+c)$. Therefore, in order to possibly have $\vcg_{n+c}(\erm) \geq \rev_n(\erm)$ we need to have $c = \Omega(n\ln(m/n))$. We therefore conclude:

\begin{corollary}[\cite{FeldmanFR18}]\label{cor:littletight} The competition complexity of $n$ bidders with additive valuations over $m$ i.i.d., regular items is at least $\Omega(n\ln(m/n))$. 
\end{corollary}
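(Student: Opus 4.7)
The plan is to simply chain the two propositions stated immediately before the corollary. Since the competition complexity is defined as the smallest $c$ for which $\vcg_{n+c}(D) \geq \rev_n(D)$ on every $D$, it suffices to exhibit a single instance of $D$ (additive over $m$ i.i.d. regular items) on which forcing $\vcg_{n+c}(D) \geq \rev_n(D)$ requires $c = \Omega(n\ln(m/n))$. The distribution $\erm$ will serve as this witness.

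First I would lower bound $\rev_n(\erm)$. The first proposition exhibits a concrete posted-price mechanism (which is BIC) extracting expected revenue $\Omega(mn(1+\ln(m/n)))$ from $n$ buyers whose values are drawn from $\erm$. Since $\rev_n(\erm)$ denotes the optimum over all BIC mechanisms, this yields $\rev_n(\erm) \geq \Omega(mn(1+\ln(m/n)))$.

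Next I would upper bound $\vcg_{n+c}(\erm)$. Because values are independent across items and bidders are additive, VCG decomposes into $m$ independent second-price auctions, so $\vcg_{n+c}(\erm) = m \cdot \vcg_{n+c}(\er)$. By definition $\vcg_{n+c}(\er) \leq \rev_{n+c}(\er)$, and the folklore proposition gives $\rev_{n+c}(\er) = n+c$. Combining, $\vcg_{n+c}(\erm) \leq m(n+c)$.

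Finally I would combine the two bounds: if $\vcg_{n+c}(\erm) \geq \rev_n(\erm)$, then $m(n+c) \geq \Omega(mn(1+\ln(m/n)))$, which rearranges to $c \geq \Omega(n\ln(m/n))$. Since this must hold for the competition complexity on the family $\erm$, the stated lower bound follows. The substantive work has already been done in the two propositions above (constructing the bundled posted-price mechanism and evaluating single-item $\er$ revenue); the corollary itself is a one-line arithmetic comparison of these two quantities, and I do not anticipate any real obstacle beyond being careful that the comparison is made against the VCG revenue (as in the definition of competition complexity) rather than $\srev$.
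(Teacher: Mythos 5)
Your proposal is correct and follows essentially the same route as the paper: lower-bound $\rev_n(\erm)$ via the Feldman et al.\ posted-price mechanism, upper-bound $\vcg_{n+c}(\erm)$ by $m(n+c)$ using the folklore revenue of $\er$, and compare. The only cosmetic difference is that you bound $\vcg_{n+c}(\er) \leq \rev_{n+c}(\er) = n+c$ rather than invoking the equality $\vcg_{n+c}(\er) = n+c$ (which also holds because the second-price auction is optimal for $\er$); this is harmless since only the upper bound is needed.
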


\subsection{Theorem~\ref{thm:bignmain} is tight for the EFFTW benchmark}
In this section, we prove that any analysis starting from the EFFTW benchmark can prove at best a competition complexity of $\sqrt{nm}$. Again, consider the distribution $\erm$. For the subsequent analysis, it will be helpful to instead think of replacing $\er$ by a distribution with CDF $F(x) = 1-1/x$ for $x \in [1,p)$, and $F(p) =1$ (that is, an equal revenue curve truncated at $p$) for $p \rightarrow \infty$. We will not explicitly replace $\er$ with this distribution, and we will always think of $p\rightarrow \infty$ in the subsequent analysis.

Our first lemma states that when considering $\erm$, max of sums in the EFFTW benchmark can be replaced by a sum of maxes. Observe that Lemma~\ref{lem:ERBM} does \emph{not} generally hold for distributions other than $\erm$, and moreover that generally swapping the max and sum results in a benchmark that is unachievable with any finite competition complexity (just consider distributions that are point-masses at $1$). But for equal revenue curves, this equation holds. 

\begin{lemma}\label{lem:ERBM} 
\begin{align*}
&\sum_{j=1}^m \mathbb{E}_{\vec{v} \leftarrow (\erm)^n} \left[\max_{i \in [n]} \left\{ \varphi_j(v_{ij})^+\cdot \mathbb{I}(\vec{v}_i \in R_j) +v_{ij} \cdot \mathbb{I}(\vec{v}_i \notin R_j)\right\}\right]\\
=& \sum_{j=1}^m \mathbb{E}_{\vec{v} \leftarrow (\erm)^n} \left[\max_{i \in [n]} \left\{\varphi_j(v_{ij})^+\cdot \mathbb{I}(\vec{v}_i \in R_j)\right\} +\max_{i \in [n]} \left\{v_{ij} \cdot \mathbb{I}(\vec{v}_i \notin R_j)\right\}\right] \\
 =& nm + \sum_{j=1}^m \mathbb{E}_{\vec{v} \leftarrow (\erm)^n} \left[\max_{i \in [n]} \left\{v_{ij} \cdot \mathbb{I}(\vec{v}_i \notin R_j)\right\}\right] 
\end{align*}
 
\end{lemma}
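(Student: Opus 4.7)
The plan is to exploit two features of the truncated equal revenue curve $\erm$ (with CDF $F(x)=1-1/x$ on $[1,p)$ and a point mass of size $1/p$ at $p$) that simplify drastically as $p\to\infty$, as stipulated by the excerpt. First, since every marginal $F_j$ is identical, quantile comparisons reduce to value comparisons, so $\vec v_i\in R_j$ exactly when $v_{ij}$ is the maximum coordinate of $\vec v_i$ (with uniform tiebreaking among coordinates equal to $p$). Second, the regular virtual value $\varphi_j(v)$ equals $0$ for $v<p$ and equals $p$ at the point mass. Hence $\varphi_j(v_{ij})^+\mathbb{I}(\vec v_i\in R_j)$ is always $0$ or $p$, and takes the value $p$ precisely when $v_{ij}=p$ and $j$ wins the tiebreak among coordinates of $\vec v_i$ that hit the point mass.

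For the first equality, I would compare both sides pointwise in $\vec v$, splitting on whether any bidder achieves $v_{ij}=p$ for item $j$. If none does, every term $\varphi_j(v_{ij})^+\mathbb{I}(\vec v_i\in R_j)$ vanishes, so both the LHS max-of-sum and the RHS sum-of-maxes collapse to $\max_i v_{ij}\mathbb{I}(\vec v_i\notin R_j)$. If some $i^*$ has $v_{i^*j}=p$ (a probability $O(n/p)$ event), then with probability $1-o(1)$ no other coordinate of $\vec v_{i^*}$ hits $p$, so $\vec v_{i^*}\in R_j$; the LHS then equals $p$, while the RHS equals $p+\max_{i:\vec v_i\notin R_j}v_{ij}$. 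The discrepancy is at most $\max_{i\ne i^*}v_{ij}$, whose unconditional expectation under $\er$ is $O(n\ln p)$ by a standard tail computation. Multiplied by the $O(n/p)$ probability of this case, the discrepancy contributes $o(1)$ to the total, giving the first equality in the $p\to\infty$ limit.

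For the second equality, I would show $\E[\max_i \varphi_j(v_{ij})^+\mathbb{I}(\vec v_i\in R_j)]\to n$ for each $j$. This max is always $0$ or $p$, and equals $p$ precisely on the event $E_j=\{\exists i: v_{ij}=p\wedge \vec v_i\in R_j\}$. Using $\Pr[v_{ij}=p]=1/p$ independently across $(i,j)$, and conditioning on $v_{ij}=p$, the event $\vec v_i\in R_j$ holds whenever no other coordinate of $\vec v_i$ equals $p$ (probability $(1-1/p)^{m-1}\to 1$), and otherwise with probability $1/K$ among $K\ge 2$ tied coordinates. A union bound on collision terms gives $\Pr[E_j]=(n/p)(1-o(1))$, so $\E[\max_i \varphi_j(v_{ij})^+\mathbb{I}(\vec v_i\in R_j)]=p\cdot\Pr[E_j]\to n$, and summing over $j$ gives $nm$. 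The main obstacle is handling the two distinct orders of smallness in tandem: the $1/p$ probability of the point mass and the $O(n\ln p)$ tail of the continuous part of $\er$. Once the $O(n\ln p)$ tail estimate is noted, the remaining error analyses in both equalities reduce to routine union bounds, so no auction-theoretic input is required beyond the explicit formula for $\varphi$ on $\er$.
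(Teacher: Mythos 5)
Your proposal follows essentially the same strategy as the paper: analyze the truncated equal-revenue distribution, observe that the virtual value is $\{0,p\}$-valued, show the two max terms can collide only with probability $O(1/p^2)$, and compute $p \cdot \Pr\left[\max_i \varphi_j(v_{ij})^+\mathbb{I}(\vec v_i \in R_j) > 0\right] \to n$. The main organizational difference is in the first equality: the paper writes the left side as a three-term decomposition (conditioning on the value of $\max_i v_{ij}\mathbb{I}(\vec v_i\notin R_j)$ and whether any virtual value fires), and argues each piece converges, implicitly relying on the fact that $B := \max_i v_{ij}\mathbb{I}(\vec v_i\notin R_j)$ has finite expectation \emph{even for untruncated} $\er$ (because $v_{ij}$ large and $\vec v_i\notin R_j$ forces another coordinate to be even larger, giving a $1/x^2$ tail). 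You instead bound the nonnegative gap between RHS and LHS, which is exactly $\min(A,B)$ where $A=\max_i\varphi_j(v_{ij})^+\mathbb{I}(\vec v_i\in R_j)\in\{0,p\}$, so the gap equals $B\cdot\mathbb{I}(A=p)$; this is arguably the cleaner identity. Two small points to make your argument airtight: (i) the step ``expectation of $\max_{i\ne i^*}v_{ij}$ is $O(n\ln p)$, times the $O(n/p)$ probability, gives $o(1)$'' needs the remark that conditional on $\{\exists i^*: v_{i^*j}=p\}$ the other entries $v_{ij}$, $i\ne i^*$, remain i.i.d.\ $\er$ by independence, so the conditional expectation bound holds and the product bound is legitimate; (ii) you should explicitly dispose of the residual event that $\vec v_{i^*}\notin R_j$ (i.e.\ $i^*$ loses the quantile tiebreak because another coordinate also hits $p$), which has probability $O(m/p^2)$ and so contributes $O(m/p)\to 0$ after multiplying by $p$. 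Your crude $O(n\ln p)$ tail bound for the truncated curve is weaker than the paper's finite-expectation observation, but it suffices because it is paired with the $O(n/p)$ collision probability.
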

\begin{proof}
The intuition is roughly as follows: for a single value drawn from the equal revenue curve truncated at $p$, we have $\varphi(v) = 0$ when $v < p$ and $\varphi(p) = p$. So the distribution of virtual values is $0$ with probability $1-1/p$, and $p$ with probability $1/p$. One can therefore informally think of the untruncated equal revenue curve as having virtual value distribution that is $0$ with probability $1$, and $+\infty$ with probability $0$ (yielding an expected value of $1$). Therefore, the argmaximum in the benchmark is some $i$ with $\vec{v}_i \notin R_j$ with probability $1$, and with probability $0$ the argmaximum is some $i$ with $\vec{v}_i \in R_j$ and $\varphi_j(v_{ij}) = +\infty$. As the $\varphi$ term only interferes with probability $0$, we can simply sum the terms instead. Of course, this is quite informal, but provides good intuition for where the proof is going (and why we need to consider $\er$ truncated at $p$ for $p \rightarrow \infty$ to be formal). 

To begin the proof, observe first that whenever $\varphi_j(v_{ij})^+>0$, we have $\varphi_j(v_{ij})^+ = p \geq v_{i'j}$ for all $i'$. That is, whenever a virtual value is non-zero, it is at least as large as any value (because the virtual value is only non-zero when it is equal to $p$, the maximum possible value). Now, let $f^*(\cdot)$ denote the density of $\max_{i \in [n]} \left\{v_{ij} \cdot \mathbb{I}(\vec{v}_i \notin R_j)\right\}$, and let $P^*(x)$ denote the probability that $\max_{i \in [n]} \{\varphi_j(v_{ij})^+\cdot \mathbb{I}(\vec{v}_i\in R_j) \} = 0$, conditioned on $\max_{i \in [n]} \left\{ v_{ij} \cdot \mathbb{I}(\vec{v}_i \notin R_j)\right\} = x$. Then we can write:
\begin{align*}
&\sum_{j=1}^m \mathbb{E}_{\vec{v} \leftarrow (\erm)^n} \left[\max_{i \in [n]} \left\{ \varphi_j(v_{ij})^+\cdot \mathbb{I}(\vec{v}_i \in R_j) +v_{ij} \cdot \mathbb{I}(\vec{v}_i \notin R_j)\right\}\right]\\
 = &m \cdot 
 	\!\begin{aligned}[t] &\left(p \cdot \Pr[\max_{i \in [n]}\{v_{ij}\cdot \mathbb{I}(\vec{v}_i \notin R_j)\} = p] \cdot P^*(p) \right.\\ 
 &+ \int_0^p {x \cdot} f^*(x)\cdot P^*(x) dx\\ 
 &\left. + p\cdot \Pr\left[\max_{i \in [n]} \left\{\varphi_j(v_{ij})^+\cdot \mathbb{I}(\vec{v}_i \in R_j)\right\} > 0\right]\right).
   \end{aligned}\\
 \end{align*}
The left two terms sum the expected contribution from bidders $\notin R_j$, and the last term covers the contribution from bidder $\in R_j$. Let's begin with the left-most term. We claim here that the left-most term approaches $0$ as $p \rightarrow \infty$. To see this, observe that in order to possibly have $\max_{i\in [n]} \left\{v_{ij} \cdot \mathbb{I}(\vec{v}_i \notin R_j)\right\} = p$, there must exist two pairs $(i,j)$ for which $v_{ij} \geq p$. This is because in order to have $\vec{v}_i \notin R_j$, \emph{and} $v_{ij} = p$, we must have $v_{ij'} \geq p$ for some $j' \neq j$. Observe that the probability that this happens is at most $\binom{nm}{2}/p^2$. This yields:

$$\lim_{p \rightarrow \infty} p \cdot \Pr[\max_{i \in [n]}\{v_{ij}\cdot \mathbb{I}(\vec{v}_i \notin R_j)\} = p] \cdot P^*(p) \leq p \cdot \binom{nm}{2}/p^2 = 0.$$

Let's now analyze the integral. Here, we will simply claim that as $p\rightarrow\infty$, $P^*(x) \rightarrow 1$ for all $x$. To see this, let's first understand how conditioning on {$\max_{i \in [n]}\{v_{ij} \cdot \mathbb{I}(\vec{v}_{i} \notin R_j)\} = x$}  
affects the distribution of $\max_{i \in [n]} \left\{\varphi_j(v_{ij})^+\cdot \mathbb{I}(\vec{v}_i \in R_j)\right\}$. 
Observe that conditioning on {$\max_{i \in [n]}\{v_{ij} \cdot \mathbb{I}(\vec{v}_{i} \notin R_j)\} = x$}
 may bias the distribution of the number of indices for which 
 {$\vec{v}_{i} \in R_j$}. For example, if $\max_{i \in [n]} \{v_{ij} \cdot \mathbb{I}(\vec{v}_i \notin R_j)\} = 0$, then we know that $\vec{v}_i \in R_j$ for all $i$. Similarly, if $\max_{i \in [n]} \{v_{ij} \cdot \mathbb{I}(\vec{v}_i \notin R_j)\}$ is large, then it is more likely that $\vec{v}_i \notin R_j$ for many $i$ (because then more terms in the max are non-zero).
 {But this conditioning does not bias the distribution of $\varphi_j(v_{ij})$ for those indices (conditioned on $\vec{v}_i \in R_j$). That is, once we condition in a set $S$ of bidders with $\vec{v}_i \in R_j$ for all $i \in S$, the distribution of $\max_{i \in S} \{\varphi_j(v_{ij})\}$ is independent of $\max_{i \in [n]} \{v_{ij} \cdot \mathbb{I}(\vec{v}_i \notin R_j\}$.} So certainly $1-P^*(x)$ is at most the probability that $n$ independent draws from $\varphi_j(v_{ij})$, conditioned on $\vec{v}_i \in R_j$, are all less than $p$, as for all $x$ there are at most $n$ indicies for which $\vec{v}_i \in R_j$ (simply because there are $n$ bidders). Observe further that the distribution of $\varphi_j(v_{ij})$, conditioned on $\vec{v}_i \in R_j$, is simply the maximum of $m$ i.i.d. draws from $\er$ (truncated at $p$). So the probability that a single one of these draws exceeds $p$ is at most $m/p$ by the union bound. Again taking a union bound over the $n$ draws, the probability that any exceed $p$ is at most $mn/p$. As $p \rightarrow \infty$, this approaches $0$. Therefore, as $p \rightarrow \infty$, $P^*(x) \rightarrow 1$. Observe that we've now shown that:

$$\lim_{p \rightarrow \infty} m \cdot \left(p \cdot \Pr[\max_{i \in [n]}\{v_{ij}\cdot \mathbb{I}(\vec{v}_i \notin R_j)\} = p] \cdot P^*(p) + \int_0^p {x\cdot}  f^*(x)\cdot P^*(x) dx\right) = 0 + m\int_{0}^\infty {x \cdot} f^*(x)dx$$
$$ =  \sum_{j=1}^m \mathbb{E}_{\vec{v} \leftarrow (\erm)^n} \left[\max_{i \in [n]} \left\{v_{ij} \cdot \mathbb{I}(\vec{v}_i \notin R_j)\right\}\right] .$$

We now turn to the final term. First, consider all $nm$ i.i.d. draws from $\er$. There are two ways in which we can have $\max_{i \in [n]} \{\varphi_j(v_{ij})^+\cdot \mathbb{I}(\vec{v}_i \in R_j)\} > 0$. First, maybe the maximum of these $nm$ draws is some bidder $i$'s value for item $j$, and this $v_{ij} = p$. Or, maybe the maximum of these $nm$ draws is some bidder $i$'s value for some other item $j' \neq j$, but there is another bidder whose value for item $j$ exceeds $p$ (implying that there are at least two of the $nm$ draws that exceed $p$). 

For the first case, the probability that the maximum is some bidder $i$'s value for item $j$ is exactly $1/m$. Independently, the probability that this value exceeds $p$ is $1-(1-1/p)^{nm} \in [nm/p-\binom{nm}{2}/p^2, nm/p]$. For the second case, the probability that at least two of the $nm$ draws exceed $p$ is at most $\binom{nm}{2}/p^2$. Therefore, we get that: 
 $$ \Pr\left[\max_{i \in [n]} \left\{\varphi_j(v_{ij})^+\cdot \mathbb{I}(\vec{v}_i \in R_j)\right\} > 0\right] \in  [n/p -\binom{nm}{2}/(mp^2), n/p+\binom{nm}{2}/p^2].$$

And therefore $$ p\cdot \Pr\left[\max_{i \in [n]} \left\{\varphi_j(v_{ij})^+\cdot \mathbb{I}(\vec{v}_i \in R_j)\right\} > 0\right] \in  [n-\binom{nm}{2}/(mp), n+\binom{nm}{2}/p].$$

Implying that:

$$ \lim_{p \rightarrow \infty}p \cdot \Pr\left[\max_{i \in [n]} \left\{\varphi_j(v_{ij})^+\cdot \mathbb{I}(\vec{v}_i \in R_j)\right\} > 0\right] =n.$$

This completes the analysis of the right-most term, as now:

$$\lim_{p \rightarrow \infty} m\cdot p\cdot \Pr\left[\max_{i \in [n]} \left\{\varphi_j(v_{ij})^+\cdot \mathbb{I}(\vec{v}_i \in R_j)\right\} > 0\right] = \mathbb{E}_{\vec{v} \leftarrow (\erm)^n} \left[\max_{i \in [n]} \left\{\varphi_j(v_{ij})^+\cdot \mathbb{I}(\vec{v}_i \in R_j)\right\}\right] = nm.$$

Putting both parts together proves the lemma.

\end{proof}

Now, we just need to analyze $\sum_{j=1}^m \mathbb{E}_{\vec{v} \leftarrow (\erm)^n} \left[\max_{i \in [n]} \left\{v_{ij} \cdot \mathbb{I}(\vec{v}_i \notin R_j)\right\}\right] $.

\begin{proposition}\label{prop:ERvals} Let $n \geq 4m$. Then: $$\sum_{j=1}^m \mathbb{E}_{\vec{v} \leftarrow (\erm)^n} \left[\max_{i \in [n]} \left\{v_{ij} \cdot \mathbb{I}(\vec{v}_i \notin R_j)\right\}\right]  = \geq \frac{m\sqrt{mn}}{{14}}.$$
\end{proposition}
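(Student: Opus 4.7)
The plan is to fix an item $j$ and lower bound $E_j := \mathbb{E}[\max_{i \in [n]} \{v_{ij}\mathbb{I}(\vec{v}_i \notin R_j)\}]$ via a simple threshold argument that exploits independence across bidders, then sum over $j$. Set $Y_i := v_{ij}\mathbb{I}(\vec{v}_i \notin R_j)$. Since bidders are i.i.d., the $Y_i$ are i.i.d., so $\Pr[\max_i Y_i < \tau] = (1-p_\tau)^n \leq e^{-np_\tau}$ where $p_\tau := \Pr[Y_1 \geq \tau]$. If I can pick $\tau$ so that $np_\tau \geq 1$, then $\Pr[\max_i Y_i \geq \tau] \geq 1-1/e$ and hence $E_j \geq (1-1/e)\tau$.

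To lower bound $p_\tau$, I work in quantile space. Let $q_{ij} := F_j(v_{ij})$, which is uniform on $[0,1]$; the event $\vec{v}_i \notin R_j$ is exactly $q_{ij} < \max_{k \neq j} q_{ik}$, which has probability $1-q^{m-1}$ conditional on $q_{ij}=q$. For the truncated equal revenue curve (with $p\to\infty$), the event $v_{ij} \geq \tau$ is equivalent to $q_{ij} \geq 1-1/\tau$ whenever $\tau < p$. Substituting $u = 1-q$,
$$p_\tau \;=\; \int_{1-1/\tau}^1 (1-q^{m-1})\,dq \;=\; \int_0^{1/\tau}(1-(1-u)^{m-1})\,du.$$
Using $(1-u)^{m-1} \leq e^{-(m-1)u}$ together with the concavity bound $1-e^{-x} \geq x/2$ on $[0,1]$, the integrand is at least $(m-1)u/2$ as long as $(m-1)u \leq 1$, i.e., $u \leq 1/(m-1)$. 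Provided $\tau \geq m-1$, this condition holds throughout the range of integration, and a direct integration gives $p_\tau \geq (m-1)/(4\tau^2)$.

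Now I set $\tau := \sqrt{n(m-1)}/2$. The hypothesis $n \geq 4m$ implies $n \geq 4(m-1)$ and thus $\tau \geq m-1$, so the preceding estimate applies and $np_\tau \geq 1$. Therefore $E_j \geq (1-1/e)\sqrt{n(m-1)}/2$. Summing over $j \in [m]$ and using $m-1 \geq m/2$ (valid for $m \geq 2$),
$$\sum_{j=1}^m E_j \;\geq\; \frac{m(1-1/e)\sqrt{n(m-1)}}{2} \;\geq\; \frac{(1-1/e)}{2\sqrt{2}}\,m\sqrt{nm} \;\geq\; \frac{m\sqrt{nm}}{14},$$
where the final inequality uses $(1-1/e)/(2\sqrt{2}) \approx 0.224 > 1/14$.

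The main technical point is locating a threshold $\tau$ at which (a) the linearization $1-(1-u)^{m-1} \geq (m-1)u/2$ is valid throughout $[0,1/\tau]$, and (b) the expected count $np_\tau$ of bidders with $Y_i \geq \tau$ is at least $1$. Both demands force $\tau = \Theta(\sqrt{nm})$ and together they require exactly the large-$n$ regime $n = \Omega(m)$ that the proposition hypothesizes. No issue arises from the $\er$ truncation because events involving $v_{ij}=p$ have vanishing probability as $p \to \infty$ and the quantile argument is insensitive to $p$ for $\tau < p$.
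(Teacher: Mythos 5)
Your proof is correct, and it takes a genuinely different and simpler route than the paper. The paper fixes an item $j$ and reasons about the joint order statistics of all $nm$ draws: it bounds $P_\ell$, the probability that the $\ell^{\mathrm{th}}$-highest draw overall realizes $\max_i\{v_{ij}\cdot\mathbb{I}(\vec{v}_i\notin R_j)\}$, via a careful case analysis of how the top $\ell$ draws are permuted among (bidder, item) pairs, then combines this with the closed form $\mathbb{E}[V_{\ell,nm}] = nm/(\ell-1)$ for the expected $\ell^{\mathrm{th}}$-highest of $nm$ i.i.d.\ $\er$ draws, plus an auxiliary elementary inequality (Lemma~\ref{lem:moremath}) to control $P_\ell$. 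Your argument sidesteps all of that: you observe $Y_i := v_{ij}\cdot\mathbb{I}(\vec{v}_i\notin R_j)$ is i.i.d.\ across bidders, compute the one-bidder tail $p_\tau = \int_{1-1/\tau}^1 (1-q^{m-1})\,dq$ exactly in quantile space, lower bound it as $p_\tau\geq(m-1)/(4\tau^2)$ when $\tau\geq m-1$, and pick $\tau=\sqrt{n(m-1)}/2$ so that $np_\tau\geq 1$, giving $\mathbb{E}[\max_i Y_i]\geq(1-1/e)\tau$ by a standard threshold-and-Markov step. I verified each inequality: $(1-u)^{m-1}\leq e^{-(m-1)u}$, $1-e^{-x}\geq x/2$ on $[0,1]$, the condition $\tau\geq m-1$ follows from $n\geq 4m$, and the final constant $(1-1/e)/(2\sqrt{2})\approx 0.22 > 1/14$. (Like the paper, you implicitly use $m\geq 2$ -- which is the only sensible regime since the proposition's LHS vanishes at $m=1$ -- via $m-1\geq m/2$.) The upshot: the paper's argument is more structural about the combinatorics of rank assignments and would generalize to analyses that track contributions rank by rank, whereas yours exploits the per-bidder i.i.d.\ structure and the exactly computable quantile-space tail for $\er$, yielding a shorter proof with a slightly better constant.
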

\begin{proof}

We begin by considering the probability that the $\ell^{th}$ highest of $nm$ draws from $\er$ (or any distribution) is some bidder $i$'s value for item $j$, \emph{and} $\vec{v}_i \notin R_j$ \emph{and} for all $\ell' < \ell$, either it is some bidder $i'$'s value for an item $j' \neq j$, or $v_{i'} \in R_j$. That is, we are interested in computing the probability that the $\ell^{th}$ highest of the $nm$ i.i.d. draws is the maximum value (times $ \mathbb{I}(\vec{v}_i \notin R_j)$) for item $j$. Observe that this probability is well-defined: it does not depend on the particular values for the $nm$ draws (and does not even depend on the distribution from which they are drawn). Indeed, whether the desired event occurs or not is only a function of how the draws are permuted among the $nm$ values of bidders for items. 

So denote by $P_\ell$ the probability that the $\ell^{th}$ highest of $nm$ draws is {some} bidder $i$'s value for item $j$ \emph{and} $\vec{v}_i \notin R_j$ \emph{and} for all $\ell' < \ell$, either it is some bidder $i'$'s value for an item $j' \neq j$, or $\vec{v}_{i'} \in R_j$. {Let $E_1$ denote the event that the $\ell^{th}$ highest is some bidder's value for item $j$. Let $E_2$ denote the probability that $\vec{v}_{i(\ell)} \notin R_{j(\ell)}$, where the $\ell^{th}$ highest is assigned to bidder $i(\ell)$ and item $j(\ell)$. Let $E_3$ denote the event that for all $\ell' < \ell$, either $j(\ell') \neq j(\ell)$ or $\vec{v}_{i(\ell')}\in R_{j(\ell)}$.} Then it is easy to see that $\Pr[E_1] = 1/m$. Moreover, events $E_2$ and $E_3$ only involve where the highest $\ell-1$ draws go. In particular, observe that both $E_2$ and $E_3$ are independent of $E_1$. Moreover, observe that $E_3$ is only more likely to occur conditioned on $E_1$ and $E_2$: if one of the top $\ell-1$ draws is for the same bidder as the $\ell^{th}$, it is necessarily \emph{not} for the same item $j$, and also necessarily \emph{not} helping to put some other $\vec{v}_{i'} \notin R_j$ (by being larger than $v_{i'j})$ (hence making it more likely that all of the top $\ell-1$ draws are permuted to some item other than $j$, and also that those values which are still permuted to item $j$ are $\in R_j$). Therefore:

\begin{align*}
P_\ell &= \Pr[E_1] \cdot \Pr[E_2 | E_1] \cdot \Pr[E_3 |E_2 \wedge E_1]\\
 &= {\Pr[E_1]} \cdot \Pr[E_2] \cdot \Pr[E_3 | E_2 \wedge E_1] \\
&\geq {\Pr[E_1]} \cdot \Pr[E_2] \cdot \Pr[E_3]\\
&=\Pr[E_3 \wedge E_1] \cdot \Pr[E_2].
\end{align*}

Let's first analyze the probability of $E_2$. Observe that event $E_2$ does \emph{not} happen only if each of the top $\ell-1$ draws are permuted to a different bidder than $\ell$. So:

$$\Pr[E_2]  = 1-\left(1-\frac{m-1}{mn-1}\right)\cdot\left(1- \frac{m-1}{mn-2} \right)\cdot \dots \cdot \left(1-\frac{m-1}{mn-\ell+1} \right)\in [1-(1-\frac{m-1}{mn-1})^{\ell-1}, (\ell-1)/n].$$

To see the first equality, observe that $\frac{mn-1}{mn-\ell'}$ is the probability that the $\ell'^{th}$-highest draw is permuted to a different bidder than the $\ell^{th}$, conditioned on the first thru $(\ell'-1)^{th}$ draws all being permuted to a different bidder than the $\ell^{th}$. This is because there are $m-1$ items left for the same bidder, and $mn-\ell'$ remaining (bidder, item) pairs in total. To see the upper bound, observe that the probability that a single $\ell'<\ell$ is permuted to the same bidder as $\ell$ is exactly $(m-1)/(mn-1)$. So taking a union bound over all $\ell' < \ell$, we get an upper bound on the probability that \emph{any} $\ell'$ is permuted to the same bidder as $\ell$ is at most $(\ell-1)(m-1)/(mn-1) \leq (\ell-1)m/(mn) = (\ell-1)/n$. The lower bound follows by just observing that there are $\ell-1$ terms in the product, and each term is at {most} $(1-\frac{m-1}{mn-1})$. 

Finally, we prove one minor technical lemma to argue that when $\ell-1 \leq n/2$, $1-(1-\frac{m-1}{mn-1})^{\ell-1} \geq (1-\ln(2))\cdot (\ell-1)/{(2n)}$.

\begin{lemma}\label{lem:moremath} Let $\ell -1\leq n/2$ and $n \geq 2$. Then $1-(1-\frac{m-1}{mn-1})^{\ell-1} \geq (1-\ln(2)) \cdot (\ell-1)/{(2n)}$.
\end{lemma}
\begin{proof}
We start by searching for a constant $c$ such that for all $x \in [0,1/2]$, and all $y$ such that $xy \in[0,1/2]$, we have $1-(1-x)^y \geq cxy$. We will then use this constant to prove the lemma statement. Let's first fix $x$ and $c$, and minimize $1-(1-x)^y - cxy$ over all $y \in [1,1/(2x)]$ (this search is well-defined unless $x = 0$, in which case $1-(1-x)^y - cxy = 0$ for all $y$). The derivative with respect to $y$ is $-\ln(1-x) (1-x)^y - cx$, and the second derivative is $\ln(1-x)^2 (1-x)^y \geq 0$. So if the first derivative is positive at $y=1$, it is positive on the entire interval $[1,1/(2x)]$. So now we wish to see how small the first derivative can be, as a function of $x$, when $y = 1$. 

This is again single-variate optimization: minimize $-\ln(1-x) \cdot (1-x) - cx$ on $[0,1/2]$. The derivative is $1 +\ln(1-x) -c$. Observe that if $c \leq 1-\ln(2)$, then the derivative is $\geq 0$ on the entire range $(0,1/2)$. This means that for $c \leq 1-\ln(2)$, the minimizer occurs at $x= 0$, which is $-\ln(1)\cdot 1 - c\cdot 0 = 0$. So at this point, we conclude that when $c \leq 1-\ln(2)$, the first derivative with respect to $y$ is non-negative at $y=1$, and therefore positive on the entire interval $[1,1/(2x))$. This means that the minimum occurs at $y = 1$ (and we will restrict ourselves from now on to $c \leq 1-\ln(2)$). 

Now that we know the minimum occurs at $y=1$, our remaining minimization is trivial: minimize $1-(1-x)-cx = (1-c)x$, which is achieved at $x = 0$, and indeed $c\cdot 0 \geq 0$. So we have proven that when $c \leq 1-\ln(2)$, the minimum value for $1-(1-x)^y - cxy$ over all $x \in [0,1/2]$ and $y \in [1,1/(2x))$ occurs at $x = 0$, $y= 1$, and the value is $\geq 0$. This proves that $1-(1-x)^y  \geq (1-\ln(2))xy$ whenever $x \in [0,1/2]$ and $xy \in [0,1/2]$. 

Now we wish to apply the above fact when $x = \frac{m-1}{mn-1}$ and $y = \ell-1 \leq n/2$. Indeed, observe first that when $n \geq 2$, we have that $x \leq 1/2$. Moreover, when $y \leq n/2$ we have $x\cdot y = \frac{nm-n}{2mn-2} \leq 1/2$. {Applying the previous work gives $1-(1-\frac{m-1}{mn-1})^{\ell-1} \geq (1-\ln(2)) \cdot (\ell-1)\cdot \frac{m-1}{mn-1}$. Note that for $m \ge 2$, $\frac{m-1}{mn-1} \ge 1/(2n)$. Therefore we can conclude the lemma statement.}
\end{proof}

Now with Lemma~\ref{lem:moremath}, we can conclude that when $\ell-1 \leq n/2$, we have:

$$\Pr[E_2] \in [(1-\ln(2))(\ell-1)/{(2n)}, (\ell-1)/n].$$

{Now, let's analyze the probability of $E_3 \wedge E_1$. Observe that first, $\Pr[E_1] = 1/m$, and $E_3$ and $E_1$ are independent. So let's look at $\Pr[E_3 | E_1]$. Observe that $P_{\ell'}$ is \emph{exactly} the probability that $E_3$ does \emph{not} occur, conditioned on $E_1$, because the $\ell'^{th}$ highest value has $j(\ell') = j(\ell)$ and $\vec{v}_{i(\ell')} \notin R_{j(\ell')}$. If none of these events occur, then certainly event $E_3$ occurs. Therefore:}

$$\Pr[E_3 | E_1] \geq 1- \sum_{\ell' < \ell} P_{\ell'}, \Pr[E_3 \wedge E_1] \geq (1-\sum_{\ell' < \ell} P_{\ell'}) /m.$$

And therefore we can conclude that:
\begin{align}\label{eq:final}
P_\ell &\geq \Pr[E_1\wedge E_3] \cdot \Pr[E_2]
 \geq 1/m \cdot \left(1-\sum_{\ell'<\ell} P_{\ell'}\right) \cdot \left(1-\ln(2)\right)(\ell-1)/{(2n)}.
\end{align}

{Observe also that $P_\ell \leq \Pr[E_1 \wedge E_2] = \Pr[E_1] \cdot \Pr[E_2] \leq \frac{\ell-1}{mn}$.} In particular, we can use this \emph{upper bound} on $P_{\ell'}$, $\ell' < \ell$ to derive a cleaner \emph{lower bound} on $P_\ell$. Indeed, for any $\ell < \sqrt{mn} \leq n/2$ (the last inequality is because we assume that $n \geq 4m$. This is the only place we use this assumption, but it is a key step), we have that:

$$\sum_{\ell' < \ell} P_{\ell'} \leq \sum_{\ell' < \sqrt{mn}} (\ell'-1)/mn \leq \frac{\sqrt{mn}^2}{2mn} = 1/2.$$

{By plugging the above in inequality~\ref{eq:final}} we can conclude that for all $\ell < \sqrt{mn}$, we have:

$$P_\ell \geq \frac{(1-\ln(2))\cdot (\ell-1)}{{4}nm}.$$

To recap: we have now shown that, independent of the particular values drawn, the $\ell^{th}$ highest of $nm$ draws is equal to $\max_{i \in [n]}\{v_{ij} \cdot \mathbb{I}(\vec{v}_i \in R_j)\}$ with probability at least $P_\ell \geq \frac{(1-\ln(2))\cdot (\ell-1)}{{4}mn}$,
 for all $\ell < \sqrt{mn}$. So the only remaining step is to compute the expected value of the $\ell^{th}$ highest of $nm$ draws from $\er$. 

\begin{lemma}\label{lem:ERellth} Let $V_{x, y}$ denote the $x^{th}$ highest of $y$ i.i.d. draws from $\er$. Then $\mathbb{E}[V_{x,y}] = y/(x-1)$.\footnote{Observe that when $x = 1$, the expected value of the highest of $\geq 1$ draws from $\er$ is indeed $+\infty$.}
\end{lemma}
\begin{proof}
One approach would be to explicitly write out the integral for the $x^{th}$ highest of $y$ draws. This is tedious. Instead, we will count the revenue of an auction in two different ways. Consider an $(x-1)$-unit auction with $x-1$ copies of the same item. There are $y$ unit-demand bidders with values drawn i.i.d. from $\er$. Then as all virtual values are non-negative and there is no ironing, the revenue-optimal auction is to set the $x^{th}$ highest bid as the price and let the $(x-1)$ highest bidders get the item. The expected revenue of this auction is exactly $V_{x,y} \cdot (x-1)$. 

On the other hand, we claim that the optimal revenue for this setting is exactly $y$. To see this, observe that clearly the optimal revenue is at most $y$, as even with $y$ copies of the item, we could not get revenue more than $1$ per bidder. On the other hand, consider the mechanism that posts a price $p$ and lets any bidder willing to pay $p$ get the item, for $p \rightarrow \infty$. Then the probability that at least one bidder chooses to pay is at least $y/p - \binom{y}{2}/p^2$. So the revenue is at least $y - \binom{y}{2}/p$, which approaches $y$ as $p \rightarrow \infty$. So the optimal revenue for this setting is indeed $y$. 

Therefore, we get that $y = V_{x,y} \cdot (x-1)$ and $V_{x,y} = \frac{y}{x-1}$. 
\end{proof}

Now we can put everything together. We have argued that $\max_{i \in [n]} \{v_{ij} \cdot \mathbb{I}(\vec{v}_i \notin R_j)\}$ is equal to the $\ell^{th}$ highest of $nm$ draws with probability 
$P_\ell \geq \frac{(1-\ln(2))(\ell-1)}{{4}mn}$
 for all $\ell < \sqrt{mn}$, and also argued that the expected value of the $\ell^{th}$ highest of $mn$ draws is $\frac{mn}{\ell-1}$. Therefore, for all $j$ we get:

$$\mathbb{E}_{\vec{v} \leftarrow (\erm)^n} \left[\max_{i \in [n]}\left\{ v_{ij} \cdot \mathbb{I}(\vec{v}_i \notin R_j)\right\}\right] \geq \sum_{\ell<\sqrt{nm}} P_\ell \cdot \mathbb{E}[V_{\ell,nm}] \geq {\sum_{\ell<\sqrt{nm}}} \frac{(1-\ln(2))(\ell-1)}{{4}nm} \cdot \frac{nm}{\ell-1} \geq \sqrt{nm}/{14}. $$

Summing over all $j$ yields the proposition statement.
\end{proof}

Now with Lemma~\ref{lem:ERBM} and Proposition~\ref{prop:ERvals}, we can conclude the following:

$$\sum_{j=1}^m \mathbb{E}_{\vec{v} \leftarrow (\erm)^n} \left[\max_{i \in [n]} \left\{ \varphi_j(v_{ij})^+\cdot \mathbb{I}(\vec{v}_i \in R_j) +v_{ij} \cdot \mathbb{I}(\vec{v}_i \notin R_j)\right\}\right] \geq nm+m\sqrt{nm}/{14}.$$

This immediately implies the following corollary:

\begin{corollary}\label{cor:bigntight} If one compares to the EFFTW benchmark (which upper bounds the expected revenue) instead of the expected revenue, then the competition complexity of $n$ bidders with additive valuations over $m\leq n/4$ i.i.d., regular items is at least $\sqrt{nm}/{14}$. 

\end{corollary}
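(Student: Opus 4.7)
My plan is to combine the benchmark lower bound that was just established with a direct computation of the revenue obtainable by selling separately to $n+c$ bidders drawn from the equal revenue distribution, and then check when the former exceeds the latter.

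First, I would observe that Lemma~\ref{lem:ERBM} together with Proposition~\ref{prop:ERvals} already gives a clean lower bound on the EFFTW benchmark evaluated at the specific instance of $n$ i.i.d. bidders with $m$ independent items drawn from $\erm$ (thinking of $\er$ as truncated at $p \to \infty$): the benchmark is at least $nm + m\sqrt{nm}/14$ whenever $n \geq 4m$. So for the instance $(\erm, n)$ the EFFTW upper bound on $\rev_n$ is at least this much. Crucially, this is a statement about the benchmark itself, not $\rev_n$, which is exactly what we need for a lower bound that targets ``any argument starting from the EFFTW benchmark.''

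Next, I would compute $\srev_{n+c}(\erm)$ directly. Since the buyers are additive and the items are independent, selling separately decomposes into $m$ independent single-item problems, each with $n+c$ i.i.d. bidders drawn from $\er$. The folklore proposition in the excerpt gives $\rev_{k}(\er) = k$, so $\srev_{n+c}(\erm) = m(n+c)$. Because $\er$ is regular, this also equals $\vcg_{n+c}(\erm)$ by Bulow--Klemperer shifted by one (or by direct computation), modulo the usual additive $+1$ term if one wants to go all the way to VCG; the point is that the value is linear in $c$ with slope $m$.

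Finally, any proof that establishes a competition complexity of $c$ via the EFFTW benchmark must in particular satisfy $\srev_{n+c}(\erm) \geq \mathrm{EFFTW}(\erm, n)$ on this instance, i.e.\ $m(n+c) \geq nm + m\sqrt{nm}/14$, which rearranges to $c \geq \sqrt{nm}/14$. This yields the claimed lower bound. The only step that required real work was Proposition~\ref{prop:ERvals}, where the main obstacle was carefully accounting for the probability that the $\ell^{th}$-largest of $nm$ i.i.d.\ draws is assigned to an item $j$ for a bidder whose top quantile lies in some other coordinate; since that work is already done upstream, the corollary itself is just the arithmetic comparison above.
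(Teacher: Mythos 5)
Your proposal is correct and follows essentially the same route as the paper's own proof: combine Lemma~\ref{lem:ERBM} and Proposition~\ref{prop:ERvals} to get the benchmark lower bound $nm + m\sqrt{nm}/14$ on the instance $(\erm, n)$, compute $\srev_{n+c}(\erm) = \vcg_{n+c}(\erm) = m(n+c)$ from the folklore proposition, and compare. (Incidentally, your arithmetic is cleaner than the paper's, which appears to drop an $m$ factor in the intermediate display $\vcg_{n+c}(\erm) \geq nm + \sqrt{nm}/14$ before arriving at the same conclusion.)
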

\begin{proof}
Simply recall that $\rev_{n+c}(\er) = n+c$, and $\vcg_{n+c}(\erm) = m\cdot \rev_{n+c}(\er) = m(n+c)$. So in order to have $\vcg_{n+c}(\erm) \geq nm+\sqrt{nm}/{14}$, we must have $c \geq \sqrt{nm}/{14}$. 
\end{proof}

\section{Competition complexity is not independent of $n$.}
In this section, we show that while indeed the ``true'' competition complexity in the big-$n$ case may be better than what is achievable by comparing to the EFFTW benchmark, it is not independent of $n$. Specifically, we will argue that for all $n$ and $m= 2$ items, the optimal mechanism for $(\erm)^n$ achieves revenue at least $2n+\ln(n)/10$. 

\begin{proposition}\label{prop:erlogs} $\rev_n(\er^2) \geq 2n+\ln(n)/10$.

\end{proposition}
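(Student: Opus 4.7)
The plan is to exhibit a DSIC mechanism whose revenue already exceeds $2n+\ln(n)/10$. Since $\srev_n(\er^2)=2n$ is already achievable by selling separately, the natural attempt is to \emph{bundle} both items: allocate the bundle $\{1,2\}$ to the bidder $i^*$ with the highest bundle-value $b_{i^*} := v_{i^*1}+v_{i^*2}$ and charge the second-highest bundle-value among all $n$ bidders. This is a single-item Vickrey auction on the artificial scalar $b$, so it is DSIC, and its expected revenue equals $\E[b_{(2),n}]$, the expected second-order statistic of $n$ i.i.d.\ copies of $b$. Any lower bound on $\E[b_{(2),n}]$ is therefore a lower bound on $\rev_n(\er^2)$.

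The first computation is the tail of $b = v_1+v_2$ for $v_1,v_2 \sim \er$ i.i.d. Convolving the density $1/x^2$ with itself on $[1,\infty)$ (and using partial fractions $1/(x^2(p-x)) = 1/(p^2 x)+1/(px^2)+1/(p^2(p-x))$) gives, for $p\ge 2$,
$$ q(p) := \Pr[b \ge p] = \frac{2}{p} + \frac{2\ln(p-1)}{p^2}. $$
The leading term $2/p$ is precisely the tail of an equal-revenue curve of rate $2$ (a rescaled copy of $\er$, supported on $[2,\infty)$); Lemma~\ref{lem:ERellth} together with the scaling $X \stackrel{d}{=}2Y$ for $Y\sim\er$ gives $\E[X_{(2),n}]=2n$ for this distribution. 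Setting $q_0(p):=2/p$ and $g(q):=1-(1-q)^n-nq(1-q)^{n-1}$ (so that $g(q(p))=\Pr[b_{(2),n}\ge p]$), I may then write
$$ \E[b_{(2),n}] \;=\; 2n \;+\; \int_2^\infty \bigl[g(q(p))-g(q_0(p))\bigr]\,dp, $$
reducing the goal to lower-bounding the excess integral by $\ln(n)/10$.

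The excess is driven by $\delta(p) := q(p)-q_0(p) = 2\ln(p-1)/p^2 > 0$, with the dominant contribution from the window $p=\Theta(n)$: there $q_0(p)=\Theta(1/n)$, $\delta(p)=\Theta(\ln n/n^2)$, and $g'(q_0(p)) = n(n-1)q_0(p)(1-q_0(p))^{n-2}$ is of order $n/e$. On the sub-range where $q(p) \le 1/(n-1)$ the function $g$ is convex (since $g''(q)=n(n-1)(1-q)^{n-3}(1-(n-1)q)\ge 0$ there), so $g(q(p))-g(q_0(p)) \ge g'(q_0(p))\cdot\delta(p)$; substituting $p=2(n-1)t$ and using $(1-1/((n-1)t))^{n-2}\to e^{-1/t}$, the resulting integral evaluates asymptotically to $\ln(n)\cdot\int_1^\infty e^{-1/t}/t^3\,dt + O(1) = (1-2/e)\ln n + O(1)$, comfortably exceeding $\ln(n)/10$ for all large $n$. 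The main obstacle is bookkeeping: converting the asymptotic estimate into a clean inequality valid for \emph{every} $n\ge 2$, especially on the boundary where the convexity region is tight. Small $n$ are handled by direct computation from $q(p)$; for instance, $\E[b_{(2),2}] = 2+\int_2^\infty q(p)^2\,dp \ge 2+2+(4\ln 2 - 2) = 2+4\ln 2$, which exceeds $4+\ln(2)/10$. Intermediate $n$ are absorbed by choosing the convexity threshold conservatively so that the asymptotic estimate dominates all lower-order terms.
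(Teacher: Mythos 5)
Your mechanism is genuinely different from the paper's. The paper builds a bespoke sequential posted-price mechanism with three bidder tiers (``high'' bidders can grab both items for a price $p\to\infty$, ``medium'' bidders get one remaining item at price $q=\sqrt n$), and the analysis proceeds via a Chernoff concentration bound on the count of medium bidders. You instead run a plain Vickrey auction on the bundle value $b_i=v_{i1}+v_{i2}$, which is both simpler to state and arguably more natural, and reduce everything to estimating $\E[b_{(2),n}]$. Your tail computation $q(p)=2/p+2\ln(p-1)/p^2$ matches the paper's own computation of $\Pr[v_1+v_2\ge 2q]$ after the substitution $p=2q$, the decomposition $\E[b_{(2),n}]=2n+\int_2^\infty[g(q(p))-g(q_0(p))]\,dp$ is correct (both supports start at $2$), and the asymptotic constant $\int_1^\infty e^{-1/t}t^{-3}\,dt=\int_0^1 ue^{-u}\,du=1-2/e\approx 0.264$ is right and comfortably above $1/10$. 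So the approach is sound and would, with effort, yield an alternative proof.

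However, as you yourself flag, the proof is not finished: what you have is an asymptotic estimate $(1-2/e)\ln n+O(1)$ plus an explicit check at $n=2$, not an inequality valid for every $n\ge 2$. The missing pieces are concrete: (i) you need an explicit threshold $P_n$ (slightly above $2(n-1)$) with $q(P_n)\le 1/(n-1)$, and you need to bound the loss from integrating over $[P_n,\infty)$ rather than $[2(n-1),\infty)$; (ii) you need to replace the limit $(1-1/((n-1)t))^{n-2}\to e^{-1/t}$ by a two-sided inequality usable for finite $n$; (iii) you need to control the $O(1)$ term with an explicit constant and verify the resulting inequality on the finitely many $n$ where the asymptotic slack $\approx 0.16\ln n$ is not yet large. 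None of these is a fundamental obstruction (the margin $1-2/e-1/10$ is healthy, and the $O(1)$ constant appears to be positive), but they are real work that the write-up defers. Until that bookkeeping is carried out, the statement is not actually proved for all $n$; it would be more accurate to call this a credible proof sketch than a proof.
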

\begin{proof}
Consider the following mechanism:
\begin{itemize}
\item Bidders select whether to be ``high'', ``low'', or ``medium''.
\item First, all high bidders are processed in random order. When processed, a high bidder can get both remaining items for price $p$.
\item Next, all medium bidders are processed in random order. When processed, a medium bidder will get a random remaining item for price $q$.
\item Low bidders get no items.
\end{itemize}

We will think of $p \rightarrow \infty$, and take $q:=\sqrt{n}$. First, it is easy to see that any bidder who values the grand bundle (of both items) at least as $2q$ will choose to be a high or medium bidder. Let's now count the expected number of such bidders. 

\begin{lemma}$$\Pr_{(v_1,v_2) \leftarrow \er^2}\left[v_1+v_2 \geq 2q\right] = \frac{2}{2q-1} +\frac{(q-\frac{1}{2})\cdot \ln(2q-1) - q}{q^2(2q-1)}$$
\end{lemma}

\begin{proof}
There are two ways that that we can have $v_1+v_2 \geq 2q$. First, maybe $v_1 \geq 2q-1$. In this case, as $v_2 \geq 1$, surely $v_1+v_2 \geq 2q$. Second, maybe $v_1 < 2q-1$ and $v_2 \geq 2q-v_1$. So the probability of both cases together is: $\frac{1}{2q-1} + \int_1^{2q-1} \frac{1}{x^2} \cdot \frac{1}{2q-x} dx$. Also:
\begin{align*}
&\int_1^{2q-1} \frac{1}{x^2} \cdot \frac{1}{2q-x} dx\\
=&\frac{1}{4} \cdot \left[\frac{x\ln(2q-x) + 2q - x\ln x}{q^2x}\right]_1^{2q-1}\\
=& -\frac{1}{4}\cdot \left(\frac{2q - (2q-1)\cdot \ln(2q-1)}{q^2(2q-1)} - \frac{\ln(2q-1) + 2q}{q^2}\right)\\
=&\frac{1}{2} \cdot \frac{-q+q(2q-1)+(2q-1)\cdot \ln(2q-1)}{q^2(2q-1)}\\
=& \frac{q^2 - q +(q-\frac{1}{2})\ln(2q-1)}{q^2(2q-1)}.
\end{align*}

Adding back in the $\frac{1}{2q-1}$, we get the lemma statement.
\end{proof}

We will want to take $q$ small enough so that in expectation $n^{\Omega(1)}$ bidders wish to be medium{/high}. This will let us claim that he expected number of medium/high bidders concentrates around its expectation. But we also want $q$ to be big enough so that we can get some revenue from these cases. 

\begin{lemma}Consider any set of $n-1$ bidders, and let $100 \leq q \leq \sqrt{n}$ and $p \rightarrow \infty$. Then with probability at least $1-e^{-n\ln^2(q)/(128q^2)} = 1-o(1/n)$, none of these bidders are high, and at least $\frac{n}{q} + \frac{n\ln(q)}{8q^2}$ are medium.
\end{lemma}
\begin{proof}
The number of bidders with $v_1+v_2$ exceeding {$2q$}
 is a sum of independent $\{0,1\}$ random variables whose expected value exceeds $\frac{n}{q} + \frac{n\ln(q)}{4q^2}$. (We have assumed $q \geq 100$ in order to simplify the second term). 

We therefore wish to understand the probability that the actual number of medium/high bidders is at least its expectation minus $\frac{n\ln(q)}{8q^2}$. This is a simple application of the Chernoff bound with $\mu \geq \frac{n}{q}$ and $\delta \leq \frac{\ln(q)}{8q}$. So the probability of this deviation is at most $e^{-n\ln^2(q)/(128q^2)}$. When $q < \sqrt{n}$, this probability is $o(1/n)$. 

Now we just need to recall that this is the probability that at least $\frac{n}{q} + \frac{n\ln(q)}{8q^2}$ bidders are high or medium. Observe further that the probability that any are high bidders is at most $4n/p$ (some bidder must value some item at at least $p/2$, which occurs with probability at most $4n/p$ by union bound), which approaches $0$ as $p \rightarrow \infty$. 
\end{proof}

Now, let's see what value a bidder would need to have in order to prefer to be high instead of medium.

\begin{lemma}Assuming that all other bidders are medium or low, and there are $k-1$ medium bidders, a bidder with $v_1+v_2 \geq \frac{pk-2q}{k-1}$ chooses to be a high bidder instead of medium.
\end{lemma}
\begin{proof}
If the bidder chooses to be medium, then they will get each item with probability $1/k$, and pay $2q/k$ in expectation. If they choose to be high, they will get each item with probability $1$, and pay $p$. So in order to prefer this, they would need to have $(v_1+v_2)\cdot (1-1/k) \geq p-2q/k$. Rearranging yields the lemma.
\end{proof}

Now we analyze the revenue from two possible cases. Below for notational simplicity, define $k:= \frac{n}{q}+\frac{n\ln(q)}{8q^2}$.

\begin{itemize}
\item \textbf{Case One: Some bidder has $v_1+v_2 \geq pk/(k-1)$.} In this case, with probability at least $1-o(1/n)$, there are no high bidders and at least $k$ medium bidders among the other $n-1$. Conditioned on this, the bidder will choose to be high, and we get revenue $p$. The probability that this case occurs is at least $2n(1-1/k)/p - \binom{2n}{2}(1-1/k)^2/p^2$ (as it occurs whenever anyone values some item above $p$). Conditioned on this, with further probability $1-o(1/n)$, this buyer chooses to be high and we get revenue $p$. So the total revenue from these cases is at least: $(1-o(1/n))\cdot (p \cdot (2n(1-1/k)/p - \binom{2n}{2}(1-1/k)^2/p^2)) =2(1-1/k)n-o(1)$ as $p \rightarrow \infty$. 
\item \textbf{Case Two: no bidder has $v_1+v_2 \geq pk/(k-1)$.} In this case, we know that with probability $1-o(1/n)$, there are at least two (in fact, many more) bidders who are medium. So we get revenue $2q$ in these cases. Observe that as $p\rightarrow \infty$, we are in this case with probability approaching $1$, so the total revenue from these cases is $2q \cdot (1-o(1/n)) = 2q - o(1)$ as $p \rightarrow \infty$. 
\end{itemize}

So our revenue is $2n(1-1/k) + 2q - o(1)$. Recall that $k:= \frac{n}{q} + \frac{n\ln(q)}{8q^2}=\frac{8nq+n\ln(q)}{8q^2}$. Therefore our revenue is:
\begin{align*}
&2n - \frac{16nq^2}{8nq+n\ln(q)} + \frac{16nq^2+2nq\ln(q)}{8nq+n\ln(q)}\\
=& 2n +\frac{2q\ln(q)}{8q+\ln(q)} \geq 2n+\ln(q)/5.
\end{align*}

Recall that we required {$q \leq \sqrt{n}$}
 in order for our calculations to be valid, so by setting $q:=\sqrt{n}$, our revenue is at least $2n+\ln(n)/10$. 

\end{proof}

And now we can conclude that the competition complexity cannot be independent of $n$:

\begin{corollary}\label{cor:logbign} The competition complexity of $n$ bidders with additive valuations over $m=2$ i.i.d., regular items is at least $\ln(n)/10$. In particular, there exists no function $f(\cdot)$ only of $m$ such that the competition complexity of $n$ bidders with additive valuations over $m$ i.i.d. items is $f(m)$. 

\end{corollary}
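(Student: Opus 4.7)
The plan is to derive this as an essentially immediate corollary of Proposition~\ref{prop:erlogs}, which does all the difficult work of lower-bounding $\rev_n(\er^2)$. First I would compute the side of the inequality corresponding to VCG with additional bidders. By the folklore proposition, $\rev_k(\er) = k$, and since $\er$ is regular the VCG mechanism (second-price, no reserve) on a single item matches this revenue in the $p \to \infty$ limit of the truncated equal-revenue distribution. Because VCG on additive bidders over independent items runs independently on each item, we get $\vcg_{n+c}(\er^2) = 2(n+c)$.

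Next, by the definition of competition complexity, $c$ additional bidders suffice only if $\vcg_{n+c}(\er^2) \geq \rev_n(\er^2)$. Combining the preceding computation with Proposition~\ref{prop:erlogs}, this requires $2(n+c) \geq 2n + \ln(n)/10$, which simplifies to $c = \Omega(\ln n)$ (concretely $c \geq \ln(n)/20$, which for large enough $n$ implies the stated $\ln(n)/10$ bound up to constants).

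For the ``in particular'' statement, I would argue by contradiction: suppose some function $f(\cdot)$ of $m$ alone upper bounds the competition complexity for all $n$ and $m$. Then for the fixed value $m=2$, the competition complexity would be bounded by the constant $f(2)$ for every $n$. This directly contradicts the $\Omega(\ln n)$ lower bound just established, so no such $f$ exists.

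There is no serious obstacle here since the hard step is Proposition~\ref{prop:erlogs}, which has already been proved. The only subtlety is the improper nature of $\er$ (infinite expectation), so the identities $\rev_k(\er) = \vcg_k(\er) = k$ should be read as limits of truncated equal-revenue distributions with truncation $p \to \infty$; this limiting interpretation is consistent with how the same distribution is already used throughout the appendix (including in the proof of Proposition~\ref{prop:erlogs} itself).
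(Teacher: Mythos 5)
Your proof is correct and is the only sensible derivation of the corollary from Proposition~\ref{prop:erlogs} (the paper leaves the step implicit), using $\vcg_{n+c}(\er^2) = 2(n+c)$ and the proposition's bound $\rev_n(\er^2) \geq 2n+\ln(n)/10$. One small note: your arithmetic gives $c \geq \ln(n)/20$, not $\ln(n)/10$; the paper's stated constant appears to be off by a factor of 2, but as you observe this has no effect on the ``in particular'' conclusion that no bound depending only on $m$ can exist.
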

\section{Omitted Proofs from Section~\ref{sec:prelim}}\label{app:prelim}

\begin{proof}[Proof of Fact~\ref{fact:one}]  The proof follows quickly from Myerson's lemma, stated below:
\begin{lemma}[Myerson's Lemma~\cite{Myerson81}]\label{lem:myerson} Consider any Bayesian Incentive Compatible mechanism for a single item with payment rule $P(\cdot)$ and allocation rule $X(\cdot)$. That is, on bids $\vec{v}$, the mechanism charges bidder $i$ $P_i(\vec{v})$ and awards bidder $i$ the item with probability $X_i(\vec{v})$. Then for all $i$, the expected payment made by bidder $i$ is equal to bidder $i$'s expected virtual welfare. That is:
$$\mathbb{E}_{\vec{v} \leftarrow D^n}\left[P_i(\vec{v})\right] = \mathbb{E}_{\vec{v}\leftarrow D^n}\left[X_i(\vec{v}) \cdot \varphi_D(v_i)\right],$$
$$\mathbb{E}_{\vec{v} \leftarrow D^n}\left[P_i(\vec{v})\right] \leq \mathbb{E}_{\vec{v}\leftarrow D^n}\left[X_i(\vec{v}) \cdot \overline{\varphi}_D(v_i)\right].$$
\end{lemma}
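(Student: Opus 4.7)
The plan is to prove Myerson's Lemma via the standard envelope-theorem / payment-identity argument, and then deduce the inequality from the defining averaging property of ironing. First, I would fix bidder $i$ and pass to the interim quantities $x(v) := \mathbb{E}_{v_{-i}}[X_i(v,v_{-i})]$ and $p(v) := \mathbb{E}_{v_{-i}}[P_i(v,v_{-i})]$. Bayesian incentive compatibility says $v\cdot x(v) - p(v) \geq v\cdot x(v') - p(v')$ for all $v,v'$; swapping the roles of $v$ and $v'$ gives the two-sided envelope inequality, from which one concludes (i) that $x(\cdot)$ is monotone non-decreasing, and (ii) that $p$ satisfies the payment identity
\begin{align*}
p(v) = v\cdot x(v) - \int_0^v x(t)\,dt,
\end{align*}
after normalizing $p(0)=0$ (WLOG for any IR-BIC mechanism over $\mathbb{R}_+$-valued types).

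Next, I would take the expectation under $v \sim D$ and apply Fubini to the integral term. Writing $F,f$ for the CDF and PDF of $D$:
\begin{align*}
\mathbb{E}_{v\sim D}\!\left[\int_0^v x(t)\,dt\right] = \int_0^\infty x(t)(1-F(t))\,dt = \mathbb{E}_{v\sim D}\!\left[x(v)\cdot \tfrac{1-F(v)}{f(v)}\right].
\end{align*}
Substituting into $\mathbb{E}[p(v)] = \mathbb{E}[v\cdot x(v)] - \mathbb{E}[\int_0^v x(t)\,dt]$ yields
\begin{align*}
\mathbb{E}_{v\sim D}[p(v)] = \mathbb{E}_{v\sim D}\!\left[x(v)\left(v - \tfrac{1-F(v)}{f(v)}\right)\right] = \mathbb{E}_{v\sim D}[x(v)\,\varphi_D(v)],
\end{align*}
which, after unrolling the interim definitions back to $P_i$ and $X_i$, is precisely the equality in the lemma statement.

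For the inequality with $\overline{\varphi}_D$, I would invoke the defining property of ironing: on each maximal interval $[a,b]$ where $\overline{\varphi}_D \neq \varphi_D$, the ironed function is constant and equal to the $f$-weighted average of $\varphi_D$ on $[a,b]$, so $\int_a^b (\overline{\varphi}_D - \varphi_D) f\,dv = 0$. Moreover, the convex-envelope construction in quantile space guarantees that $\overline{\varphi}_D - \varphi_D$ is non-negative on an initial subinterval of $[a,b]$ and non-positive on the remaining subinterval. Since $x(\cdot)$ is monotone non-decreasing, an Abel summation (or equivalently, integration by parts using a non-negative measure representing the derivative of $x$) gives
\begin{align*}
\int_a^b x(v)\bigl(\overline{\varphi}_D(v) - \varphi_D(v)\bigr) f(v)\,dv \ge 0
\end{align*}
on each ironed interval. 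Summing over all such intervals, together with equality off them, yields $\mathbb{E}[x(v)\varphi_D(v)] \leq \mathbb{E}[x(v)\overline{\varphi}_D(v)]$, and unrolling the interim notation gives the desired inequality. The main obstacle is rigorously verifying the sign pattern of $\overline{\varphi}_D - \varphi_D$, which requires the convex-envelope characterization of ironing in quantile space; I would cite the textbook treatment in \cite{HartlineBook} rather than re-derive it, since the rest of the argument is a routine envelope computation and a single-variable monotone-rearrangement inequality.
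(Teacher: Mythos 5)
The paper does not prove this lemma at all: it states it as ``Myerson's Lemma~\cite{Myerson81}'' inside the proof of Fact~\ref{fact:one}, cites~\cite{Myerson81}, and explicitly refers the reader to~\cite{HartlineBook} for the full development. So there is no paper proof to compare against; you are supplying a proof of a result the authors take as known. Your outline is the standard one and the equality part (envelope theorem, Fubini, substitute $\varphi_D(v)=v-\tfrac{1-F(v)}{f(v)}$) is fine.

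The ironing step, however, has two genuine errors. First, your claimed sign pattern is backwards in value space. The concave-envelope analysis lives in quantile space $q=1-F(v)$: if $\bar R$ denotes the concave hull of the revenue curve $R$, then on an ironed quantile interval $[q_1,q_2]$ one has $\bar R - R = 0$ at the endpoints and $\bar R - R \ge 0$ inside, so $\bar R' - R' = \overline{\varphi}-\varphi \ge 0$ at $q_1$ and $\le 0$ at $q_2$. Translating to value space reverses orientation (low quantile means high value), so on an ironed value interval $[a,b]$ one gets $\overline{\varphi}-\varphi \le 0$ near the left endpoint $a$ and $\ge 0$ near the right endpoint $b$ — the opposite of what you wrote. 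With your stated pattern (positive then negative) and $x(\cdot)$ non-decreasing, the Chebyshev/Abel argument would produce $\mathbb{E}[x\varphi]\ge\mathbb{E}[x\overline{\varphi}]$, i.e.\ the wrong inequality. Second, even with the orientation corrected, $\overline{\varphi}-\varphi$ need not cross zero exactly once on an ironed interval: $R$ can have several local minima strictly below $\bar R$, making $R'$ cross the constant $\bar R'$ several times. The statement you actually want, which does follow from the envelope construction, is that the cumulative quantity $G(v):=\int_a^v(\overline{\varphi}-\varphi)f$ satisfies $G(a)=G(b)=0$ and $G\le 0$ throughout (this is just $-(\bar R - R)\le 0$ after the change of variables). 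Then one integration by parts gives
\begin{align*}
\int_a^b x\,(\overline{\varphi}-\varphi)f\,dv = \bigl[xG\bigr]_a^b - \int_a^b G\,dx = -\int_a^b G\,dx \ge 0,
\end{align*}
using only that $dx\ge 0$. Replace the ``single sign change'' claim with this cumulative statement (and fix the orientation), and the argument closes correctly.
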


Now, consider the single-bidder auction that simply sets a price of $v$. Then it's expected revenue is simply $v \cdot \Pr_{w \leftarrow D}[w \geq v]$. The expected virtual welfare is $\mathbb{E}_{w \leftarrow D}[\varphi_D(w) \cdot \mathbb{I}(w \geq v)]$. Therefore, by Myerson's Lemma:

$$v =\frac{\mathbb{E}_{w \leftarrow D}\left[\varphi_D(w) \cdot \mathbb{I}(w \geq v)\right]}{\Pr_{w \leftarrow D}\left[w \geq v\right]} = \mathbb{E}_{w \leftarrow D_{\geq v}}\left[\varphi_D(w)\right].$$

The proof for ironed virtual values follows identically, after replacing the left-most equality with inequality.
\end{proof}

\section{Omitted Proofs from Section~\ref{sec:multi}}\label{app:multi}
\subsection{Little $n$ proofs}
\begin{proof}[Proof of Observation~\ref{obs:EFFTW}]
For all $\vec{v}$, if $\vec{v}_{(1)} \notin R_j$, then both random variables inside the expectation take value $v_{(1)j}$. If $\vec{v}_{(1)}\in R_j$, then the random variable for the left-hand expectation is at most $\max\{\overline{\varphi}_j(v_{(1)j}), v_{(2)j}\}$ (note that the $^+$ is no longer necessary as we're taking a maximum with $v_{(2)j} \geq 0$ anyway). The right-hand expectation is exactly this. Because the right-hand random variable is larger for all $\vec{v}$, the expectation is larger as well.
\end{proof}

\begin{proof}[Proof of Proposition~\ref{prop:littlenmain}] We again begin with a coupling. First, couple the draws $X_{i,n}:=F_j(v_{ij})$ for all $i$. Next, couple the quantiles $Y_{\ell, m-1} = F_\ell(v_{(1)\ell})$ for $ \ell < m, \ell \neq j$, and $Y_{j, m-1}=F_m(v_{(1)m})$ (if $j \neq m$, otherwise there is no $Y_{m, m-1}$ to define). Crucially, observe that this is a valid coupling, as values for items are drawn independently (in particular, conditioned on drawing $v_{ij}$ for all $j$, the quantiles $F_\ell(v_{i\ell})$ are still i.i.d. and uniform from $[0,1]$). Observe now the following:
\begin{itemize}
\item Always, $v_{(\ell)j} = F_j^{-1}(X_{(\ell),n})$ for all $\ell$.
\item Whenever $\vec{v}_{(1)} \in R_j$, $X'_L(n,m)=X_{(1),n}$. Therefore:
$$\mathbb{E}_{\vec{v} \leftarrow D^n}\left[\max\{v_{(1)j}\cdot \mathbb{I}(\vec{v}_j \notin R_j), \overline{\varphi}_j(v_{(1)j}), v_{(2)j}\}\cdot \mathbb{I}(\vec{v}_{(1)} \in R_j)\right] $$
$$= \mathbb{E}\left[\max\{\overline{\varphi}_j(F_j^{-1}(X'_L(n,m))),F_j^{-1}(X_{(2),n}) \} \cdot \mathbb{I}(X'_L(n,m) = X_{(1),n})\right].$$
\item Conditioned on $\vec{v}_{(1)} \notin R_j$, $X'_L(n,m)$ is a uniformly random sample from $[X_{(1),n},1]$. This is because there is some strictly positive number of $\ell$ such that $Y_{\ell, m-1} > X_{(1),n}$. Conditioned on being $>X_{(1),n}$, each such value is drawn uniformly from $[X_{(1),n},1]$. And then $X'_L(n,m)$ picks one of them uniformly at random. Using Fact~\ref{fact:one}, we get:
$$\mathbb{E}_{\vec{v} \leftarrow D}\left[v_{(1)j} \cdot \mathbb{I}(\vec{v}_{(1)} \notin R_j)\right] \leq \mathbb{E}\left[\overline{\varphi}_j(F_j^{-1}(X'_L(n,m)))\cdot \mathbb{I}(X'_L(n,m) \neq X_{(1),n})\right].$$
\end{itemize}
Summing the left-hand side of both equations is exactly $\mathbb{E}_{\vec{v} \leftarrow D^n} \left[\max \left\{v_{(1)j}\cdot \mathbb{I}(\vec{v}_{(1)} \notin R_j), \overline{\varphi}_j(v_{(1)j}), v_{(2)j} \right\} \right]$. Summing the right-hand side is clearly upper bounded by $\mathbb{E}\left[\max\left\{\overline{\varphi}_j(F_j^{-1}(X'_L(n,m))),F_j^{-1}(X_{(2),n})\right\}\right]$.

\end{proof}

 For subsequent proofs, we will need one basic fact about maximums of random variables:

\begin{fact}\label{fact:basic} For any three random variables $X, Y, Y'$ such that for all $x$, $\mathbb{E}[Y' | Y = y, X = x] \geq y$, $\mathbb{E}[\max\{X,Y'\}] \geq \mathbb{E}[\max\{X, Y\}]$. 
\end{fact}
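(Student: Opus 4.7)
The plan is to condition on the pair $(X,Y)$ and use the fact that $z \mapsto \max\{x,z\}$ is a convex function of $z$ (for any fixed $x$), so Jensen's inequality pushes expectation inside the max. Specifically, fix $X=x$ and $Y=y$. On this conditioning event, $\max\{X,Y\}=\max\{x,y\}$ is deterministic, so it suffices to show $\mathbb{E}[\max\{x,Y'\}\mid X=x,Y=y]\geq \max\{x,y\}$, and then take expectation over $(X,Y)$.

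For the inner inequality, I would first observe the pointwise bound $\max\{x,Y'\}\geq x$ (deterministically) and $\max\{x,Y'\}\geq Y'$. Taking conditional expectation and using the hypothesis $\mathbb{E}[Y'\mid X=x,Y=y]\geq y$ gives
\[
\mathbb{E}[\max\{x,Y'\}\mid X=x,Y=y] \;\geq\; \max\bigl\{x,\, \mathbb{E}[Y'\mid X=x,Y=y]\bigr\} \;\geq\; \max\{x,y\},
\]
which is exactly what we need. (Equivalently, one can invoke Jensen for the convex function $z\mapsto \max\{x,z\}$.) Integrating over the joint law of $(X,Y)$ yields
\[
\mathbb{E}[\max\{X,Y'\}] \;=\; \mathbb{E}\bigl[\mathbb{E}[\max\{X,Y'\}\mid X,Y]\bigr] \;\geq\; \mathbb{E}[\max\{X,Y\}],
\]
as desired.

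There is really no obstacle here: the fact is a one-line consequence of monotonicity of conditional expectation together with the trivial bound $\max\{x,Y'\}\geq \max\{x, \mathbb{E}[Y'\mid\cdot]\}$ in expectation. The only minor care needed is that the hypothesis is a statement about the \emph{conditional} expectation given both $X$ and $Y$ (not just $X$), which is exactly what lets us drop the $Y$-dependence after taking the conditional expectation.
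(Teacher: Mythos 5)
Your proof is correct and follows essentially the same route as the paper: condition on $(X,Y)=(x,y)$, show $\mathbb{E}[\max\{x,Y'\}\mid X=x,Y=y]\geq \max\{x,y\}$, and integrate. The paper phrases the inner step as a case split on $x\geq y$ vs.\ $y>x$, while you bound directly by $\max\{x,\mathbb{E}[Y'\mid\cdot]\}$; these are the same argument.
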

\begin{proof}
In fact, we show that for all $y, x$, $\mathbb{E}[\max\{X, Y'\} | Y = y,  X = x] \geq \max\{x, y\}$, which implies the desired statement. There are two cases to consider. First, perhaps $x \geq y$. In this case, we clearly have $\max \{x, Y'\} \geq x$ with probability $1$. Therefore, $\mathbb{E}[\max\{X , Y'\} | Y = y, X= x] \geq x$ as well. Second, perhaps $y > x$. In this case, by hypothesis we have $\mathbb{E}[Y' | Y = y, X = x] \geq y$. So clearly $\mathbb{E}[\max \{X, Y'\} | Y = y, X = x] \geq y$. This covers both cases and proves the fact.
\end{proof}

\begin{proof}[Proof of Corollary~\ref{cor:littlenmain}]
We first consider the three random variables $X = \overline{\varphi}_j(F_j^{-1}(X'_L(n,m))), Y = F_j^{-1}(X_{(2),n}), Y' =  \overline{\varphi}_j(F_j^{-1}(W_{2,n}))$. Then indeed, conditioned on $X = x$ and $Y = y$, $W_{2,n}$ is a uniformly random draw from $[X_{(2),n},1]$. Fact~\ref{fact:one} therefore concludes that $\mathbb{E}[Y' | X = x, Y = y] = y$, allowing an application of Fact~\ref{fact:basic} to conclude $\mathbb{E}\left[\max\left\{\overline{\varphi}_j(F_j^{-1}(X'_L(n,m))),F_j^{-1}(X_{(2),n})\right\}\right] \leq \mathbb{E}\left[\max \left\{\overline{\varphi}_j(F_j^{-1}(X'_L(n,m))),\overline{\varphi}_j(F_j^{-1}(W_{2,n}))\right\}\right]$. The corollary follows by observing that\\ $\max \left\{\overline{\varphi}_j(F_j^{-1}(X'_L(n,m))),\overline{\varphi}_j(F_j^{-1}(W_{2,n}))\right\} = \overline{\varphi}_j(\max\{X'_L(n,m),W_{2,n}\}) = \overline{\varphi}_j(X_L(n,m))$. 
\end{proof}

\begin{proof}[Proof of Corollary~\ref{cor:single}]
The proof is nearly-identical to that of Corollary~\ref{cor:reducesingle} and omitted (the only difference is that we save a ``$+1$'' in the regular case because our current benchmark no longer has the ``$^+$'' on the virtual values).
\end{proof}

\subsection{Big $n$ proofs}

\begin{proof}[Proof of Proposition~\ref{prop:stepone}]
Consider drawing quantiles $q_1,\ldots, q_n$, and also $\{q'_{i,{k}}\}_{i \in [\ell-1], {k} \in [m-1]}$. 
Couple draws from $\vec{v}$ from $D^n$ and $\vec{w}$ from $D_j^{n+(\ell-1)(m-1)}$ as follows:
\begin{itemize}
\item Set $v_{ij} = w_{ij} = F_j^{-1}(q_i)$ for all $i$. Relabel the indices so that $v_{(1)j} \geq v_{(2)j} \geq \ldots \geq v_{(n)j}$,  $w_{(1)j} \geq w_{(2)j} \geq \ldots \geq w_{(n)j}$, and $q_{(1)} \geq q_{(2)} \geq \ldots \geq q_{(n)}$.
\item For $i \in [\ell-1]$, set $v_{(i)k} = F_k^{-1}(q'_{i,k})$ for all $k \neq j$. Set $v_{(i)m} = F_m^{-1}(q'_{i,j})$ (unless $j = m$, in which case $v_{(i)m}$ is already set, and there is no $q'_{i,m}$). 
\item For $i \in [\ell-1]$, set $w_{(i)k} = F_j^{-1}(q'_{i,k})$ {for all $k \neq j$. Set $w_{(i)m} = F_j^{-1}(q'_{i,j})$ (unless $j = m$, in which case $w_{(i)m}$ is already set, and there is no $q'_{i,m}$).} As all $w$s are drawn from $D_j$, interpret all $n+(\ell-1)(m-1)$ such draws as values of a single bidder for item $j$. Let $w_{(\ell)}$ denote the $\ell^{th}$ largest of these draws.
\item Observe that $w_{(\ell)} \geq w_{(\ell)j} = v_{(\ell)j}$. 
\end{itemize}

Now, we consider the random variables inside the left-hand and right-hand expectations. Let now $i^*$ denote the minimum $i$ such that $\vec{v}_{(i)} \notin R_j$. Observe that when $i^* < \ell$, this is also the minimum $i$ such that there exists a $k \neq j$ with $F_j(w_{{(i)}j}) < F_j(w_{{(i)}k})$.\footnote{If $D_j$ has no point masses, we could instead just write $w_{{(i)}j}< w_{{(i)}k}$, but we write it like this to be careful in the case of point masses.} Therefore, $i^* \geq \ell$ only if no $i,k$ exists for which $F_j(w_{{(i)}j}) < F_j(w_{{(i)}k})$. So first, consider the possibility that $i^* \geq \ell$. In this case, the contribution to the benchmark is upper bounded by $\max\{\overline{\varphi}_j(v_{(1)j}), v_{(\ell)j}\}$. But we have that $\overline{\varphi}_j(w_{(1)}) \geq \overline{\varphi}_j(v_{(1)j})$, and also $w_{(\ell)} \geq v_{(\ell){j}}$.
Therefore, we can conclude that:

$$\mathbb{E}_{\vec{v} \leftarrow D^n} \left[\left(\max_{i \in [n]}\left\{\overline{\varphi}_j(v_{ij})^+ \cdot \mathbb{I}(\vec{v}_i \in R_j) + v_{ij} \cdot \mathbb{I}(\vec{v}_i \notin R_j)\right\} \right) \cdot \mathbb{I}(i^* \geq \ell)\right]$$
$$ \leq \mathbb{E}_{\vec{w} \leftarrow D_j^{n+(m-1)(\ell-1)}} \left[\left(\max \left\{\overline{\varphi}_j(w_{(1)}), w_{(\ell)}\right\}\right) \cdot \mathbb{I}(i^* \geq \ell)\right].$$

Now, consider the case that $i^* < \ell$. In this case, the contribution to the benchmark is exactly $\max\{\overline{\varphi}_j(v_{(1)j}),v_{(i^*)j}\}$. We now wish to argue that, conditioned on $i^*$, $v_{(i^*)j}$ and $v_{(1)j}$, the expected contribution to the right-hand side exceeds this. Indeed, observe that there exists at least one $k$ for which $q'_{i^*,k} > q_{(i^*)}$. Conditioned on $q_{(i^*)}$ (and $v_{(1)j}$, which has no effect), $q'_{i^*,k}$ is simply a uniformly random draw from $[q_{(i^*)},1]$. That is, $w_{(i^*)k'}$, (where $k'=k$ if $k \neq j$ and $k'=m$ if $k = j$) is simply a draw from $D_j$, conditioned on exceeding $w_{(i^*){j}}$. Therefore, $\mathbb{E}[\overline{\varphi}_j(w_{(i^*)k'})| w_{(i^*){j}} = x, w_{(i^*)k'} > w_{(i^*){j}}] \geq x$ (by Fact~\ref{fact:one}). We can now apply Fact~\ref{fact:basic} to conclude that $\mathbb{E}[\max\{\overline{\varphi}_j(v_{(1)j}), v_{(i^*)j}\} \cdot \mathbb{I}(i^* < \ell)] \leq \mathbb{E}[\max\{\overline{\varphi}_j(v_{(1)j}), \max_k \{\overline{\varphi}_j(w_{(i^*)k})\}\} \cdot \mathbb{I}(i^* < \ell)]] \leq \mathbb{E}[\overline{\varphi}_j(w_{(1)})\cdot \mathbb{I}(i^* < \ell)]]$. The last inequality follows simply because ${w_{(1)}} \geq v_{(1)j}$ and also ${w_{(1)} \geq \max_k \{w_{(i^*)k}\}}$.

$$\mathbb{E}_{\vec{v} \leftarrow D^n} \left[\left(\max_{i \in [n]}\left\{\overline{\varphi}_j(v_{ij})^+ \cdot \mathbb{I}(\vec{v}_i \in R_j) + v_{ij} \cdot \mathbb{I}(\vec{v}_i \notin R_j)\right\} \right) \cdot \mathbb{I}(i^* < \ell)\right]$$
$$ \leq \mathbb{E}_{\vec{w} \leftarrow D_j^{n+(m-1)(\ell-1)}} \left[\overline{\varphi}_j(w_{(1)}) \cdot \mathbb{I}(i^* < \ell)\right].$$

Summing up the two left-hand sides yields item $j$'s contribution to the EFFTW benchmark. Summing up the two right-hand sides lower bounds the desired right-hand side.

\end{proof}

\begin{proof}[Proof of Lemma~\ref{lem:w}]
Again couple draws so that $w_i = F_j^{-1}(X_{i,n'})$. Then $\overline{\varphi}_j(X_{(1),n'}) = \overline{\varphi}_j(w_{(1)})$. Additionally, $\mathbb{E}[\overline{\varphi}_j(W_{\ell,n'})|X_{(1),n'} = x, X_{(\ell,n')} = y] \leq F_j^{-1}(y) = w_{(\ell)}$ by Fact~\ref{fact:one}. So the hypotheses of Fact~\ref{fact:basic} are satisfied, and we get that

$$\mathbb{E}_{\vec{w} \leftarrow D^{n'}} \left[\max \left\{\overline{\varphi}_j(w_{(1)}), w_{(\ell)}\right\}\right] 
\leq \mathbb{E}\left[\max\{\overline{\varphi}_j(F_j^{-1}(X_{(1),n'})),\overline{\varphi}_j(F_j^{-1}(W_{\ell,n'}))\}\right] $$ $$=\mathbb{E}\left[\overline{\varphi}_j(F_j^{-1}(\max\{X_{(1),n'},W_{\ell,n'}))\right] =\mathbb{E}\left[\overline{\varphi}_j(F_j^{-1}(X_B(n',\ell)))\right].$$

\end{proof}
\bibliographystyle{alpha}
\bibliography{MasterBib}

\end{document}